\documentclass[acmtocl,acmnow]{acmtrans2m}

\usepackage{latexsym}
\usepackage{amssymb}

\newtheorem{theorem}{Theorem}[section]

\newtheorem{corollary}[theorem]{Corollary}
\newtheorem{proposition}[theorem]{Proposition}
\newtheorem{lemma}[theorem]{Lemma}
\newdef{definition}[theorem]{Definition}
\newdef{remark}[theorem]{Remark}




\newtheorem{claim}{Claim}




\newcommand{\ignore}[1] { }

\newcommand{\mm}[1]{\mbox{\boldmath \tiny$#1$}}

\newcommand{\A}{\mbox{$\mathcal{A}$}}

\newcommand{\cM}{\mbox{$\mathcal{M}$}}

\newcommand{\cR}{\mbox{$\mathcal{R}$}}
\newcommand{\cS}{\mbox{$\mathcal{S}$}}

\newcommand{\sA}{{\mm \A}}

\newcommand{\ttU}{\mbox{$\texttt{U}$}}
\newcommand{\ttX}{\mbox{$\texttt{X}$}}

\newcommand{\ttLift}{\mbox{$\texttt{lift-pebble}$}}
\newcommand{\ttPlace}{\mbox{$\texttt{place-pebble}$}}
\newcommand{\ttLeft}{\mbox{$\texttt{left}$}}
\newcommand{\ttRight}{\mbox{$\texttt{right}$}}
\newcommand{\ttStay}{\mbox{$\texttt{stay}$}}
\newcommand{\ttAct}{\mbox{$\texttt{act}$}}

\newcommand{\sttLift}{\mbox{$\texttt{\scriptsize lift}$}}
\newcommand{\sttPlace}{\mbox{$\texttt{\scriptsize place}$}}
\newcommand{\sttLeft}{\mbox{$\texttt{\scriptsize left}$}}
\newcommand{\sttRight}{\mbox{$\texttt{\scriptsize right}$}}
\newcommand{\sttStay}{\mbox{$\texttt{\scriptsize stay}$}}

\newcommand{\sfmin}{\mbox{$\textsf{min}$}}
\newcommand{\sfmax}{\mbox{$\textsf{max}$}}

\newcommand{\sfTrue}{\mbox{$\textsf{True}$}}
\newcommand{\sfFalse}{\mbox{$\textsf{False}$}}

\newcommand{\sffqr}{\mbox{$\textsf{fqr}$}}

\newcommand{\fD}{\mbox{$\mathfrak{D}$}}

\newcommand{\sfImage}{\mbox{$\textsf{Image}$}}

\newcommand{\RA}{\mbox{$\textrm{RA}$}}

\newcommand{\MSO}{\mbox{$\textrm{MSO}$}}
\newcommand{\FO}{\mbox{$\textrm{FO}$}}
\newcommand{\LTL}{\mbox{$\textrm{LTL}$}}


\markboth{Tony Tan}{Graph Reachability and Pebble Automata over Infinite Alphabets}

\title{Graph Reachability and Pebble Automata over Infinite Alphabets}
            
\author{Tony Tan\\
University of Edinburgh}
            
\begin{abstract} 
Let $\fD$ denote an infinite alphabet -- 
a set that consists of infinitely many symbols.
A word $w = a_0b_0a_1b_1\cdots a_nb_n$ of even length over $\fD$ 
can be viewed as a directed graph $G_w$
whose vertices are the symbols that appear in $w$,
and the edges are $(a_0,b_0),(a_1,b_1),\ldots,(a_n,b_n)$.
For a positive integer $m$, define a language $\cR_m$ such that 
a word $w=a_0b_0\cdots a_nb_n \in \cR_m$
if and only if
there is a path in the graph $G_w$ of length $\leq m$ from the vertex $a_0$ to the vertex $b_n$.

We establish the following hierarchy theorem for pebble automata over infinite alphabet.
For every positive integer $k$,
(i) there exists a $k$-pebble automaton that accepts the language $\cR_{2^k-1}$;
(ii) there is no $k$-pebble automaton that accepts the language $\cR_{2^{k+1} - 2}$.
Based on this result, we establish
a number of previously unknown relations among 
some classes of languages over infinite alphabets.
\end{abstract}
            
\category{F.1.1}{Models of Computation}{Pebble automata}
\category{F.4.1}{Mathematical Logic}{Computational logic}
         
\terms{Languages}         
            
\keywords{Pebble automata, Graph reachability, Infinite alphabets}

\begin{document}
            
\begin{bottomstuff} 
The extended abstract of this article has been published in
the proceedings of LICS 2009.
This work was done while the author was a PhD student in
Department of Computer Science, Technion -- Israel Institute of Technology.
\end{bottomstuff}
            
\maketitle

\section{Introduction}
\label{s: intro}

Logic and automata for words over finite alphabets
are relatively well understood and
recently there is a broad research activity on logic and automata
for words and trees over infinite alphabets.
Partly, the study of infinite alphabets is motivated by
the need for formal verification and
synthesis of infinite-state systems and partly,
by the search for automated reasoning techniques for XML.
There has been a significant progress in this area,
see~\cite{BjorklundS07,BojanczykDMSS11,DemriL09,KaminskiF94,NevenSV04,Segoufin06}
and this paper aims to contribute to the progress.

Roughly speaking, there are two approaches to studying languages over infinite alphabets:
logic and automata.
Below is a brief summary on both approaches.
For a more comprehensive survey,
we refer the reader to~\cite{Segoufin06}.
The study of languages over infinite alphabets
starts with the introduction of {finite-memory} automata (FMA) in~\cite{KaminskiF94},
also known as {\em register automata} (RA),
that is, automata with a finite number of registers.
From here on, we write $\RA_n$ to denote RA with $n$ registers.

The study of RA was continued and extended in~\cite{NevenSV04},
in which {\em pebble automata} (PA) were also introduced.
Each of these models has its own advantages and disadvantages.
Languages accepted by RA are closed under standard language operations:
intersection, union, concatenation, and Kleene star.
In addition, from the computational point of view,
RA are a much easier model to handle.
Their emptiness problem is decidable,
whereas the same problem for PA is not.
However, the PA languages possess a very nice logical property:
closure under {\em all} boolean operations.

Recently there is a more general model of RA introduced in~\cite{BojanczykKL11},
that builds on the idea of nominal sets.
In this model the structure for the symbols is richer.
In addition to equality test, it allows for total order and partial order tests
among the symbols.

In~\cite{Bouyer02} data words are introduced,
which are an extension of words over infinite alphabet.
Data words are words in which each position
carries both a label from a finite alphabet,
and a data value from an infinite alphabet.
The paper~\cite{BojanczykDMSS11} studies the logic for data words, and
introduced the so-called {\em data automata}.
It was shown that data automata define the logic $\exists\MSO^2(\sim,<,+1)$,
the fragment of existential monadic second order logic in which
the first order part is restricted to two variables only,
with the signatures: the data equality $\sim$, the order $<$ and the successor $+1$.
An important feature of data automata is
that their emptiness problem is decidable, even for infinite words,
but is at least as hard as reachability for Petri nets.
It was also shown that the satisfiability problem for
the three-variable first order logic is undecidable.

Another logical approach is via the so called
{\em linear temporal logic} with freeze quantifier,
introduced in~\cite{DemriLN05} and later also studied in~\cite{DemriL09}.
Intuitively, these are LTL formula equipped
with a finite number of registers to store the data values.
We denote by $\LTL_n^{\downarrow}[\texttt{X},\texttt{U}]$,
the LTL with freeze quantifier,
where $n$ denotes the number of registers and
the only temporal operators allowed are 
the neXt operator $\ttX$ and the Until operator $\ttU$.
It was shown that alternating $\RA_n$
accept all LTL$_n^{\downarrow}[\texttt{X},\texttt{U}]$ languages
and the emptiness problem for alternating $\RA_1$ is decidable.
However, the complexity is non primitive recursive.
Hence, the satisfiability problem for
LTL$_1^{\downarrow}(\texttt{X},\texttt{U})$
is decidable as well.
Adding one more register or past time operators, such as $\texttt{X}^{-1}$ or $\texttt{U}^{-1}$,
to LTL$_1^{\downarrow}(\texttt{X},\texttt{U})$
makes the satisfiability problem undecidable.
In~\cite{Lazic11} a weaker version of alternating $\RA_1$,
called safety alternating $\RA_1$, is considered,
and the emptiness problem is shown to be EXPSPACE-complete.

In this paper we continue the study of pebble automata (PA)
for strings over infinite alphabets 
introduced in~\cite{NevenSV04}.
The original PA for strings over finite alphabet was first introduced and studied in~\cite{harel}.
Essentially PA are finite state automata equipped with a finite number of pebbles, 
The pebbles are placed on or lifted from the input word
in the stack discipline~-- first in last out~-- and
are intended to mark the positions in the input word.
One pebble can only mark one position and
the most recently placed pebble serves as the head of the automaton.
The automaton moves from one state to another
depending on the equality tests among data values in the
positions currently marked by the pebbles, as well as 
the equality tests among the positions of the pebbles.

As mentioned earlier, PA languages possess a very nice logical
property: closure under {\em all} boolean operations.
Another desirable property of PA languages is, as shown in~\cite{NevenSV04},
that nondeterminism and two-way-ness do not increase the
expressive power of PA~\cite[Theorem~4.6]{NevenSV04}.
Moreover, the class of PA languages lies strictly in between
$\FO(\sim,<,+1)$ and $\MSO(\sim,<,+1)$~\cite[Theorems~4.1 and~4.2]{NevenSV04}.

Moreover, looking at the stack discipline imposed on the placement of the pebbles,
one can rightly view PA as a natural extension of $\FO(\sim,<,+1)$.
To simulate a first-order sentence of quantifier rank $k$,
a pebble automaton with $k$ pebbles suffices:
one pebble for each quantifier depth.
(See Proposition~\ref{p: FO rank k}.)

In this paper we study PA as a model of computation for the directed graph reachability problem.
To this end, we view a word of even length $w=a_0b_0 a_1b_1 \cdots a_nb_n$ 
over an infinite alphabet as a directed graph $G_w=(V_w,E_w)$ 
with the symbols that appear in
$a_0b_0a_1b_1\cdots a_nb_n$ as the vertices in $V_w$ and
$(a_0,b_0),\ldots,(a_n,b_n)$ as the edges in $E_w$. 
We say that the word $w$ induces the graph $G_w$.

We prove that for any positive integer $k$,
$k$ pebbles are sufficient for recognizing the existence of a path
of length $2^k-1$ from the vertex $a_0$ to the vertex $b_n$, but
are not sufficient for recognizing the existence of a path of
length $2^{k+1} - 2$ from the vertex $a_0$ to the vertex $b_n$.
Based on this result, we establish the following relations
among the classes of languages over infinite alphabets which were
previously unknown.
\begin{enumerate}
\item
A strict hierarchy of the PA languages based on the number of pebbles.
\item
The separation of monadic second order logic from the PA languages.
\item
The separation of one-way deterministic RA languages from PA languages.
\end{enumerate}
Some of these results settle questions left open in~\cite{NevenSV04,Segoufin06}.

Although, in general, the emptiness problem for PA is undecidable,
we believe that our study may contribute to the technical aspect
of reasoning on classes of languages with decidable properties.
For example, in Section~\ref{s: weak PA} a similar technique is used
to obtain separation result for LTL$^{\downarrow}_1[\ttX,\ttU]$ languages,
a class of languages with decidable satisfiability problem.

\paragraph*{Related work}
A weaker version of PA, called {\em top-view weak} PA 
was introduced and studied in~\cite{Tan10}, where
it was also shown that the emptiness problem is decidable.
The results in this paper are not implied from that paper,
as here the main concern is separation results.
In fact, some of the separation results here
also hold for the model in~\cite{Tan10}.

There is also an analogy between our result with the classical
first-order quantifier lower bounds for directed graph $(s,t)$-reachability
which states the following:
There is a first order sentence of quantifier rank $k$
to express the existence of a path of length $\leq m$ from the source node $s$ to the target node $t$ 
if and only if $m \leq 2^k$. See, for example,~\cite{Turan84}.

As far as we can see, our result is actually a tighter version of the classical result
for first-order logic.
It is tighter because PA is shown to be stronger than
first-order logic (Proposition~\ref{p: FO rank k}).
In particular pebble automata do have states, thus, enjoy the usual benefits
associated with automata, like counting the number of edges, or the number of neighbours
up to $\leq m$, $\geq m$, or mod $m$, for an arbitrary but fixed positive integer $m$,
without increasing the number of pebbles.

Other related results are those established in~\cite{AjtaiF90,FaginSV95,Schwentick96}. 
To the best of our knowledge, those
results have no connection with the result in this paper.
In~\cite{AjtaiF90} it is established that $(s,t)$-reachability in
directed graph is not in monadic NP\footnote{Monadic NP is a complexity theoretic name
for existential monadic second order logic.}, 
while in~\cite{FaginSV95,Schwentick96} it is established that 
undirected graph connectivity is not in monadic NP. However, no lower bound on first-order
quantifier rank is established. 

\paragraph*{Organization}
This paper is organized as follows. 
In Section~\ref{s: model} we review 
the monadic second-order logic $\MSO(\sim,<,+1)$ 
and pebble automata (PA) for words over infinite alphabet. 
Section~\ref{s: graph} is the core of the
paper in which we present our main results. In
Section~\ref{s: weak PA} we discuss how to adjust our
results and proofs presented in Section~\ref{s: graph} to a weaker
version of PA, called weak PA,
 whose relation to
 the logic LTL$^{\downarrow}_1(\ttX,\ttU)$
 is presented in Section~\ref{s: ltl}.

\section{Models of computations}
\label{s: model}

In Section~\ref{ss: pa} we recall the definition of alternating pebble
automata from~\cite{NevenSV04}, and in Section~\ref{ss: logic} 
a logic for languages over infinite alphabets.

We shall use the following notation: $\fD$ is a fixed infinite
alphabet not containing the left-end marker $\triangleleft$ or the
right-end marker  $\triangleright$. The input word to an
automaton is of the form $\triangleleft w \triangleright$, where
$w \in \fD^\ast$. Symbols of $\fD$ are denoted by lower case
letters~$a$,~$b$,~$c$, etc., possibly indexed, and 
words over $\fD$ by lower case letters~$u$,~$v$,~$w$, etc.,
possibly indexed. 

\subsection{Pebble automata}
\label{ss: pa}

\begin{definition}
\label{d: pa}
(See~\cite[Definition~2.3]{NevenSV04})
A {\em two-way alternating} $k$-{\em pebble automaton}, (in short
$k$-PA) is a system $\A = \langle Q, q_0, F, \mu, U\rangle$ whose
components are defined as follows.
\begin{enumerate}
\item 
$Q$, $q_0 \in Q$ and $F\subseteq Q$ are a finite set of 
{\em states}, the {\em initial state}, and the set of {\em final
states}, respectively; 
\item 
$U \subseteq Q - F$ is the set of
{\em universal} states; and 
\item 
$\mu$ is a finite set of
transitions of the form $\alpha \rightarrow \beta$ such that
\begin{itemize}
\item 
$\alpha$ is of the form $(i,P,V,q)$,
where $i \in \{1,\ldots,k\}$, $P,V \subseteq\{i+1,\ldots,k\}$,
$q \in Q$ 
and 
\item $\beta$ is of the form $(q,\ttAct)$,
where $q \in Q$ and
\[
\ttAct\in \{ \ttLeft,\ttRight,\ttStay,\ttPlace,\ttLift \}.
\]
\end{itemize}
The intuitive meaning of $P$ and $V$ in $(i,P,V,q)$
is that $P$ denotes the set of pebbles 
that occupy the same position as pebble~$i$,
while $V$ the set of pebbles
that read the same symbol as pebble~$i$.
A more precise explanation can be found below. 
\end{enumerate}
\end{definition}

Given a word $w = a_1\cdots a_n \in \fD^\ast$, 
a {\em configuration of $\A$ on $\triangleleft w \triangleright$} is a
triple $\gamma = [i,q,\theta]$, where $i \in \{1,\ldots,k\}$, $q \in Q$ and
$\theta : \{i,i+1,\ldots,k\} \rightarrow \{ 0,1,\ldots,n,n + 1 \}$. 
The function $\theta$ defines the position of the pebbles and
is called the {\em pebble assignment} of $\gamma$. 
The symbols in the positions $0$ and $n + 1$ are $\triangleleft$ and
$\triangleright$, respectively.
That is, we count the leftmost position in $w$ as position $1$.

The {\em initial configuration} of $\A$ on $w$ 
is $\gamma_0 = [k,q_0,\theta_0]$,
where $\theta_0(k) = 0$ is the {\em initial pebble assignment}. 
A configuration $[i,q,\theta]$ with $q \in F$ 
is called an {\em accepting configuration}.
A transition $(i,P,V,p) \rightarrow \beta$ 
{\em applies to a configuration $[j,q,\theta]$}, if
\begin{enumerate}
\item[$(1)$] 
$i = j$ and $p = q$, 
\item[$(2)$] 
$P = \{ l > i \mid \theta(l) = \theta(i) \}$, and
\item[$(3)$] 
$V = \{ l > i \mid a_{\theta(l)} = a_{\theta(i)} \}$.
\end{enumerate}
We define the transition relation $\vdash_{\sA}$ on $\triangleleft w \triangleright$ as follows:
$[i,q,\theta]\vdash_{\sA,w} [i^\prime,q^\prime,\theta^\prime]$, if
there is a transition $\alpha \rightarrow (p,\ttAct) \in \mu$ that
applies to $[i,q,\theta]$ such that $q^\prime = p$, 
for all $j > i$, $\theta^\prime(j)=\theta(j)$, and
\begin{itemize}
\item[-] 
if $\ttAct = \ttLeft$, then $i^\prime=i$ and
$\theta^\prime(i)=\theta(i)-1$, 
\item[-] 
if $\ttAct = \ttRight$,
then $i^\prime=i$ and $\theta^\prime(i)=\theta(i)+1$, 
\item[-] 
if $\ttAct = \ttStay$,
then $i^\prime=i$ and $\theta^\prime(i)=\theta(i)$, 
\item[-] 
if $\ttAct = \ttLift$, then $i^\prime=i+1$, 
\item[-] 
if $\ttAct = \ttPlace$, then $i^\prime=i-1$, $\theta^\prime(i-1) = 0$ and
$\theta^\prime(i)=\theta(i)$.
\end{itemize}
As usual, we denote the reflexive, transitive closure of $\vdash_{\sA,w}$ by
$\vdash^\ast_{\sA,w}$. When the automaton $\A$ and the word $w$ are clear from the
context, we shall omit the subscripts $\A$ and $w$. 
For $1\leq i \leq k$, an $i$-configuration is a configuration of the form
$[i,q,\theta]$, that is, when the head pebble is pebble~$i$.

\begin{remark}
\label{r: data PA}
Here we define PA as a model of computation for
languages over infinite alphabet. Another option is to define PA
as a model of computation for data words. A data word is a finite
sequence of $\Sigma\times\fD$, where $\Sigma$ is a finite alphabet
of labels. There is only a slight technical difference between the
two models. Every data word can be viewed as a word over infinite
alphabet in which every odd position contains a constant symbol.
In the context of our paper, we ignore the finite labels,
thus, Definition~\ref{d: pa} is more convenient.
\end{remark}

We now define how pebble automata accept words.
Let $\gamma = [i,q,\theta]$ be a configuration of a PA $\A$ on a word $w$.
We say that $\gamma$ {\em leads to acceptance},
if and only if either $q\in F$, or the following conditions hold.
\begin{itemize}
\item 
if $q \in U$, then for all configurations $\gamma'$ such that $\gamma\vdash\gamma'$,
$\gamma'$ leads to acceptance.
\item 
if $q \notin F\cup U$, then there is at least one
configuration $\gamma'$ such that $\gamma \vdash \gamma'$ and
$\gamma'$ leads to acceptance.
\end{itemize}
A word $w \in \fD^\ast$ is accepted by $\A$,
if the initial configuration $\gamma_0$ leads to acceptance.
The language $L(\A)$
consists of all data words accepted by $\A$.

The automaton $\A$ is {\em nondeterministic}, if the set
$U=\emptyset$, and it is {\em deterministic}, if 
for each configuration, there is exactly
one transition that applies. If $\ttAct \in
\{\ttRight, \ttLift,\ttPlace\}$ for all transitions, then the
automaton is {\em one-way}. It turns out that PA languages are
quite robust. Namely, alternation and two-wayness do not increase
the expressive power to one-way deterministic PA, 
as stated in Theorem~\ref{t: equivalence} below.

\begin{remark}
In~\cite{NevenSV04} the model defined above is called {\em strong} PA.
A weaker model in which the new pebble is placed at the position
of the head pebble, is referred to as {\em weak} PA.
Obviously for two-way PA, strong and weak PA are equivalent.
However, for one-way PA, strong PA is indeed stronger than weak PA.
We will postpone our discussion of weak PA until Section~\ref{s: weak PA}.
\end{remark}

\begin{theorem}
\label{t: equivalence}
For each $k \geq 1$, two-way alternating $k$-PA
and one-way deterministic $k$-PA have the same recognition power.
\end{theorem}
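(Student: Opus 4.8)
The plan is to prove the nontrivial inclusion, that every two-way alternating $k$-PA can be converted into an equivalent one-way deterministic $k$-PA using the \emph{same} number $k$ of pebbles; the reverse inclusion is immediate, since a one-way deterministic automaton is just a syntactic restriction of a two-way alternating one. The observation that makes the simulation possible is that, once the positions $\theta(i+1),\ldots,\theta(k)$ of the higher pebbles are frozen (they never move while pebble~$i$ is the head), the information pebble~$i$ can extract at a position $p$ is completely described by the pair $(P,V)$ with $P=\{l>i\mid\theta(l)=p\}$ and $V=\{l>i\mid a_{\theta(l)}=a_p\}$. These pairs range over the \emph{finite} set $2^{\{i+1,\ldots,k\}}\times 2^{\{i+1,\ldots,k\}}$. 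Hence, relative to a fixed placement of the higher pebbles, the motion of pebble~$i$ is exactly a run of a two-way alternating \emph{finite} automaton over this finite label alphabet, enriched with two special actions: a $\ttPlace$, which freezes pebble~$i$, drops pebble $i-1$ at position~$0$ and hands control down to level $i-1$; and a $\ttLift$, which returns control up to level $i+1$.

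First I would set up a downward-to-upward induction on the head level, from the innermost pebble $i=1$ to the head $i=k$. The invariant is a finite \emph{behaviour summary} for level~$i$: for every entry state $q$ (the state in which pebble~$i$ became head, either just placed at position~$0$ or regaining control after the pebble below lifted) and every finite label context, the summary records the set of states in which pebble~$i$ eventually performs a $\ttLift$, together with a bit indicating whether the level-$i$ subcomputation leads to acceptance outright. For $i=1$ there is no lower level to call, so the summary is simply the transition table of a two-way alternating finite automaton over the finite label alphabet, which collapses to a table computable in a single left-to-right deterministic pass by the classical Shepherdson crossing-sequence argument (for two-wayness) combined with a subset-style monotone fixpoint (for alternation).

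Then I would carry out the inductive step: a $\ttPlace$ at level~$i$ invokes a level-$(i-1)$ subcomputation whose input--output behaviour is, by the induction hypothesis, already captured by a finite summary; I would fold that summary into the transition structure of the level-$i$ two-way alternating finite automaton as precomputed "call-return" edges, and reapply the crossing-sequence/fixpoint collapse to obtain the level-$i$ summary. The central point is that each summary can be produced by a one-way deterministic $k$-PA \emph{without extra pebbles}: pebble~$i$ is swept left to right exactly once while the finite control accumulates its transition table, querying the level-$(i-1)$ summary whenever a call would occur (realised by dropping pebble $i-1$ and sweeping it once). At the top level $i=k$ the accumulated table reports directly whether the initial configuration $[k,q_0,\theta_0]$ leads to acceptance, so the constructed automaton is one-way, deterministic, and uses pebbles $1,\ldots,k$ exactly as the original does.

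The hard part will be making the induction invariant precise enough to survive the interaction between the motion of pebble~$i$ and the nested subcomputation triggered by a $\ttPlace$: when pebble $i-1$ is dropped it starts at position~$0$ and may traverse the entire word, so its summarized effect must be expressed as a function of the finite label context that pebble~$i$ presents at the moment of placement, and the acceptance condition must be threaded correctly through the universal and existential branching. Verifying that the monotone fixpoint defining each summary is finite and one-way-deterministically computable, and that the pebble budget is never exceeded, is where the real work lies; the two classical collapses (two-way to one-way, and alternating to deterministic over a finite alphabet) are then applied as black boxes, once per level.
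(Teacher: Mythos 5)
Your proposal is correct and takes essentially the same route as the paper: the paper's (sketched) proof rests on exactly your key observation that, with pebbles $i+1,\ldots,k$ frozen, the transitions of pebble~$i$ can be read as transitions over the finite alphabet $2^{\{i+1,\ldots,k\}}\times 2^{\{i+1,\ldots,k\}}$ of pairs $(P,V)$, and then adapts the classical Ladner--Lipton--Stockmeyer collapse of two-way alternating to one-way deterministic finite automata level by level, from pebble~$1$ upward, with place/lift acting as call/return and the lower level's behaviour recomputed by a deterministic subsweep at each call (the paper realises your ``summary'' via the LLS response/closed-term proof system, which handles two-wayness and alternation in one mechanism rather than as two separate black boxes, but this is the same argument).
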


The proof of Theorem~\ref{t: equivalence} is a straightforward
adaption of the classical proof of the equivalence
between the expressive power of alternating two-way finite state automata
and deterministic one-way finite state automata~\cite{LadnerLS84}.
For this reason, we omit the proofs.

The main idea is that when pebble~$i$ is the head pebble,
due to the stack discipline imposed on placing the pebbles,
all the other pebbles (pebbles~$i+1,\ldots,k$) are fixed on their positions.
Hence the transitions of pebble~$i$,
which are of the form $(i,P,V,q) \to (p,\ttAct)$,
can be viewed as transitions over the finite alphabet 
$(P,V)\in 2^{\{i+1,\ldots,k\}}\times 2^{\{i+1,\ldots,k\}}$.
Thus, the idea in~\cite{LadnerLS84} can be adapted to PA in a straightforward manner.
The details are available as a technical report in~\cite{Tan09Alternating}.
In view of this equivalence, we will always assume
that the pebble automata under consideration are deterministic and one-way.

Next, we define the hierarchy of languages accepted by PA. For $k
\geq 1$, $\textrm{PA}_k$ is the set of all languages accepted by
$k$-PA, and $\textrm{PA}$ is the set of all languages accepted by pebble automata. 
That is,
\[
\textrm{PA} = \bigcup_{k \geq 1} \textrm{PA}_k .
\]

\subsection{Logic}
\label{ss: logic}

Formally, a word $w=a_1\cdots a_n$ is represented
by the logical structure with domain $\{1,\ldots,n\}$;
the natural ordering $<$ on the domain with its induced successor $+1$;
and the equivalence relation $\sim$ on the domain $\{1,\ldots,n\}$,
where $i \sim j$ whenever $a_i = a_j$.

The atomic formulas in this logic are of the form
$x < y$, $y=x+1$, $x\sim y$.
The first-order logic $\FO(\sim,<,+1)$ is obtained by closing
the atomic formulas under the propositional connectives and
first-order quantification over $\{1,\ldots,n\}$.
The second-order logic $\MSO(\sim,<,+1)$ is obtained by adding quantification over unary
predicates on $\{1,\ldots,n\}$. A sentence $\varphi$ defines the
set of words
$$
L(\varphi) = \{w \mid w \models \varphi\}.
$$
If $L=L(\varphi)$ for some sentence $\varphi$, then we say that
the sentence $\varphi$ {\em expresses} the language $L$.

We use the same notations $\FO(\sim,<,+1)$ and $\MSO(\sim,<,+1)$ to denote the
languages expressible by sentences in $\FO(\sim,<,+1)$ and $\MSO(\sim,<,+1)$,
respectively. That is,
\[
\FO(\sim,<,+1) = \{L(\varphi) \mid \varphi \ \mbox{is an} \ \FO(\sim,<,+1) \ \mbox{sentence}\}
\]
and
\[
\MSO(\sim,<,+1) = \{L(\varphi) \mid \varphi \ \mbox{is an} \ \MSO(\sim,<,+1) \ \mbox{sentence}\}.
\]

\begin{proposition}
\label{p: FO rank k} {\em (See also~\cite[Theorem~4.1]{NevenSV04})}
If $\varphi \in \FO(\sim,<,+1)$ is a sentence with quantifier rank $k$, 
then $L(\varphi)\in \textrm{PA}_k$.
\end{proposition}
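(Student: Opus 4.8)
The plan is to prove the statement by building a two-way alternating $k$-PA for $L(\varphi)$ and then invoking Theorem~\ref{t: equivalence} to convert it into an equivalent one-way deterministic $k$-PA, which places $L(\varphi)$ in $\textrm{PA}_k$. I first put $\varphi$ in negation normal form, so that negations occur only in front of atoms; this lets me realize $\wedge$ by universal branching and $\vee$ by existential branching, and it means I only ever have to test the atoms $x<y$, $y=x+1$, $x\sim y$ and their negations. I also rename variables so that the variable bound by a quantifier whose scope still has quantifier rank $j$ is $x_j$; recall that in this model pebble $k$ is placed first and pebble $1$ last, and that the most recently placed pebble is the head.

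The core is the following lemma, proved by induction on formula structure. For $1\le j\le k$ and every subformula $\psi$ of quantifier rank $\le j$ whose free variables lie in $\{x_{j+1},\dots,x_k\}$, there is a two-way alternating fragment using only pebbles $1,\dots,j$ that, when entered with pebbles $j+1,\dots,k$ frozen at the positions interpreting the free variables and pebble $j$ just placed at position $0$ (so the head is pebble $j$), accepts iff the assignment satisfies $\psi$. Applying this with $j=k$ to the sentence $\varphi$ (which has no free variable) yields the desired $k$-PA, since the entry configuration is precisely the initial configuration $\gamma_0=[k,q_0,\theta_0]$.

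The two interesting cases are the quantifier step and the quantifier-free base. For $\exists x_j\,\psi$ with $j\ge 2$, I scan pebble $j$ rightward and use existential branching to guess its stopping position in $\{1,\dots,n\}$, thereby freezing $x_j$; I then place pebble $j-1$ via the \ttPlace\ action (resetting the head to position $0$) and run the fragment for $\psi$, which has rank $\le j-1$ and free variables in $\{x_j,\dots,x_k\}$, by the induction hypothesis; $\forall x_j$ is dual, with universal branching. Conjunction and disjunction are handled by universal and existential branching into the two sub-fragments, re-entering each after moving pebble $j$ back to position $0$ against $\triangleleft$. For a quantifier-free $\psi$ (the base case, where $j\ge 1$ so pebble $j$ is a free scratch pebble), I make a single left-to-right scan of pebble $j$ and record in the finite control the entire atomic type of the frozen tuple $(x_{j+1},\dots,x_k)$: their relative order and adjacencies come from the order in which pebble $j$ meets them (detected through the set $P$), and their $\sim$-classes come from reading the set $V$ whenever pebble $j$ coincides with one of them. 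From this type the Boolean formula $\psi$ is evaluated in the finite control.

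The main obstacle is a pebble-budget collision at the innermost level: a naive reading of the construction wants one pebble per quantifier \emph{plus} a separate scratch pebble to test the atoms of the matrix, which would cost $k+1$ pebbles. I avoid this by letting the innermost quantified pebble double as the scratch probe. Concretely, the case $\exists x_1\,\psi$ with quantifier-free $\psi$ over $\{x_1,\dots,x_k\}$ is carried out in a single scan of pebble $1$: I existentially choose a stopping position for $x_1$ and, in the \emph{same} scan, compute the full atomic type of $(x_1,\dots,x_k)$ --- the $\sim$-relations of $x_1$ to every frozen pebble are read from $V$ at the chosen stop, independently of where those pebbles lie; the order relations from whether each frozen pebble is met before or after the stop; and the relations among $x_2,\dots,x_k$ from the rest of the scan --- and then evaluate $\psi$ from this type; $\forall x_1$ is dual. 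This single-scan computation of the atomic type, stored in the finite control, is exactly where the automaton's states let us match first-order quantifier rank $k$ with only $k$ pebbles, and it is the one step whose correctness I would verify in detail (that every required relation is recoverable from one scan, and that position-equalities between distinct variables are tracked). Everything else is a routine induction.
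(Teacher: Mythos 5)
Your proposal is correct and takes essentially the same approach as the paper: an induction that spends one pebble per quantifier depth, tests the atomic formulas through the $P$/$V$ components of transitions while tracking order and type information in the finite control, and leans on Theorem~\ref{t: equivalence} to land in one-way deterministic $\textrm{PA}_k$. The only differences are cosmetic --- the paper invokes closure of $\textrm{PA}_k$ under Boolean operations to reduce to a leading quantifier $Qx_k\,\psi(x_k)$ and iterates the pebble deterministically where you use negation normal form plus explicit alternating branching, and your careful single-scan computation of the atomic type (merging the innermost quantifier's pebble with the scratch probe to stay within $k$ pebbles) makes explicit the detail the paper disposes of with the remark that the inner automaton ``can test the atomic formula $x=y$ and $x\sim y$ while remembering in its states the order of the pebbles.''
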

\begin{proof} (Sketch)
First, it is straightforward that
languages accepted by two-way alternating $k$-PA are closed under Boolean operations.
By Theorem~\ref{t: equivalence},
two-way alternating and one-way deterministic $k$-PA are equivalent. 
Thus, the class $\textrm{PA}_k$ is closed under Boolean operations.
Therefore, it is sufficient to prove Proposition~\ref{p: FO rank k}
when the formula $\varphi$ is of the form $Q x_k \psi(x_k)$, 
where $Q \in \{\forall,\exists\}$ and
$\psi(x_k)$ is a formula of quantifier rank $k-1$.

The proof is by straightforward induction on $k$. A $k$-PA $\A$
iterates pebble~$k$ through all possible positions in the input
word $w$. On each iteration, the automaton $\A$ recursively calls
a $(k-1)$-PA $\A'$ that accepts the language $L(\psi(x_k))$,
treating the position of pebble~$k$ as the assignment value for $x_k$.

The transition in the PA $\A'$ can test the atomic formula $x=y$ and $x\sim y$;
while at the same time remembering in its states
the order of the pebbles.
The word $w$ is accepted by $\A$, if the following holds.
\begin{itemize}
\item
If $Q$ is $\forall$, then
$\A$ accepts $w$ if and only if
$\A'$ accepts on all iterations.
\item
If $Q$ is $\exists$, then
$\A$ accepts $w$ if and only if $\A'$ accepts on at least one
iteration.
\end{itemize}
This completes the sketch of our proof of Proposition~\ref{p: FO rank k}.
\end{proof}

We end this section with Theorem~\ref{t: PA <= MSO} below which
states that a language accepted by pebble automaton can be
expressed by an $\MSO(\sim,<,+1)$ sentence.

\begin{theorem}
\label{t: PA <= MSO} {\em (\cite[Theorem 4.2]{NevenSV04})}
For every PA $\A$, there exists an $\MSO(\sim,<,+1)$ sentence $\varphi_{\sA}$
such that $L(\A) = L(\varphi_{\sA})$.
\end{theorem}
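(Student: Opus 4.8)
The plan is to reduce to the deterministic one-way model and then encode the unique run recursively, level by level, exploiting the stack discipline. By Theorem~\ref{t: equivalence} I may assume that $\A = \langle Q, q_0, F, \mu, U\rangle$ is a one-way deterministic $k$-PA, so that on every input $\triangleleft w \triangleright$ there is a single run and $\A$ accepts iff this run reaches a state in $F$. The key structural fact, already noted in the discussion following Theorem~\ref{t: equivalence}, is that while pebble~$i$ is the head pebble the positions of pebbles $i+1,\ldots,k$ stay frozen, and relative to those frozen positions the moves of pebble~$i$ depend only on the finite profile $(P,V)\in 2^{\{i+1,\ldots,k\}}\times 2^{\{i+1,\ldots,k\}}$ at its current location. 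I will turn this observation into an induction on the pebble level.

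The central objects are, for each level $i$, a finite family of $\MSO(\sim,<,+1)$ formulas $\beta_i^{q,o}(x_{i+1},\ldots,x_k)$, whose free position variables $x_{i+1},\ldots,x_k$ stand for the frozen positions of the higher pebbles, indexed by an entering state $q\in Q$ and an \emph{outcome} $o$. An outcome is either $\mathsf{accept}$, meaning the subcomputation that begins when pebble~$i$ is placed at position $0$ in state $q$ reaches a final state before pebble~$i$ is ever lifted, or a pair $(\mathsf{exit},q')$, meaning that subcomputation lifts pebble~$i$ and returns control to level $i+1$ in state $q'$. Because $\A$ is deterministic, at most one outcome holds; a nonterminating subcomputation satisfies no $\beta_i^{q,o}$ and is thereby correctly treated as nonaccepting. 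The intended reading is that $\beta_i^{q,o}$ holds exactly when the level-$i$ subcomputation, entered in state $q$ with the higher pebbles at $x_{i+1},\ldots,x_k$, produces outcome $o$.

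For the base case $i=1$, pebble~$1$ can never place a further pebble, so its subcomputation is simply a one-way deterministic finite automaton whose input at position $p$ is the profile determined by $x_2,\ldots,x_k$, namely $P=\{\,l>1: x_l=p\,\}$ and $V=\{\,l>1: a_{x_l}=a_p\,\}$. Both sets are first-order definable from $x_2,\ldots,x_k$ using equality of positions and $\sim$, so the profile induces an $\MSO$-definable coloring of the positions by a fixed finite alphabet. By the B\"uchi--Elgot--Trakhtenbrot theorem the behavior of a finite automaton over such a colored word, ``started in state $q$, terminate with outcome $o$'', is a regular and hence $\MSO$-expressible property of the colored word, and this yields the formulas $\beta_1^{q,o}$.

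For the inductive step $i>1$, the induction hypothesis lets me attach to each position $p$ a second finite color: the behavior function $q\mapsto o$ of the level-$(i-1)$ subcomputation when pebble~$i$ sits at $p$, read off from the formulas $\beta_{i-1}^{q,o}(p,x_{i+1},\ldots,x_k)$ by substituting $x_i:=p$. This is again an $\MSO$-definable coloring in the free variables $x_{i+1},\ldots,x_k$. Equipped with the profile color and this lower-level behavior color, the moves of pebble~$i$ -- including ``place pebble $i-1$ and then either accept, if the call reports $\mathsf{accept}$, or resume in the returned exit state'' -- form a one-way deterministic finite automaton over a single finite alphabet, so B\"uchi--Elgot--Trakhtenbrot again produces $\beta_i^{q,o}$. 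Taking $i=k$ leaves no free variables, since $\{k+1,\ldots,k\}=\emptyset$, and the desired sentence is $\varphi_{\sA}=\beta_k^{q_0,\,\mathsf{accept}}$, which records that the run starts at $[k,q_0,\theta_0]$ with pebble~$k$ on position $0$. The main obstacle is this inductive step: I must check that the recursive ``leads to acceptance'' semantics threads correctly through the stack, so that an acceptance occurring deep inside a level-$(i-1)$ call is faithfully reported upward as $\mathsf{accept}$ and a lift resumes the upper automaton in exactly the handed-back state. Verifying that the two-fold coloring, profile together with lower-level behavior, is well defined and packages into one $\MSO$-definable finite alphabet is the crux that makes the B\"uchi-style argument applicable uniformly at every level.
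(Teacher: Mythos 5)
The paper gives no proof of this theorem: it is quoted verbatim from \cite{NevenSV04} (Theorem~4.2 there), so there is no in-paper argument to compare yours against. On its own merits your proposal is correct, and it is essentially the standard argument, fully in line with the observation the paper records after Theorem~\ref{t: equivalence} -- that while pebbles~$i+1,\ldots,k$ are frozen, pebble~$i$ acts as a finite-state device over the finite alphabet of profiles $(P,V)\in 2^{\{i+1,\ldots,k\}}\times 2^{\{i+1,\ldots,k\}}$; your induction on the pebble level with behavior formulas $\beta_i^{q,o}$ and a B\"uchi--Elgot--Trakhtenbrot encoding at each level is exactly how one turns that observation into an $\MSO(\sim,<,+1)$ sentence. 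Two points you leave implicit should be made explicit. First, a one-way PA may revisit the same position of pebble~$i$ arbitrarily often through repeated $\ttPlace$/$\ttLift$ cycles, so ``the moves of pebble~$i$ form a one-way DFA over a colored word'' holds only after you close off, at each fixed position $p$, the local state evolution $q \to q_1 \to q_2 \to \cdots$ obtained by composing the call-behavior color with the transition relation: since $|Q|$ is finite, this local computation either produces a right-move, lift, or acceptance within $|Q|$ steps or diverges, and because it is a function of the colors at $p$ alone it folds into the finite alphabet; your determinism/outcome bookkeeping tacitly depends on this closure. Second, the logical structure has domain $\{1,\ldots,n\}$, whereas configurations place pebbles on the end-marker positions $0$ and $n+1$; the deterministic steps taken there must be absorbed into the initial and final conditions of the run encoding. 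Both repairs are routine, so the proof goes through as a faithful, self-contained reconstruction of the cited result.
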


\section{Words of $\fD^*$ as Graphs} 
\label{s: graph}

This section contains the main results in this paper:
\begin{enumerate}
\item
The strict hierarchy of PA languages based on the number of pebbles.
\item
The separation of $\MSO(\sim,<,+1)$ from PA languages.
\item
The separation of one-way deterministic RA languages from PA languages.
\end{enumerate}
All three results share one common idea: 
We view a word of even length as a directed graph.
Recall that $\fD$ is an infinite alphabet, and that
we always denote the symbols in $\fD$ by the lower case letters $a,b,c,\ldots$,
possibly indexed.

We consider directed graphs in which the vertices come from $\fD$.
A word $w= a_0 b_0 \cdots a_n b_n \in \fD^*$ of even length 
{\em induces} a directed graph $G_w = (V_w,E_w)$, 
where $V_w$ is the set of symbols
that appear in $w$, that is, $V_w = \{a : a \ \mbox{appears in} \ w\}$, 
and the set of edges is $E_w = \{(a_0,b_0),\ldots,(a_n,b_n)\}$. 
We also write $s_w = a_0$ and $t_w=b_n$ to denote the first and the last
symbol in $w$, respectively. 
For convenience, we consider only the words $w$ in which 
$s_w$ and $t_w$ occur only once.

As an example, we take the following word
$w = ab \ bc \ bd \ cd \ 
ce \ de \ ef \ eg$.
Then $s_w = a$ and $t_w = g$.
The graph induced by $w$ is the $G_w = (V_w,E_w)$,
where $V_w = \{a,b,c,d,e,f,g\}$ and
$E_w = \{(a,b),(b,c),(b,d),(c,d),(c,e),
(d,e),(e,f),(e,g)\}$,
as illustrated in the picture below.

\begin{picture}(300,120)(-180,-60)

\put(-100,0){\circle*{3}}
\put(-104,5){$a$}
\put(-97,0){\vector(1,0){33}}

\put(-60,0){\circle*{3}}
\put(-64,5){$b$}
\put(-57,3){\vector(3,2){33}}
\put(-57,-3){\vector(3,-2){33}}

\put(-20,27){\circle*{3}}
\put(-22,32){$c$}
\put(-20,23){\vector(0,-1){47}}

\put(-20,-27){\circle*{3}}
\put(-20,-35){$d$}

\put(-17,24){\vector(3,-2){33}}

\put(-17,-24){\vector(3,2){33}}

\put(20,0){\circle*{3}}
\put(18,5){$e$}
\put(23,0){\vector(1,0){45}}
\put(60,27){\circle*{3}}
\put(56,32){$f$}

\put(23,3){\vector(3,2){33}}
\put(72,0){\circle*{3}}
\put(75,-2){$g$}

\end{picture}

We need the following basic graph terminology. Let $a$ and $b$ be
vertices in a graph $G$. A {\em path} of length $m$ from $a$ to
$b$ is a sequence of $m$ edges
$(a_{i_1},b_{i_1}),\ldots,(a_{i_m},b_{i_m})$ in $G$ such that
$a_{i_1}=a$, $b_{i_m}=b$ and for each $j=1,\ldots,m-1$,
$b_{i_j}=a_{i_{j+1}}$. The {\em distance} from $a$ to $b$,
denoted by $d_G(a,b)$, is the
length of the shortest path from $a$ to $b$ in $G$. If there is no
path from $a$ to $b$ in $G$, then we set $d_G(a,b)=\infty$.

We now define the following reachability languages. For $m \geq 1$,
\[
\cR_{m} = \{ w \mid  d_{G_w}(s_w,t_w) \leq m \}
\]
and
\[
\cR = \bigcup_{m=1,2,\ldots} \cR_m.
\]
Here we should remark that 
since we consider only the words $w$ in which 
$s_w$ and $t_w$ occur only once,
the language $\cR_1$ consists of words of length 2 with different symbols.

\begin{proposition}
\label{p: savitch}
For each $k=2,3,\ldots$, $\cR_{2^k-1} \in \textrm{PA}_k$.
\end{proposition}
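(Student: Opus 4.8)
The plan is to realize a Savitch-style divide-and-conquer, implemented with exactly one pebble per level of recursion, so that recognizing a path of length $\leq 2^k-1$ costs $k$ pebbles. For $j \geq 1$ and two ``endpoint'' vertices $\sigma,\tau$, let $\mathsf{R}_j(\sigma,\tau)$ assert that $G_w$ has a path of length $\leq 2^j-1$ from $\sigma$ to $\tau$. The recursion splits a path at its middle edge: $\mathsf{R}_j(\sigma,\tau)$ holds iff $\sigma=\tau$, or there is an edge $(u,v)$ of $G_w$ with $\mathsf{R}_{j-1}(\sigma,u)$ and $\mathsf{R}_{j-1}(v,\tau)$; the base case $\mathsf{R}_1(\sigma,\tau)$ asserts $\sigma=\tau$ or $(\sigma,\tau)\in E_w$. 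Since a path of length $\ell$ with $1\leq \ell \leq 2^j-1$ can always be cut at an interior edge into two pieces each of length $\leq 2^{j-1}-1$, this recursion is sound and complete, and the induced length bound satisfies $L_1=1$ and $L_j=2L_{j-1}+1$, whence $L_k=2^k-1$ as required. I would therefore build a two-way alternating $k$-PA that evaluates $\mathsf{R}_k(s_w,t_w)$ and, by Theorem~\ref{t: equivalence}, conclude $\cR_{2^k-1}\in\textrm{PA}_k$.

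The evaluation of $\mathsf{R}_j$ uses pebble~$j$ as its head: it existentially guesses the middle edge by placing pebble~$j$ on the tail $u$, then verifies the two recursive calls $\mathsf{R}_{j-1}(\sigma,u)$ and $\mathsf{R}_{j-1}(v,\tau)$ one after the other using pebbles $j-1,\ldots,1$. Because the two subproblems are carried out sequentially, their pebbles are reused, so the maximum number of pebbles placed at once is $j$; the base call $\mathsf{R}_1$ consumes the last pebble by scanning for a single edge, for a total of exactly $k$ pebbles at the top. The key structural point is that the endpoints $\sigma,\tau$ handed to any subcall are always either the global $s_w,t_w$ or a currently placed higher-indexed pebble, so the finite control only has to remember a bounded ``descriptor'' of each endpoint, never a data value.

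The main obstacle is performing all the required symbol comparisons without ever storing a symbol in the (finite) control, given that the middle edge has two endpoints but only one pebble. I would resolve the asymmetry by exploiting that the two recursive calls are sequential: pebble~$j$ first sits on the tail $u$ during $\mathsf{R}_{j-1}(\sigma,u)$, and once that call returns (and its lower pebbles are lifted, making pebble~$j$ the head again) it is advanced one cell to the head $v$ for the call $\mathsf{R}_{j-1}(v,\tau)$. Thus every endpoint presented to a subcall is literally the position of some pebble, and equality of two such vertices is tested through the automaton's $V$-component, parking pebble~$1$ on one of the two positions when neither is the current head. To guarantee that the guessed cell $u$ really is the tail of an edge, the automaton tracks a single parity bit as each pebble walks rightward from the left marker, distinguishing tail (odd) from head (even) positions.

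Finally, comparisons against the global endpoints are handled without a dedicated pebble by using the standing hypothesis that $s_w$ and $t_w$ occur only once: a vertex equals $s_w$ exactly when its position is adjacent to $\triangleleft$, and equals $t_w$ exactly when adjacent to $\triangleright$, both of which are local tests available to the head pebble. The remaining work---wiring these gadgets into the transition set and checking that the pebble stack discipline is respected at each call and return---is routine, and the resulting machine is a two-way alternating $k$-PA, which suffices by Theorem~\ref{t: equivalence}.
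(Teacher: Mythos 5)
Your proposal is correct and follows essentially the same route as the paper's proof: the same Savitch-style recursion with one pebble per recursion level, pebble~$j$ placed on the guessed middle edge's tail at an odd position, the two subcalls $\mathsf{R}_{j-1}(\sigma,u)$ and $\mathsf{R}_{j-1}(v,\tau)$ run sequentially with pebble reuse and pebble~$j$ stepped one cell right from tail to head in between, endpoints tracked as bounded descriptors (global $s_w$/$t_w$ or a higher pebble), and the uniqueness of $s_w$ and $t_w$ exploited for the endpoint comparisons. The only differences are presentational --- you guess the edge nondeterministically in a two-way alternating machine and appeal to Theorem~\ref{t: equivalence}, whereas the paper directly iterates pebble~$i$ over all positions (the determinized form of your guess), and you carry the $\sigma=\tau$ disjunct explicitly at every level where the paper places it in the base case.
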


The proof of this proposition is an implementation of
Savitch's algorithm~\cite{Savitch70} for ($s$-$t$)-reachability
by pebble automata.
It can be found in Subsection~\ref{ss: proof: savitch}.

Lemma~\ref{l: R_n_k not in PA_k} below is the backbone of most of
the results presented in this paper. 
For each $i=0,1,2,\ldots$,
we define $n_i = 2^{i+1} - 2$.
An equivalent recursive definition is $n_0 = 0$, and
$n_{i+1} = 2 n_{i} + 2$, for $i \geq 1$.

\begin{lemma}
\label{l: R_n_k not in PA_k}
For every $k$-pebble automaton $\A$, where $k \geq 1$,
there exist a word $w\in \cR_{n_k}$ and $\overline{w}\notin \cR$
such that either $\A$ accepts both $w$ and $\overline{w}$,
or $\A$ rejects both $w$ and $\overline{w}$.
\end{lemma}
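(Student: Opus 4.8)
The plan is to first invoke Theorem~\ref{t: equivalence} and assume that \A\ is one-way and deterministic, which is essential because it gives the run a clean nested shape. For a one-way deterministic $k$-PA the only actions are \ttRight, \ttPlace, \ttLift, so the computation proceeds as a depth-$k$ nesting of left-to-right passes: pebble~$k$ scans as the head and occasionally executes \ttPlace, dropping pebble~$k-1$ at position~$0$ and handing it control; pebble~$k-1$ runs its own scan, recursively placing pebble~$k-2$, and so on down to pebble~$1$, which can only scan and then \ttLift. When an inner pebble lifts, control returns to the pebble one level out, which resumes from its remembered position. Crucially, while pebble~$i$ is the head, the positions (hence the symbols) of pebbles~$i+1,\dots,k$ are frozen and enter a transition only through the finite side information $(P,V)$. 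This structure is what makes an induction on the pebble level $i$ natural, matched exactly to the recurrence $n_{i+1}=2n_i+2$.

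For each level $i$ I would construct a \emph{connected gadget} $H_i$ --- a word whose induced graph contains a single path of length $n_i$ from its source $s_i$ to its target $t_i$ and nothing else joining them --- together with a \emph{severed gadget} $\overline{H}_i$ obtained by renaming one occurrence of the unique midpoint vertex to a fresh symbol of \fD, so that $s_i$ and $t_i$ become disconnected. The recursion builds $H_{i+1}$ from two relabelled copies of $H_i$ linked through a fresh middle vertex by two new edges, yielding a path of length $2n_i+2=n_{i+1}$; severing that middle vertex produces $\overline{H}_{i+1}$. The base level is immediate: a $1$-PA places no inner pebble and (with no frozen pebble to compare against) sees only word length and end markers, so a length-$2$ path and a disconnected pair of edges, both of length $4$, are indistinguishable. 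To defeat probing by frozen outer pebbles at higher levels, each $H_i$ will carry many ($\gg k\cdot|Q|$) isomorphic copies of the relevant substructure, all bearing disjoint fresh labels.

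The inductive invariant I would maintain is an indistinguishability statement: for every entry state and every placement of pebbles~$i+1,\dots,k$, the level-$i$ subcomputation exits $H_i$ and $\overline{H}_i$ in the same state and with the same interaction trace (the sequence of $(P,V)$-types witnessed whenever an outer pebble straddles the edited region). The engine is permutation invariance of PA over \fD: any bijection $\pi:\fD\to\fD$ satisfies $w\in L(\A)\iff\pi(w)\in L(\A)$, since transitions test only equalities, so the entire question reduces to which equality patterns the pebbles ever see. A pigeonhole argument over the finite set $Q$ then selects, among the many copies in $H_i$, one copy on which no outer pebble ever rests and which is entered in the same state as a sibling; the connected/severed swap is performed inside that untouched copy, where by the inductive hypothesis it is invisible. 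Carrying the invariant up to $i=k$ yields $w:=H_k\in\cR_{n_k}$ and $\overline{w}:=\overline{H}_k\notin\cR$ with \A\ accepting both or rejecting both.

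The main obstacle is controlling the symbol-equality tests (the set $V$) across the nesting, rather than the scanning itself. A frozen outer pebble can read a symbol that also occurs deep inside an edited copy, so the severing rename could in principle flip a $V$-test far from where it was performed; ruling this out is exactly what forces the heavy use of disjoint fresh labels and the redundancy of many isomorphic copies, ensuring the edited copy shares no symbol with any pebble that could be watching. A second delicate point is that the states give \A\ genuine counting power, so the invariant must be preserved not merely up to the first divergence but for the entire, possibly long, run; this is where the pigeonhole over $Q$ must be applied uniformly at every level. Verifying that the two-copy composition in $H_{i+1}$ genuinely demands a level-$(i+1)$ rather than a level-$i$ verification --- i.e.\ that exactly one extra doubling is needed --- is the crux that pins the bound at $n_k=2^{k+1}-2$.
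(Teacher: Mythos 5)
There is a genuine gap, and it is fatal to the chosen edit operation. Your severed gadget $\overline{H}_{i+1}$ is obtained by renaming \emph{one occurrence} of the midpoint vertex to a fresh symbol. In the word encoding, that midpoint occurs exactly twice (once as the head of the incoming edge, once as the tail of the outgoing edge), so the rename turns one doubly-occurring symbol into two singletons: the number of positions carrying a symbol that occurs exactly once in the word increases by exactly $2$, no matter how much padding you add, which copy you edit, or how the labels are permuted. This invariant is probed by a single fixed one-way deterministic $2$-PA: iterate pebble~$2$ over all positions; for each, sweep pebble~$1$ across the whole word and use the $V$-component of the transitions to decide whether pebble~$2$'s symbol occurs at another position; count in the state, modulo $4$, the positions holding singletons, and accept iff the count is $0$ or $1 \pmod{4}$. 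Since exactly one of $X$ and $X+2$ lies in $\{0,1\} \pmod{4}$, this automaton disagrees on \emph{every} pair of the form (connected gadget, rename-severed gadget). Hence for this $\A$ your construction yields no valid pair for any $k\geq 2$; permutation invariance and disjoint fresh labels cannot help, because the distinguishing invariant depends only on the equality type of the word, not on labels or on where the edit sits. This is exactly the ``genuine counting power'' you flagged in your last paragraph, but the proposal contains no mechanism that neutralizes it.

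The second problem is that your pigeonhole --- ``select a copy on which no outer pebble ever rests'' --- is unsatisfiable in this model: after the standard normalisation, the head pebble at every level sweeps \emph{every} position (a newly placed pebble starts at position $0$ and scans the whole word, a fact your own description of the nested runs records), so each outer pebble rests inside every copy at some time, and the deterministic run aggregates in its state what it saw across all copies; likewise your invariant ``exits $H_i$ and $\overline{H}_i$ in the same state'' is stated locally to a gadget, whereas level-$i$ subcomputations traverse the entire input. The paper's proof repairs both defects at once. Its edit is not a rename but the deletion of one full column of $2m$ symbols ($D_{n_k-1}a_{n_k-1}a_{n_k}$) from a long periodic padding region --- an edit that, as one can check, leaves the multiplicity profile and all such counting invariants unchanged --- and its engine is a state-periodicity lemma (Claim~\ref{cl: proof subclaim B one}): along the padding the head's state sequence becomes periodic with period dividing the factorial-based $m=\beta_{k+1}$, so deleting exactly one period lets the two runs re-synchronize up to a position shift of $2m$, which is tracked by the compatibility relation and the simultaneous induction of Claims~\ref{cl: compatibility preserved}, \ref{cl: proof subclaim B one}, \ref{cl: proof subclaim B two} and~\ref{cl: main claim}. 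In short: your reduction to one-way deterministic PA via Theorem~\ref{t: equivalence}, the recursion $n_{i+1}=2n_i+2$, and the level-by-level induction all match the paper, but the pigeonhole must be over \emph{states along periodic padding}, not over untouched copies, and the edit must be a shift-absorbing deletion rather than a rename; without these two ideas the induction cannot close, and with the rename edit the statement you are trying to prove inductively is simply false.
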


The proof of Lemma~\ref{l: R_n_k not in PA_k} is rather long and technical.
We present it in Subsections~\ref{ss: proof} and ~\ref{ss: proof subclaim}.
Meanwhile we discuss a number of consequences of this lemma.
Corollary~\ref{c: R_n_k not in PA_k} below immediately follows from the lemma.
\begin{corollary}
\label{c: R_n_k not in PA_k} 
$\cR_{n_k} \notin \textrm{PA}_k$.
\end{corollary}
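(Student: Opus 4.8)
The plan is to prove Corollary~\ref{c: R_n_k not in PA_k} as a direct consequence of Lemma~\ref{l: R_n_k not in PA_k}, arguing by contradiction. Suppose, for the sake of contradiction, that $\cR_{n_k} \in \textrm{PA}_k$. Then by definition there exists a $k$-pebble automaton $\A$ with $L(\A) = \cR_{n_k}$. The strategy is to apply Lemma~\ref{l: R_n_k not in PA_k} to this particular $\A$ and derive a contradiction from the fact that $\A$ must decide $\cR_{n_k}$ correctly.

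Concretely, by Lemma~\ref{l: R_n_k not in PA_k} applied to $\A$, there exist words $w \in \cR_{n_k}$ and $\overline{w} \notin \cR$ such that $\A$ either accepts both $w$ and $\overline{w}$, or rejects both. I would then examine each case. Since $L(\A) = \cR_{n_k}$ and $w \in \cR_{n_k}$, the automaton $\A$ must accept $w$. On the other hand, since $\overline{w} \notin \cR = \bigcup_{m} \cR_m$, in particular $\overline{w} \notin \cR_{n_k}$, so $\A$ must reject $\overline{w}$. This means $\A$ accepts $w$ but rejects $\overline{w}$ --- precisely the situation that Lemma~\ref{l: R_n_k not in PA_k} rules out, since the lemma guarantees that $\A$ treats the two words identically (both accepted or both rejected). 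This contradiction shows that no such $\A$ can exist, and hence $\cR_{n_k} \notin \textrm{PA}_k$.

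There is essentially no technical obstacle here; all the difficulty has already been absorbed into Lemma~\ref{l: R_n_k not in PA_k}, whose proof is deferred to the later subsections. The only point requiring a moment's care is the direction of the set inclusion: one must observe that $\overline{w} \notin \cR$ is a stronger statement than merely $\overline{w} \notin \cR_{n_k}$, and it is the latter consequence that we actually use to conclude $\A$ rejects $\overline{w}$. (The stronger form $\overline{w} \notin \cR$ is what makes the lemma robust enough to yield separation results against larger language classes elsewhere in the paper, but for this corollary only membership in $\cR_{n_k}$ is relevant.) With that observation in place, the proof is a two-line case analysis, so I would present it as a short explicit contradiction argument rather than elaborating further.

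\begin{proof}
Suppose to the contrary that $\cR_{n_k} \in \textrm{PA}_k$. Then there is a $k$-pebble automaton $\A$ with $L(\A) = \cR_{n_k}$. By Lemma~\ref{l: R_n_k not in PA_k}, there exist words $w \in \cR_{n_k}$ and $\overline{w} \notin \cR$ such that $\A$ either accepts both $w$ and $\overline{w}$, or rejects both. Since $w \in \cR_{n_k} = L(\A)$, the automaton $\A$ accepts $w$. Since $\overline{w} \notin \cR \supseteq \cR_{n_k}$, we have $\overline{w} \notin \cR_{n_k} = L(\A)$, so $\A$ rejects $\overline{w}$. Thus $\A$ accepts $w$ but rejects $\overline{w}$, contradicting the conclusion of Lemma~\ref{l: R_n_k not in PA_k}. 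Hence $\cR_{n_k} \notin \textrm{PA}_k$.
\end{proof}
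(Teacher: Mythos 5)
Your proof is correct and is exactly the argument the paper intends: the paper states that the corollary ``immediately follows'' from Lemma~\ref{l: R_n_k not in PA_k}, and your contradiction argument (mirroring the paper's explicit proof of Corollary~\ref{c: R not in PA}) spells out that immediate inference, including the small but necessary observation that $\overline{w} \notin \cR$ implies $\overline{w} \notin \cR_{n_k}$.
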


\begin{corollary}
\label{c: R not in PA}
$\cR \notin \textrm{PA}$.
\end{corollary}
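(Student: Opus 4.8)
The plan is to derive Corollary~\ref{c: R not in PA} directly from Corollary~\ref{c: R_n_k not in PA_k} by a simple containment argument. Suppose, for contradiction, that $\cR \in \textrm{PA}$. By the definition $\textrm{PA} = \bigcup_{k \geq 1}\textrm{PA}_k$, there would exist some fixed $k \geq 1$ such that $\cR \in \textrm{PA}_k$; that is, some $k$-pebble automaton $\A$ accepts exactly $\cR$. The goal is then to exhibit a word that $\A$ must classify incorrectly, contradicting $L(\A) = \cR$.

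The key observation is that for this same $k$, Corollary~\ref{c: R_n_k not in PA_k} tells us $\cR_{n_k} \notin \textrm{PA}_k$. This does not immediately finish the argument, since $\cR$ and $\cR_{n_k}$ are different languages, so I would instead go back to the source, Lemma~\ref{l: R_n_k not in PA_k}. Applied to the automaton $\A$, the lemma produces a word $w \in \cR_{n_k}$ and a word $\overline{w} \notin \cR$ on which $\A$ behaves identically (either accepting both or rejecting both). Now I use the inclusion $\cR_{n_k} \subseteq \cR$, which is immediate from the definitions: any word with $d_{G_w}(s_w,t_w) \leq n_k$ certainly satisfies $d_{G_w}(s_w,t_w) \leq m$ for some $m$, hence lies in $\cR$. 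Thus $w \in \cR$ while $\overline{w} \notin \cR$, so a correct recognizer for $\cR$ would have to accept $w$ and reject $\overline{w}$. But $\A$ treats them the same way, so $\A$ cannot recognize $\cR$.

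I would phrase this cleanly as follows. For any $k \geq 1$ and any $k$-pebble automaton $\A$, Lemma~\ref{l: R_n_k not in PA_k} gives $w \in \cR_{n_k} \subseteq \cR$ and $\overline{w} \notin \cR$ with $\A$ either accepting both or rejecting both. In the first case $\overline{w} \in L(\A) \setminus \cR$; in the second case $w \in \cR \setminus L(\A)$. Either way $L(\A) \neq \cR$. Since $\A$ was an arbitrary pebble automaton and every pebble automaton uses some finite number $k$ of pebbles, no pebble automaton accepts $\cR$, i.e. $\cR \notin \textrm{PA}$.

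The main point to get right is not any computation but the two-step logical structure: the quantifiers must be handled in the correct order so that the single $k$ supplied by the (for-contradiction) assumption $\cR \in \textrm{PA}$ is the \emph{same} $k$ for which the lemma's indistinguishable pair $w, \overline{w}$ is constructed. There is no genuine obstacle here, as all the heavy lifting is already done in Lemma~\ref{l: R_n_k not in PA_k}; the only care needed is the trivial but essential inclusion $\cR_{n_k} \subseteq \cR$, which ensures the lemma's witness $w$ actually belongs to the target language $\cR$.
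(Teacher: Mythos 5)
Your proof is correct and follows essentially the same route as the paper: assume $\cR = L(\A)$ for some $k$-PA $\A$, invoke Lemma~\ref{l: R_n_k not in PA_k} for that same $k$ to obtain the indistinguishable pair $w \in \cR_{n_k}$, $\overline{w} \notin \cR$, and derive the contradiction. Your version merely makes explicit the (trivial but necessary) inclusion $\cR_{n_k} \subseteq \cR$ and the quantifier order, both of which the paper's terser proof leaves implicit.
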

\begin{proof}
Assume to the contrary that $\cR = L(\A)$ for a $k$-PA $\A$.
Then, by Lemma~\ref{l: R_n_k not in PA_k},
there exists a word $w\in \cR_{n_k}$ and $\overline{w}\notin \cR$
such that either $\A$ accepts both $w$ and $\overline{w}$,
or $\A$ rejects both $w$ and $\overline{w}$.
Both yield a contradiction to the assumption that $\cR = L(\A)$.
\end{proof}

The following theorem establishes the proper hierarchy of the PA
languages.

\begin{theorem}
\label{t: PA_k < PA_k+1}
For each $k=2,\ldots$, $\textrm{PA}_k \subsetneq \textrm{PA}_{k+1}$.
\end{theorem}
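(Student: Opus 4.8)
The plan is to derive the theorem from the two main technical results already in hand: the upper bound of Proposition~\ref{p: savitch} (Savitch's algorithm) and the lower bound of Corollary~\ref{c: R_n_k not in PA_k}. Concretely, I would first establish the inclusion $\textrm{PA}_k \subseteq \textrm{PA}_{k+1}$, and then exhibit a single witness language lying in $\textrm{PA}_{k+1}$ but not in $\textrm{PA}_k$. The natural candidate is $\cR_{n_k}$, where $n_k = 2^{k+1}-2$, precisely because the lower bound is stated for exactly this language.

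For the inclusion, I would argue that every $k$-PA is simulated by a $(k+1)$-PA. Given a $k$-PA $\A$, I build a $(k+1)$-PA $\A'$ that first places pebble $k+1$ at position $0$ as a permanent dummy, and thereafter runs $\A$ verbatim on pebbles $1,\ldots,k$. Since pebble $k$ is the outermost pebble of $\A$, it is never lifted, so $\A'$ never needs to move its head back onto the dummy pebble $k+1$. The only care needed is that each transition $(i,P,V,q)$ of $\A$ be duplicated in $\A'$ to record whether the dummy pebble $k+1$ (sitting at position $0$ and reading $\triangleleft$) belongs to the position set $P$ and the value set $V$; as pebble $k+1$ coincides with the head exactly when the head is at position $0$, this is routine finite bookkeeping. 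Thus $L(\A')=L(\A)$, giving $\textrm{PA}_k \subseteq \textrm{PA}_{k+1}$.

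For strictness I would take the witness $\cR_{n_k}$. By Corollary~\ref{c: R_n_k not in PA_k}, $\cR_{n_k} \notin \textrm{PA}_k$, so it remains to show $\cR_{n_k} \in \textrm{PA}_{k+1}$. Here I would appeal to the Savitch construction behind Proposition~\ref{p: savitch}: with $k+1$ pebbles it recognizes reachability up to distance $2^{k+1}-1$, and running the same divide-and-conquer recursion with the distance bound $n_k = 2^{k+1}-2$ instead recognizes $\cR_{n_k}$, using no more than $k+1$ pebbles since $n_k \le 2^{k+1}-1$. Equivalently, because $n_k = 2\,(2^k-1)$, one extra pebble can guess a midpoint vertex and the $k$-PA of Proposition~\ref{p: savitch} is invoked twice, from $s_w$ to the midpoint and from the midpoint to $t_w$, each within distance $2^k-1$. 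Combining $\cR_{n_k}\in \textrm{PA}_{k+1}\setminus \textrm{PA}_k$ with the inclusion yields $\textrm{PA}_k\subsetneq \textrm{PA}_{k+1}$.

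The genuinely hard work does not lie in this theorem but in Lemma~\ref{l: R_n_k not in PA_k}, which underlies Corollary~\ref{c: R_n_k not in PA_k}; granting that, the only delicate point is the \emph{tightness} of the gap. The witness $\cR_{n_k}$ must be the \emph{same} language on both sides, and this succeeds only because the Savitch upper bound for $k+1$ pebbles (distance $2^{k+1}-1$) just barely covers the lower-bound distance $n_k = 2^{k+1}-2$. Accordingly, I would verify carefully that the distance parameter in the Savitch recursion may be set to an arbitrary value $m \le 2^{k+1}-1$, not merely to $2^{k+1}-1$, and that the range $k \ge 2$ in the statement is consistent with the indices for which Proposition~\ref{p: savitch} is available.
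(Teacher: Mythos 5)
Your proposal is correct in outline, but it takes a genuinely different route from the paper at the decisive point: the choice of witness. The paper separates $\textrm{PA}_k$ from $\textrm{PA}_{k+1}$ with $\cR_{2^{k+1}-1}$, not $\cR_{n_k}$. The upper bound is then Proposition~\ref{p: savitch} applied verbatim at index $k+1$, and the lower bound goes not through Corollary~\ref{c: R_n_k not in PA_k} but directly through Lemma~\ref{l: R_n_k not in PA_k}: the fooling pair $w \in \cR_{n_k}$, $\overline{w} \notin \cR$ defeats every $k$-PA on \emph{any} language $L$ with $\cR_{n_k} \subseteq L \subseteq \cR$, and since $n_k = 2^{k+1}-2 < 2^{k+1}-1$ the language $\cR_{2^{k+1}-1}$ is so sandwiched. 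This monotonicity trick buys a two-line proof with no new construction, whereas your choice of $\cR_{n_k}$ pushes the work into a generalized upper bound; you correctly flag that as the point needing verification, since Proposition~\ref{p: savitch} as stated only produces thresholds of the form $2^i-1$. Your primary repair (rerunning the Savitch recursion with asymmetric thresholds, splitting $m$ across the guessed edge as $\lceil (m-1)/2 \rceil$ and $\lfloor (m-1)/2 \rfloor$, which costs exactly $k+1$ pebbles at $m=n_k$) is sound. Your alternative midpoint formulation, however, has a concrete edge case you should patch: the subautomata $\A_i^{*,j}$ and $\A_i^{j,*}$ of the paper's construction detect only paths of length at least $1$ (they exploit the unique occurrence of $s_w$ and $t_w$ precisely to exclude length $0$), so when $d_{G_w}(s_w,t_w)=1$ every admissible midpoint is $s_w$ or $t_w$ and one of the two half-checks from the same marked vertex fails; disjoining with a direct $k$-pebble test of $d_{G_w}(s_w,t_w)\leq 2^k-1$ repairs it. Finally, your dummy-pebble simulation for the inclusion $\textrm{PA}_k \subseteq \textrm{PA}_{k+1}$ is fine and is more careful than the paper, which treats the inclusion as implicit; note that the bookkeeping for membership of pebble $k+1$ in $P$ and $V$ is genuinely needed, since strong PA place every new pebble at position $0$, where the dummy sits.
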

\begin{proof}
We contend that $\cR_{2^{k+1}-1} \in \textrm{PA}_{k+1} - \textrm{PA}_k$,
for each $k=2,\ldots,3$.
That $\cR_{2^{k+1}-1} \in \textrm{PA}_{k+1}$ follows from Proposition~\ref{p: savitch}.
That $\cR_{2^{k+1}-1} \notin \textrm{PA}_{k}$ follows from the fact that
$n_k = 2^{k+1}-2 < 2^{k+1}-1$ and Lemma~\ref{l: R_n_k not in PA_k}.
\end{proof}

Another consequence of Corollary~\ref{c: R not in PA} is that
the inclusion  of $\textrm{PA}$ in $\MSO(\sim,<,+1)$ obtained in
Theorem~\ref{t: PA <= MSO} is proper.

\begin{theorem}
\label{t: PA < MSO}
$\textrm{PA} \subsetneq \MSO(\sim,<,+1)$.
\end{theorem}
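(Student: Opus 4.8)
The plan is to exhibit a single separating language. Theorem~\ref{t: PA <= MSO} already gives the inclusion $\textrm{PA} \subseteq \MSO(\sim,<,+1)$, so to upgrade it to a strict inclusion it suffices to produce one language lying in $\MSO(\sim,<,+1)$ but not in $\textrm{PA}$. Corollary~\ref{c: R not in PA} hands us the candidate for free: the full reachability language $\cR \notin \textrm{PA}$. Thus the whole theorem reduces to the single claim that $\cR$ is definable in $\MSO(\sim,<,+1)$, and the separation follows immediately by combining that claim with Theorem~\ref{t: PA <= MSO} and Corollary~\ref{c: R not in PA}.

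To define $\cR$, the first step is to identify, inside the logic, which positions carry the edge-sources $a_0,\ldots,a_n$. Since the input has even length, these are exactly the odd positions, and parity is not $\FO$-definable over $(\sim,<,+1)$; but it is pinned down in $\MSO$ by a single existentially quantified set $O$, namely the unique set $\mathrm{OddDef}(O)$ containing the minimal position and satisfying $x \in O \leftrightarrow (x+1) \notin O$ for every consecutive pair $x,x+1$. Using $O$ together with the data-equivalence $\sim$, I would then define the edge relation of $G_w$ directly on symbols rather than on positions:
\[
E_O(x,y)\ :=\ \exists p\,\exists q\,\bigl(p\sim x \,\wedge\, p\in O \,\wedge\, q=p+1 \,\wedge\, q\sim y\bigr).
\]
This asserts that the $\sim$-class of $x$ has an edge to the $\sim$-class of $y$ in $G_w$. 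Because $E_O$ depends only on the $\sim$-classes of its arguments, any set of positions closed under $E_O$ is automatically closed under $\sim$ on the target side, so such sets faithfully represent sets of vertices of $G_w$.

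With the edge relation available, reachability is captured by the standard least-fixpoint-via-universal-set encoding. Writing $\mathrm{Closed}(X)$ for $\forall u\,\forall v\,((u\in X \wedge E_O(u,v)) \to v\in X)$, and letting $\mathrm{first}$ and $\mathrm{last}$ be the $\FO$-definable predicates naming the first and last positions, the target is reachable from the source exactly when
\[
\forall X\,\Bigl[\,\bigl(\exists s\,(\mathrm{first}(s)\wedge s\in X)\,\wedge\,\mathrm{Closed}(X)\bigr)\ \to\ \exists t\,(\mathrm{last}(t)\wedge t\in X)\Bigr].
\]
By the standing convention that $s_w$ and $t_w$ occur only once, the first and last positions name exactly the vertices $s_w$ and $t_w$; moreover $s_w \neq t_w$ as symbols, so the least closed set containing the source contains the target precisely when there is a path of length $\geq 1$, i.e. precisely when $w \in \cR$. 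Conjoining $\mathrm{OddDef}(O)$ with this reachability formula and existentially quantifying $O$ yields the desired $\MSO(\sim,<,+1)$ sentence, establishing $\cR \in \MSO(\sim,<,+1)$.

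I do not expect a genuine obstacle: $\MSO$-definability of directed $(s,t)$-reachability is classical, and the only infinite-alphabet twist is that ``vertices'' are $\sim$-classes rather than positions, which the formula $E_O$ absorbs cleanly. The one point demanding care is parity — the edges are the pairs $(a_i,b_i)$ and not the interleaving pairs $(b_i,a_{i+1})$ — which is exactly what the auxiliary set $O$ supplies, and which is the feature unavailable to $\FO$. Given the definability of $\cR$, the strictness $\textrm{PA} \subsetneq \MSO(\sim,<,+1)$ is immediate.
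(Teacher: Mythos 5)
Your proposal is correct in substance and follows the paper's skeleton exactly: the inclusion comes from Theorem~\ref{t: PA <= MSO}, non-membership $\cR \notin \textrm{PA}$ comes from Corollary~\ref{c: R not in PA}, and the entire burden falls on showing $\cR \in \MSO(\sim,<,+1)$, with an existentially quantified parity predicate supplying the odd positions in both treatments. Where you genuinely diverge is in the encoding of reachability itself. The paper guesses, besides $S_{odd}$, a second set $P$ of odd positions marking the edges of an actual source-to-target path, constrained by local ``exactly one'' linking conditions ($x+1 \sim y$ between consecutive members of $P$); this keeps the whole sentence in \emph{existential} MSO over the word structure. You instead use the universal least-fixpoint encoding: every set $X$ closed under the induced edge relation $E_O$ and containing the source must contain the target. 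Both are standard and both work. The paper's version buys a finer fact for free --- the separating sentence is $\exists\MSO$, i.e., monadic NP over words, which is worth noticing given that directed reachability over graph structures is \emph{not} in monadic NP~\cite{AjtaiF90}; it is the linear listing of edges in the word that makes the existential path-guessing encoding sound. Your version trades that away for a simpler correctness argument: with the closure formulation there is no need to rule out spurious cycle components in the guessed set $P$ or verify the functionality conditions.

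One point needs tightening. You lean on ``the standing convention'' that $s_w$ and $t_w$ occur only once, but $\cR$ is a concrete subset of $\fD^*$, and your sentence as written also accepts words violating that convention --- e.g.\ $abba$: with $O=\{1,3\}$ the edges are $(a,b)$ and $(b,a)$, the least $E_O$-closed set containing the first position reaches the symbol $a$ of the last position, so the sentence accepts, yet $abba \notin \cR$. So strictly $L(\varphi) \neq \cR$, and the separation argument needs the exact language. The paper anticipates this and conjoins the $\FO$ clauses stating that no $x \neq \sfmin$ satisfies $x \sim \sfmin$ and no $x \neq \sfmax$ satisfies $x \sim \sfmax$; adding these two clauses to your sentence repairs the mismatch, and (as your own parity discussion implicitly shows) then also disposes of odd-length inputs, since a once-occurring last symbol sitting at an odd position is never an edge target. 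With that conjunct added, your proof is complete.
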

\begin{proof}
Without loss of generality, we may assume that $\MSO(\sim,<,+1)$ contains
two constant symbols, $\sfmin$ and $\sfmax$, which denote minimum
and the maximum elements of the domain, respectively. For a word
$w=a_1\cdots a_n$, the minimum and the maximum elements are $1$
and $n$, respectively, and not $0$ and $n+1$ which are reserved
for the end-markers $\triangleleft$ and $\triangleright$.

The language $\cR$ can be expressed in $\MSO(\sim,<,+1)$ as
follows. There exist unary predicates $S_{odd}$ and $P$ such that
either 
\begin{itemize}
\item
$\sfmin +1 = \sfmax \wedge \sfmin \nsim \sfmax$ (to capture $\cR_1$),
\end{itemize}
or the following holds.
\begin{itemize}
\item
For all $x$, if $x\neq \sfmin$, then $x\nsim \sfmin$.
\\
(This is to take care our assumption that the first symbol appears only once.)
\item
For all $x$, if $x\neq \sfmax$, then $x\nsim \sfmax$.
\\
(This is to take care our assumption that the last symbol appears only once.)
\item 
$S_{odd}$ is the set of all odd elements in the domain where
$\sfmin \in S_{odd}$ and $\sfmax \not\in S_{odd}$.
\item
The predicate $P$ satisfies the conjunction of the following
$\FO(\sim,<,+1)$ sentences:
\begin{itemize}
\item
$P \subseteq S_{odd}$ and $\sfmin \in P$ and $\sfmax-1 \in P$,
\item
for all $x \in P - \{\sfmax-1\}$, there exists exactly
one $y \in P$ such that $x+1\sim y$, and
\item
for all $x \in P - \{\sfmin\}$, there exists exactly one $y \in P$ such
that $y+1 \sim x$.
\end{itemize}
\end{itemize}
Now, the theorem follows from Corollary~\ref{c: R not in PA}.
\end{proof}

\begin{remark}
Combining Theorems~\ref{t: equivalence} and~\ref{t: PA < MSO},
we obtain that $\MSO(\sim,<,+1)$ is stronger than two-way alternating PA.
This settles a question left open in~\cite{NevenSV04} whether
$\MSO(\sim,<,+1)$ is strictly stronger than two-way alternating PA.
\end{remark}

Next, we define a restricted version of the reachability languages.
For a positive integer $m \geq 1$, the language $\cR^+_m$ consists
of all words of the form
$$
c_0c_1 \underbrace{\cdots}_{u_1} c_1c_2
\underbrace{\cdots}_{u_2} c_2c_3
\cdots\cdots\cdots\cdots \cdots
c_{m-3} c_{m-2} \underbrace{\cdots}_{u_{m-2}} c_{m-2} c_{m-1}
\underbrace{\cdots}_{u_{m-1}} c_{m-1} c_m
$$
where for each $i\in\{0,\ldots,m-1\}$,
the symbol $c_i$ does not appear in $u_i$ and
$c_i \neq c_{i+1}$.
The language $\cR^+$ is defined as
$$
\cR^{+}  =  \bigcup_{m=1,2,\ldots} \cR^{+}_m.
$$

\begin{remark}
\label{r: from R to R+}
Actually, in the proof of Lemma~\ref{l: R_n_k not in PA_k}
we show that for every $k$-PA $\A$, 
there exist a word $w\in \cR_{n_k}^+$ and $\overline{w}\notin \cR^+$
such that either $\A$ accepts both $w$ and $\overline{w}$,
or $\A$ rejects both $w$ and $\overline{w}$.
Therefore, $\cR^+ \not\in \textrm{PA}$.
\end{remark}

The following theorem answers a question left open in~\cite{NevenSV04,Segoufin06}:
Can one-way deterministic FMA be simulated by pebble automata?
(We refer the reader to~\cite[Definition~1]{KaminskiF94}
for the formal definition of FMA.)

\begin{theorem}
\label{t: 1FMA notin PA}
The language $\cR^+$ is accepted by
one-way deterministic FMA,
but is not accepted by pebble automata.
\end{theorem}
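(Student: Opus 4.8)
The plan is to prove Theorem~\ref{t: 1FMA notin PA} in two halves. The second half, namely that $\cR^+ \not\in \textrm{PA}$, is already granted to us by Remark~\ref{r: from R to R+}, which asserts that the proof of Lemma~\ref{l: R_n_k not in PA_k} actually produces the separating words $w \in \cR^+_{n_k}$ and $\overline{w} \notin \cR^+$ inside the restricted language. So the entire burden falls on the first half: constructing a one-way deterministic FMA (register automaton in the sense of~\cite{KaminskiF94}) that accepts $\cR^+$.

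First I would exploit the very rigid shape of words in $\cR^+$. A word in $\cR^+_m$ is a concatenation of blocks, where each block opens with a pair $c_{i-1}c_i$ reading an edge, and the distinguishing structural feature is that the \emph{second} symbol of one edge equals the \emph{first} symbol of the next recorded edge in the chain $c_0,c_1,\ldots,c_m$, while the intervening filler $u_i$ is constrained only by the condition that $c_i$ does not occur in it. The key observation is that membership in $\cR^+$ does not require the automaton to search for a path: the witnessing path $c_0 \to c_1 \to \cdots \to c_m$ is \emph{laid out explicitly and in order} along the word. This is exactly what makes $\cR^+$ tractable for a one-way register automaton even though the general $\cR$ is not.

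The core step is to design the register automaton to track the ``current'' vertex of the path using a small fixed number of registers. I would use essentially two data registers: one holding the source $s_w = c_0$ (stored at the first position and never overwritten, so that it can be reported and so the once-only occurrence condition is enforceable), and one holding the current frontier vertex $c_i$. Reading left to right in pairs, at each edge $(a,b)$ the automaton compares $a$ against the stored frontier; when it matches, it advances the frontier by copying $b$ into the register, thereby moving one step along the path. The filler constraint ``$c_i \notin u_i$'' and the once-only conditions on $s_w,t_w$ can be checked on the fly with equality tests against the registers, using finitely many control states to remember whether we are currently inside a filler block or at an edge that advances the path. At the end of the input the word is accepted precisely when the frontier register holds $t_w = c_m$. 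Because FMA process the input once, left to right, with only equality tests and register reassignment, and because the path is monotone along the word, this fits the one-way deterministic FMA model. The main obstacle, and the step deserving the most care, is verifying that this behaviour can genuinely be realized \emph{deterministically} within the formal reassignment discipline of~\cite[Definition~1]{KaminskiF94} --- in particular checking that the nondeterministic choice of ``which occurrence begins the next block'' is in fact forced by the structural constraints (so that no guessing is needed), and dealing carefully with the boundary edge cases, such as the single-edge words of $\cR^+_1$ and repeated vertices that may legitimately occur inside the fillers. Once determinism and one-wayness are confirmed, combining the constructed FMA with Remark~\ref{r: from R to R+} completes the proof.
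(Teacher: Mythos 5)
Your proposal is correct and follows essentially the same route as the paper: the negative half is delegated to Remark~\ref{r: from R to R+} exactly as in the paper, and your positive half is the paper's own construction --- a one-way deterministic FMA with two registers that greedily tracks the frontier vertex, with determinism forced precisely by the filler constraint that $c_i$ does not occur in $u_i$, so the first later occurrence of the stored symbol must begin the next block. The only cosmetic differences (reading in aligned pairs rather than symbol by symbol, and dedicating a register to $s_w$ instead of using it as a scanning register) do not change the argument.
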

\begin{proof}
Note that $\cR^{+}$ is accepted by a one-way deterministic FMA 
with two registers.\footnote{Here we use the definition of FMA
as in~\cite{KaminskiF94}.
If we use the definition of RA as in~\cite{Segoufin06,DemriL09},
then one register is sufficient to accept $\cR^+$.}
On input word $w = c_0c_1 \cdots c_{n-1}c_n$,
the automaton stores $c_1$ in the first register and then moves right
(using the second register to scan the input symbols) until it
finds a symbol $c_{i} = c_1$.
If it finds one, then it stores $c_{i+1}$ in the first register and
moves right again until it finds another symbol $c_{i'}=c_{i+1}$.
It repeats the process until either of the following holds.
\begin{itemize}
\item
The symbol in the second last position $c_{n-1}$ 
is the same as the content of the first register, or,
\item
it cannot find a symbol currently stored in the first register.
\end{itemize}
In the former case, the automaton accepts the input word $w$, and in
the latter case it rejects.
By Remark~\ref{r: from R to R+}, the language $\cR^+$ is not a PA language.
This proves Theorem~\ref{t: 1FMA notin PA}.
\end{proof}

\subsection{Proof of Proposition~\ref{p: savitch}}
\label{ss: proof: savitch}

In this subsection we prove Proposition~\ref{p: savitch}.
Before we proceed with the proof,
we remark that when processing an input word $w$, an
automaton $\A$ can remember in its state whether 
a pebble is currently at an odd- or even-numbered position in $w$.
Moreover, we always denote the input
word $w$ by $a_0b_0\cdots a_nb_n$ -- 
that is, we denote the symbols on the odd positions by $a_i$'s 
and the symbols on the even position by $b_i$'s.
We can also assume that
the automaton always rejects words of odd length.

We are going to construct a $k$-PA $\A$ that accepts $\cR_{2^{k}-1}$.
Essentially the automaton $\A$ consists of the following subautomata.
\begin{itemize}
\item
An $i$-PA $\A_i^{j,j'}$, for each $i\in \{1,\ldots,k-1\}$ and $j,j' \in \{i+1,\ldots,k\}$.
\\
The purpose of each automaton $\A_i^{j,j'}$ is 
to detect the existence of a path $\leq 2^{i}-1$ 
from the vertex seen by pebble~$j$
to the vertex seen by pebble~$j'$.
\item
An $i$-PA $\A_i^{*,j}$, for each $i\in \{1,\ldots,k-1\}$ and $j \in \{i+1,\ldots,k\}$.
\\
The purpose of each automaton $\A_i^{*,j}$ is 
to detect the existence of a path $\leq 2^{i}-1$ 
from the vertex $s_w$ to the vertex seen by pebble~$j$.
\item
An $i$-PA $\A_i^{j,*}$, for each $i\in \{1,\ldots,k-1\}$ and $j \in \{i+1,\ldots,k\}$.
\\
The purpose of the automaton $\A_i^{j,*}$ is 
to detect the existence of a path $\leq 2^{i}-1$ 
from vertex seen by pebble~$j$ to the vertex $t_w$.
\end{itemize}
We are going to show how to construct those subautomata 
$\A_i^{j,j'}$, $\A_i^{j,*}$ and $\A_i^{*,j}$
by induction on $i$.

The basis is $i=1$.
The construction of $\A_1^{j,j'}$, $\A_1^{j,*}$ and $\A_1^{*,j}$ is as follows.
\begin{itemize}
\item
The automaton $\A_1^{j,j'}$ performs the following.
\begin{enumerate}
\item
It checks whether the symbols seen by pebbles~$j$ and $j'$ are the same,
which means that there is a path of length $0$ 
from the vertex seen by pebble~$j$ to the vertex seen by pebble~$j'$.
\item
Otherwise, it iterates pebble~1 on every odd position in $w$ checking whether
there exists an index $l$ such that 
$a_l$ is the same symbol seen by pebble~$j$.
If there is,
it moves to the right one step to read $b_l$ and 
checks whether it is the same symbol seen by pebble~$j'$.
This means that there is a path of length $1$ 
from the vertex seen by pebble~$j$ to the vertex seen by pebble~$j'$.
\end{enumerate}
\item
The automaton $\A_1^{*,j}$ simply puts pebble~1 on the second position of $w$ 
to read $b_0$ and checks whether it is the same symbol seen by pebble~$j$.
(Here we use the assumption that $s_w$ occurs only once in $w$,
which implies that there cannot be a path of length 0 in this case.)
\item
The automaton $\A_1^{j,*}$ simply puts pebble~1 on the second last position of $w$ 
to read $a_n$ and checks whether it is the same symbol seen by pebble~$j$.
(Here we use the assumption that $t_w$ occurs only once in $w$,
which implies that there cannot be a path of length 0 in this case.)
\end{itemize}

For the induction step,
we describe the construction of the automata $\A_i^{j,j'}$, $\A_i^{j,*}$ and $\A_i^{*,j}$
as follows.
\begin{itemize}
\item
The automaton $\A_i^{j,j'}$ performs the following.
It iterates pebble~$i$ on each position in the input word $w$.
\begin{enumerate}
\item
When pebble~$i$ is on the odd position reading the symbol $a_l$,
it invokes the automaton $\A_{i-1}^{j,i}$
to check whether there exists a path of length $\leq 2^{i-1}-1$
from the vertex seen by pebble~$j$ to the vertex $a_l$.
\item
If there is such a path,
it moves pebble~$i$ one step to the right reading the symbol $b_l$.
It then invokes the automaton $\A_{i-1}^{i,j'}$
to check whether there exists a path of length $\leq 2^{i-1}-1$
from the vertex $b_l$ to the vertex seen by pebble~$j'$.
\end{enumerate}
Now there exists a path of length $\leq 2^{i}-1$ from the vertex seen by pebble~$j$
to the vertex seen by pebble~$j'$ if and only if
there exists an index $l$ such that
(i) there exists a path of length $\leq 2^{i-1}-1$
from the vertex seen by pebble~$j$ to the vertex $a_l$, and
(ii) there exists a path of length $\leq 2^{i-1}-1$
from the vertex $b_l$ to the vertex seen by pebble~$j'$.
This implies the correctness of our construction of $\A_i^{j,j'}$.
\item
The automaton $\A_i^{*,j}$ performs the following.
It iterates pebble~$i$ on each position in the input word $w$.
\begin{enumerate}
\item
When pebble~$i$ is on the odd position reading the symbol $a_l$,
it invokes the automaton $\A_{i-1}^{*,i}$
to check whether there exists a path of length $\leq 2^{i-1}-1$
from the vertex $s_w$ to the vertex $a_l$.
\item
If there is such a path,
it moves pebble~$i$ one step to the right reading the symbol $b_l$.
It then invokes the automaton $\A_{i-1}^{i,j}$
to check whether there exists a path of length $\leq 2^{i-1}-1$
from the vertex $b_l$ to the vertex seen by pebble~$j'$.
\end{enumerate}
It follows immediately that $\A_i^{*,j}$ checks
the existence of a path $\leq 2^i-1$ 
from the vertex $s_w$ to the vertex seen by pebble~$j$.

\item
The automaton $\A_i^{j,*}$ performs the following.
It iterates pebble~$i$ on each position in the input word $w$.
\begin{enumerate}
\item
When pebble~$i$ is on the odd position reading the symbol $a_l$,
it invokes the automaton $\A_{i-1}^{j,i}$
to check whether there exists a path of length $\leq 2^{i-1}-1$
from the vertex seen by pebble~$j$ to the vertex $a_l$.
\item
If there is such a path,
it moves pebble~$i$ one step to the right reading the symbol $b_l$.
It then invokes the automaton $\A_{i-1}^{i,*}$
to check whether there exists a path of length $\leq 2^{i-1}-1$
from the vertex $b_l$ to the vertex $t_w$.
\end{enumerate}
It follows immediately that $\A_i^{*,j}$ checks
the existence of a path $\leq 2^i-1$ 
from the vertex seen by pebble~$j$ to the vertex $t_w$.
\end{itemize}

Now the automaton $\A$ performs the following.
It iterates pebble~$k$ on each position in the input word $w$.
\begin{enumerate}
\item
When pebble~$k$ is on the odd position reading the symbol $a_l$,
it invokes the automaton $\A_{k-1}^{*,k}$
to check whether there exists a path of length $\leq 2^{k-1}-1$
from the vertex $s_w$ to the vertex $a_l$.
\item
If there is such a path,
it moves pebble~$k$ one step to the right reading the symbol $b_l$.
It then invokes the automaton $\A_{k-1}^{k,*}$
to check whether there exists a path of length $\leq 2^{k-1}-1$
from the vertex $b_l$ to the vertex $t_w$.
\end{enumerate}
Hence, $\A$ is the desired automaton for $\cR_{2^k-1}$ and
this completes the proof of Proposition~\ref{p: savitch}.

\subsection{Proof of Lemma~\ref{l: R_n_k not in PA_k}}
\label{ss: proof}

The proof of Lemma~\ref{l: R_n_k not in PA_k} is rather long and technical.
This subsection and the next are devoted to it.

Recall that for each $i\in\{0,1,2,\ldots\}$,
we define $n_i = 2^{i+1} - 2$.
An equivalent recursive definition is $n_0 = 0$, and
$n_{i} = 2 n_{i-1} + 2$, when $i \geq 1$.

By Theorem~\ref{t: equivalence},
it is sufficient to consider only one-way deterministic PA $\A$.
Let $\A = \langle Q,q_0, \mu, F\rangle$ be a strong $k$-PA.
By adding some extra states,
we can normalise the behaviour of each pebble as follows.
For each $i\in \{1,\ldots,k\}$, pebble~$i$ behaves as follows.
\begin{itemize}
\item
After pebble~$i$ moves right and $i> 1$, 
then pebble~$(i-1)$ is immediately placed
(in position 0 reading the left end-marker $\triangleleft$).
\item
If $i < k$, pebble~$i$ is lifted only when
it reaches the right-end marker $\triangleright$
of the input.
\item
Immediately after pebble~$i$ is lifted, pebble~$(i+1)$ moves right.
\end{itemize}
We also assume that in the automaton $\A$ 
only pebble~$k$ can enter a final state and
it may do so only after it reads the right-end marker $\triangleright$
of the input.

We define the following integers:
$\beta_0 = 1$, $\beta_1 = |Q|$, and for $i \geq 2$,\footnote{$!$ denotes factorial.}
\[
\beta_{i}  =  |Q|! \times \beta_{i-1}!
\]
For the rest of this subsection and the next,
we fix the integers $k$ and $m$,
where $k$ is the number of pebbles of $\A$ and $m=\beta_{k+1}$.

We define the following graph $G_{n_k,m} = (V_{n_k,m},E_{n_k,m})$.
The set $V_{k,m}$ consists of the following vertices.
\begin{itemize}
\item
$a_0,a_1,\ldots,a_{n_k}$;
\item
$b_0,b_1,\ldots,b_{n_k-1}$;
\item
$c_{1,i},\ldots,c_{n_k-1,i}$, for each $i=1,\ldots,m-1$; and
\item
$d_{1,i},\ldots,d_{n_k-1,i}$, for each $i=1,\ldots,m-1$,
\end{itemize}
where $a_0,\ldots,a_{n_k}, b_0,\ldots,b_{n_k-1},
c_{1,1},\ldots,c_{n_k-1,m-1},d_{1,1},\ldots,d_{n_k-1,m-1}$
are all different.
The set $E_{k,m}$ consists of the following edges.
\begin{itemize}
\item
$(a_0,a_1),(a_1,a_2),\ldots,(a_{n_k-1},a_{n_k})$;
\item
$(b_0,b_1),(b_1,b_2),\ldots,(b_{n_k-2},b_{n_k}-1)$; 
\item
$(c_{1,i},c_{2,i}),(c_{2,i},c_{3,i}),\ldots,(c_{n_k-2,i},c_{n_k-1,i})$, 
for each $i=1,\ldots,m-1$; and
\item
$(d_{1,i},d_{2,i}),(d_{2,i},d_{3,i}),\ldots,(d_{n_k-2,i},d_{n_k-1,i})$, 
for each $i=1,\ldots,m-1$.
\end{itemize}
Figure~1 below illustrates the graph $G_{n_k,m}$.

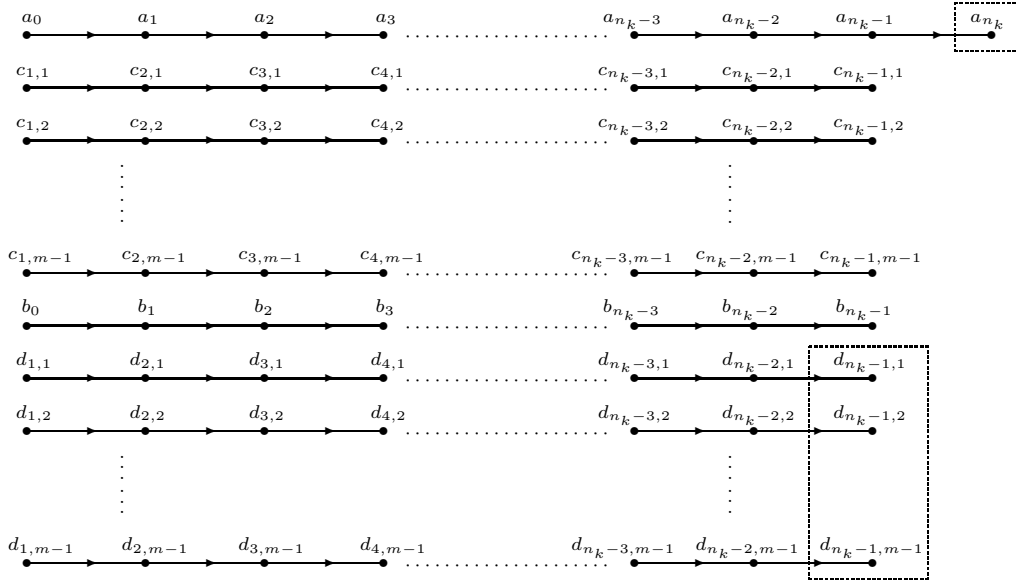
\begin{figure}[!ht]
\label{real graph}
\begin{picture}(350,250)(-12,-125)

{\scriptsize

\multiput(-10,100)(45,0){3}{\vector(1,0){27}}
\multiput(-10,100)(45,0){3}{\line(1,0){45}}
\multiput(-10,100)(45,0){4}{\circle*{3}}
\put(-12,105){$a_0$}
\put(32,105){$a_1$}
\put(76,105){$a_2$}
\put(122,105){$a_3$}
\put(133,97.5){$\cdots\cdots\cdots\cdots\cdots\cdots\cdots$}
\multiput(220,100)(45,0){3}{\vector(1,0){27}}
\multiput(220,100)(45,0){3}{\line(1,0){45}}
\multiput(220,100)(45,0){4}{\circle*{3}}
\put(208,105){$a_{{n_k}-3}$}
\put(253,105){$a_{{n_k}-2}$}
\put(296,105){$a_{{n_k}-1}$}
\put(347,105){$a_{n_k}$}

\multiput(-10,80)(45,0){3}{\vector(1,0){27}}
\multiput(-10,80)(45,0){3}{\line(1,0){45}}
\multiput(-10,80)(45,0){4}{\circle*{3}}
\put(-14,85){$c_{1,1}$}
\put(29,85){$c_{2,1}$}
\put(74,85){$c_{3,1}$}
\put(120,85){$c_{4,1}$}
\put(133,77.5){$\cdots\cdots\cdots\cdots\cdots\cdots\cdots$}
\multiput(220,80)(45,0){2}{\vector(1,0){27}}
\multiput(220,80)(45,0){2}{\line(1,0){45}}
\multiput(220,80)(45,0){3}{\circle*{3}}
\put(206,85){$c_{{n_k}-3,1}$}
\put(253,85){$c_{{n_k}-2,1}$}
\put(295,85){$c_{{n_k}-1,1}$}

\multiput(-10,60)(45,0){3}{\vector(1,0){27}}
\multiput(-10,60)(45,0){3}{\line(1,0){45}}
\multiput(-10,60)(45,0){4}{\circle*{3}}
\put(-14,65){$c_{1,2}$}
\put(29,65){$c_{2,2}$}
\put(74,65){$c_{3,2}$}
\put(120,65){$c_{4,2}$}
\put(133,57.5){$\cdots\cdots\cdots\cdots\cdots\cdots\cdots$}
\multiput(220,60)(45,0){2}{\vector(1,0){27}}
\multiput(220,60)(45,0){2}{\line(1,0){45}}
\multiput(220,60)(45,0){3}{\circle*{3}}
\put(206,65){$c_{{n_k}-3,2}$}
\put(253,65){$c_{{n_k}-2,2}$}
\put(295,65){$c_{{n_k}-1,2}$}

\multiput(25,42)(0,-13){2}{$\vdots$}
\multiput(255,42)(0,-13){2}{$\vdots$}

\multiput(-10,10)(45,0){3}{\vector(1,0){27}}
\multiput(-10,10)(45,0){3}{\line(1,0){45}}
\multiput(-10,10)(45,0){4}{\circle*{3}}
\put(-17,15){$c_{1,m-1}$}
\put(26,15){$c_{2,m-1}$}
\put(70,15){$c_{3,m-1}$}
\put(116,15){$c_{4,m-1}$}
\put(133,7.5){$\cdots\cdots\cdots\cdots\cdots\cdots\cdots$}
\multiput(220,10)(45,0){2}{\vector(1,0){27}}
\multiput(220,10)(45,0){2}{\line(1,0){45}}
\multiput(220,10)(45,0){3}{\circle*{3}}
\put(196,15){$c_{{n_k}-3,m-1}$}
\put(243,15){$c_{{n_k}-2,m-1}$}
\put(290,15){$c_{{n_k}-1,m-1}$}

\multiput(-10,-10)(45,0){3}{\vector(1,0){27}}
\multiput(-10,-10)(45,0){3}{\line(1,0){45}}
\multiput(-10,-10)(45,0){4}{\circle*{3}}
\put(-12,-5){$b_0$}
\put(32,-5){$b_1$}
\put(76,-5){$b_2$}
\put(122,-5){$b_3$}
\put(133,-12.5){$\cdots\cdots\cdots\cdots\cdots\cdots\cdots$}
\multiput(220,-10)(45,0){2}{\vector(1,0){27}}
\multiput(220,-10)(45,0){2}{\line(1,0){45}}
\multiput(220,-10)(45,0){3}{\circle*{3}}
\put(208,-5){$b_{{n_k}-3}$}
\put(253,-5){$b_{{n_k}-2}$}
\put(296,-5){$b_{{n_k}-1}$}

\multiput(-10,-30)(45,0){3}{\vector(1,0){27}}
\multiput(-10,-30)(45,0){3}{\line(1,0){45}}
\multiput(-10,-30)(45,0){4}{\circle*{3}}
\put(-14,-25){$d_{1,1}$}
\put(29,-25){$d_{2,1}$}
\put(74,-25){$d_{3,1}$}
\put(120,-25){$d_{4,1}$}
\put(133,-32.5){$\cdots\cdots\cdots\cdots\cdots\cdots\cdots$}
\multiput(220,-30)(45,0){2}{\vector(1,0){27}}
\multiput(220,-30)(45,0){2}{\line(1,0){45}}
\multiput(220,-30)(45,0){3}{\circle*{3}}
\put(206,-25){$d_{{n_k}-3,1}$}
\put(253,-25){$d_{{n_k}-2,1}$}
\put(295,-25){$d_{{n_k}-1,1}$}

\multiput(-10,-50)(45,0){3}{\vector(1,0){27}}
\multiput(-10,-50)(45,0){3}{\line(1,0){45}}
\multiput(-10,-50)(45,0){4}{\circle*{3}}
\put(-14,-45){$d_{1,2}$}
\put(29,-45){$d_{2,2}$}
\put(74,-45){$d_{3,2}$}
\put(120,-45){$d_{4,2}$}
\put(133,-52.5){$\cdots\cdots\cdots\cdots\cdots\cdots\cdots$}
\multiput(220,-50)(45,0){2}{\vector(1,0){27}}
\multiput(220,-50)(45,0){2}{\line(1,0){45}}
\multiput(220,-50)(45,0){3}{\circle*{3}}
\put(206,-45){$d_{{n_k}-3,2}$}
\put(253,-45){$d_{{n_k}-2,2}$}
\put(295,-45){$d_{{n_k}-1,2}$}

\multiput(25,-68)(0,-13){2}{$\vdots$}
\multiput(255,-68)(0,-13){2}{$\vdots$}

\multiput(-10,-100)(45,0){3}{\vector(1,0){27}}
\multiput(-10,-100)(45,0){3}{\line(1,0){45}}
\multiput(-10,-100)(45,0){4}{\circle*{3}}
\put(-17,-95){$d_{1,m-1}$}
\put(26,-95){$d_{2,m-1}$}
\put(70,-95){$d_{3,m-1}$}
\put(116,-95){$d_{4,m-1}$}
\put(133,-102.5){$\cdots\cdots\cdots\cdots\cdots\cdots\cdots$}
\multiput(220,-100)(45,0){2}{\vector(1,0){27}}
\multiput(220,-100)(45,0){2}{\line(1,0){45}}
\multiput(220,-100)(45,0){3}{\circle*{3}}
\put(196,-95){$d_{{n_k}-3,m-1}$}
\put(243,-95){$d_{{n_k}-2,m-1}$}
\put(290,-95){$d_{{n_k}-1,m-1}$}

\put(341,94){\dashbox(25,18){}}
\put(286,-106){\dashbox(45,88){}}

}
\end{picture}
\caption{The full graph is the graph $G_{n_k,m}$.
The graph depicted by $\overline{w}(n_k,m)$ is also the above graph but
without the nodes inside the dashed box and the edges adjacent to them.}
\end{figure}

Now consider the following word $w(n_k,m)$:
\begin{equation}
\label{eq: w_{k,m}}
w(n_k,m)  = 
a_0a_1 C_1 b_0b_1  D_1  
\cdots\cdots 
a_{n_k-2}a_{n_k-1}  C_{n_k-1}  b_{n_k-2}b_{n_k-1}  D_{n_k-1} a_{n_k-1}a_{n_k}
\end{equation}
where for each $i=0,1,\ldots,n_k-2$,
\begin{itemize}
\item
$C_i = c_{i,1}c_{i+1,1}\ \cdots \ c_{i,m-1}c_{i+1,m-1} $;
\item
$D_i = d_{i,1}d_{i+1,1}\ \cdots \ d_{i,m-1}d_{i+1,m-1} $.
\end{itemize}
This word $w(n_k,m)$ induces the graph $G_{n_k,m}$,
that is, $G_{w(n_k,m)}= G_{n_k,m}$
and $s_{w(n_k,m)}=a_0$ and $t_{w(n_k,m)}=a_{n_k}$.

Now let 
\begin{equation}
\label{eq: bar w_{k,m}}
\overline{w}(n_k,m) =
a_0a_1  C_1  b_0b_1  D_1  
\cdots\cdots
a_{n_k-2}a_{n_k-1}  C_{n_k-1}  b_{n_k-2}b_{n_k-1}.
\end{equation}
That is, the word $\overline{w}(n_k,m)$
is obtained by deleting the suffix $D_{n_k-1} a_{n_k-1}a_{n_k}$
from $w(n_k,m)$.

The graph $G_{\overline{w}(n_k,m)}$ is also illustrated in the graph in Figure~1,
the graph $G_{\overline{w}(n_k,m)}$ is without
the nodes inside the dashed box and the edges adjacent to them.

and note that $s_{\overline{w}(n_k,m)}=a_0$ and $t_{\overline{w}(n_k,m)}=b_{n_k-1}$.
Obviously, $w(n_k,m) \in \cR_{n_k}$, 
while $\overline{w}(n_k,m) \notin \cR$.

To prove Lemma~\ref{l: R_n_k not in PA_k},
we are going to prove the following proposition.
\begin{proposition}
\label{p: accept or reject both}
The automaton $\A$ either accepts both $w(n_k,m)$ and $\overline{w}(n_k,m)$,
or rejects both $w(n_k,m)$ and $\overline{w}(n_k,m)$.
\end{proposition}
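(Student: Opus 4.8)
The plan is to prove Proposition~\ref{p: accept or reject both} by comparing the unique runs of $\A$ on $w = w(n_k,m)$ and on $\overline{w} = \overline{w}(n_k,m)$ (unique because, by Theorem~\ref{t: equivalence}, we may take $\A$ one-way deterministic) and showing that they reach the same verdict at the moment pebble~$k$ scans the right end-marker. The starting observation is that every transition of $\A$ inspects the input only through the equality patterns $P$ and $V$ of Definition~\ref{d: pa}: which higher-numbered pebbles share the head's \emph{position}, and which read the \emph{same symbol}. Consequently the behaviour of $\A$ is invariant under any symbol-renaming automorphism of the input graph. In $G_{n_k,m}$ the $m-1$ copies of the $c$-paths and the $m-1$ copies of the $d$-paths carry pairwise disjoint fresh symbols, so every permutation of these copies is such an automorphism; from $\A$'s point of view the copies are completely interchangeable.

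The engine of the argument is an induction on the pebble level $i = 1,\ldots,k$ phrased through the notion of a \emph{profile}: with the outer pebbles $i+1,\ldots,k$ frozen, the left-to-right sweep of pebble~$i$ (which, by the normalisation, recursively drives pebbles $i-1,\ldots,1$) realises a transformation on states that depends on the input only through the equality patterns seen along the sweep. I would show that the number of distinct level-$i$ profiles is at most $\beta_i$, with the factorials in $\beta_i = |Q|!\times\beta_{i-1}!$ reflecting that a level-$i$ profile is assembled from the $|Q|$ possible entry states together with the $\beta_{i-1}$ profiles available one level below. Since $m = \beta_{k+1}$ dominates every $\beta_i$, a pigeonhole over the $m-1$ copies produces, at each level and under every relevant freezing of the outer pebbles, two copies on which $\A$ has identical profiles; combined with the renaming invariance above, this upgrades to an honest symmetry of the observable computation that swaps those copies while fixing everything the current pebbles have pinned down.

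Finally I would use this symmetry to transport the run. The words $w$ and $\overline{w}$ differ only in the deleted suffix $D_{n_k-1}\,a_{n_k-1}a_{n_k}$, so their induced graphs differ only near the target. Because at any instant the $k$ pebbles single out at most $k$ of the abundant interchangeable copies, the computation along the long $a$-path looks, locally, exactly like the computation along a decoy path that never reaches the target: certifying the genuine $a_0\!\to\!a_{n_k}$ path of length $n_k = 2n_{k-1}+2$ would require pebble~$k$ to fix a single midpoint and have the remaining $k-1$ pebbles certify two subpaths, each of length roughly $n_{k-1}+1 = 2^k-1$, far beyond the distance $2^{k-1}-1$ that $k-1$ pebbles can handle. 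The symmetry therefore lets me match the run on $w$ to the run on $\overline{w}$ so that when pebble~$k$ reads the right end-marker the two are in the same state, forcing equal verdicts. The main obstacle is exactly this last transport step: turning the level-by-level, freezing-by-freezing profile coincidences into one globally coherent correspondence between the entire $k$-fold nested runs, correctly tracking the stack discipline of placement and lifting and the recursive dependence of each sweep on the one below. That bookkeeping is the substance of the subclaim established in Subsection~\ref{ss: proof subclaim}.
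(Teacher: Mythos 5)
Your proposal has a genuine gap, in fact two. First, the claimed symmetry is unsound at the level at which $\A$ actually operates. A pebble automaton runs on the \emph{word} $w(n_k,m)$, not on the graph $G_{n_k,m}$: its transitions consult position equality (the set $P$) as well as symbol equality (the set $V$), and the head advances through a fixed linearization. A permutation of the $c$-copies or $d$-copies is an automorphism of the graph, but on the word it is a nontrivial permutation of \emph{positions} (the copies are interleaved inside each block $C_j$, $D_j$), and nothing forces a run to be invariant under that -- a PA can, for instance, count positions modulo a fixed number and thereby distinguish words inducing isomorphic graphs. What is true is only invariance under injective renamings of $\fD$ applied pointwise, which is much weaker. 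The paper never invokes copy-swapping symmetries; instead it \emph{earns} the interchangeability you want through Claim~\ref{cl: proof subclaim B one}: a pigeonhole over positions in a long pebble-free segment shows the state sequence of a sweep is eventually periodic with period $\nu$ dividing $m$, proved by a simultaneous induction with the shift-invariance Claim~\ref{cl: proof subclaim B two}. Your profile-counting paragraph gestures at the right quantities ($\beta_i$, pigeonhole over $m$), but periodicity of states along the word, not identity of profiles across graph-isomorphic copies, is the mechanism that actually compiles.

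Second, the step you yourself flag as the ``main obstacle'' is not a bookkeeping afterthought -- it is the proposition, and your sketch of it is invalid. The closing argument (``certifying the genuine path would require pebble~$k$ to fix a single midpoint and have $k-1$ pebbles certify two subpaths'') presupposes that $\A$ implements Savitch's divide-and-conquer strategy; a lower bound cannot assume anything about the automaton's strategy, and the paper argues nothing of the sort. Its transport mechanism is concrete and different: $\overline{w}(n_k,m)$ is exactly the prefix of $w(n_k,m)$ with the length-$2m$ suffix $D_{n_k-1}a_{n_k-1}a_{n_k}$ removed, and the \emph{compatibility} relation aligns configurations by requiring pebbles in the common prefix to sit at equal positions and pebbles near the right end to sit at positions differing by exactly $2m$; Claim~\ref{cl: compatibility preserved} shows this is preserved when a new pebble is placed (using $n_i = 2n_{i-1}+2$, which is where the doubling enters -- not via a midpoint argument), and Claim~\ref{cl: main claim} then proves $p_{N+1}=r_{M+1}$ by induction on the pebble level in three stages: lockstep equality of states along the common prefix, a jump of the state across $2m$ positions via the periodicity $\nu \mid m$ of Claim~\ref{cl: proof subclaim B one}, and lockstep with shift $2m$ to the end. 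Without identifying this $2m$-shift alignment and proving the two periodicity/shift claims, your argument does not close.
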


The proof is rather complicated.
It consist of five claims and their interdependence is illustrated below.\footnote{We 
are going to prove Claims~\ref{cl: proof subclaim B one} and~\ref{cl: proof subclaim B two}
by induction simultaneously. This will be made precise in Subsection~\ref{ss: proof subclaim}.}

\begin{picture}(350,160)(-200,-110)

{\footnotesize

\put(-83,10){\line(1,0){67}}
\put(-83,10){\line(0,1){15}}
\put(-83,25){\line(1,0){67}}
\put(-16,10){\line(0,1){15}}
\put(-80,15){Proposition~\ref{p: accept or reject both}}

\put(-49,-13){\vector(0,1){21}}
\put(-69,-30){\line(1,0){40}}
\put(-69,-30){\line(0,1){15}}
\put(-69,-15){\line(1,0){40}}
\put(-29,-30){\line(0,1){15}}
\put(-64,-25){Claim~\ref{cl: main claim}}

\put(23,-22.5){\vector(-1,0){50}}
\put(27,-20){Proof by induction}
\put(27,-28){The basis is proved as Claim~\ref{cl: basis main claim}}

\put(-107,-22.5){\vector(1,0){36}}
\put(-149,-30){\line(1,0){40}}
\put(-149,-30){\line(0,1){15}}
\put(-149,-15){\line(1,0){40}}
\put(-109,-30){\line(0,1){15}}
\put(-144,-25){Claim~\ref{cl: compatibility preserved}}

\put(-97,-53){\vector(2,1){44}}
\put(-120,-70){\line(1,0){40}}
\put(-120,-70){\line(0,1){15}}
\put(-120,-55){\line(1,0){40}}
\put(-80,-70){\line(0,1){15}}
\put(-115,-65){Claim~\ref{cl: proof subclaim B one}}

\put(-78,-62.5){\vector(1,0){56}}
\put(-78,-62.5){\vector(-1,0){0}}
\put(-50,-74){\vector(0,1){10}}
\put(-65,-80){Proof by}
\put(-90,-88){simultaneous induction}

\put(-1,-53){\vector(-2,1){44}}
\put(-20,-70){\line(1,0){40}}
\put(-20,-70){\line(0,1){15}}
\put(-20,-55){\line(1,0){40}}
\put(20,-70){\line(0,1){15}}
\put(-15,-65){Claim~\ref{cl: proof subclaim B two}}

}
\end{picture}

In the proof we will need quite a number of notions which, 
for the sake of readability, are listed below one-by-one
before we define them properly.
\begin{itemize}
\item
The notions of $K(l)$ and $L(l)$.
\item
The notion of successor of a pebble assignment.
\item
The notion of compatibility between two pebble assignments.
\end{itemize}

\paragraph*{\bf The notions of $K(l)$ and $L(l)$}
For $l \in \{0,1,\ldots,n_k-1\}$,
we define the integers $K(l)$ and $L(l)$
which are illustrated as follows.

\begin{picture}(350,110)(-200,-70)

{\footnotesize

\put(-160,18){\line(0,-1){6}}
\put(-160,15){\vector(-1,0){0}}
\put(-57,20){of length $L(l)$}
\put(98,15){\vector(1,0){0}}
\put(98,18){\line(0,-1){6}}
\put(-160,15){\line(1,0){258}}

\put(-207,0){$w(n_k,m)=$}
\put(-159,0){$a_0a_1 \ C_1 \ b_0b_1 \ D_1
\ \cdots\cdots\cdots \
C_{l-1} \ b_{l-2}b_{l-1} \ D_{l-1} \ a_{l-1}a_{l} \ C_{l} \ b_{l-1}b_{l} \ D_{l}
\ a_la_{l+1} \cdots\cdots$}

\put(-160,-8){\line(0,-1){6}}

\put(-160,-11){\vector(-1,0){0}}
\put(-79,-23){of length $K(l)$}
\put(49,-11){\vector(1,0){0}}

\put(49,-8){\line(0,-1){6}}

\put(49,-11){\vector(-1,0){0}}
\put(59,-20){of length}
\put(51,-30){$4(m-1)+2$}
\put(98,-11){\vector(1,0){0}}

\put(98,-8){\line(0,-1){6}}
\put(-160,-11){\line(1,0){258}}

\put(-50,-48){of length $K(l+1)$}
\put(-160,-50){\line(1,0){286}}
\put(-160,-50){\vector(-1,0){0}}
\put(-160,-53){\line(0,1){6}}
\put(126,-53){\line(0,1){45}}
\put(126,-50){\vector(1,0){0}}

}
\end{picture}

Formally, for $l \in \{0,1,\ldots,n_k\}$, 
\begin{eqnarray*}
K(l) 
& = &
\left\{
\begin{array}{ll}
 0, & \quad\mbox{if} \ l=0
 \\
 4m(l-1) + 2, & \quad\mbox{if} \ l \geq 1
\end{array}
\right.
\end{eqnarray*}
and for $l \in \{0,1,\ldots,n_k\}$,
\begin{eqnarray*}
L(l) 
& = &
\left\{
\begin{array}{ll}
 K(l+1)-2, & \quad\mbox{if} \ l \leq n_k-1
 \\
 K(n_k), & \quad\mbox{otherwise}.
\end{array}
\right.
\end{eqnarray*}
In particular, $K(n_k)$ is precisely the length of the word $w(n_k,m)$
and $L(0)=0$.

\paragraph*{\bf The notion of successor of a pebble assignment}

Let $\theta$ be an assignment of pebbles~$i,i+1,\ldots,k$ of $\A$ on a word $w$.
That is, $\theta$ is a function from $\{i,i+1,\ldots,k\}$ to $\{0,1,\ldots,|w|+1\}$.
(Recall that positions~$0$ and $|w|+1$ contain 
the left- and right-end markers $\triangleleft$ and $\triangleright$, respectively.)
If $0\leq \theta(i)\leq |w|$,
we define $Succ_i(\theta) = \theta'$, where
for each $j \in \{i,i+1,\ldots,k\}$,
$$
\theta'(j) =
\left\{
\begin{array}{ll}
\theta (j) & \quad\mbox{if} \ j \geq i+1
\\
\theta(i)+1 & \quad\mbox{if} \ j = i
\end{array}
\right.
$$

\paragraph*{\bf The notion of compatibility between two configurations}
Let $i \geq 1$ and $[i,q,\theta]$ and $[i,\overline{q},\overline{\theta}]$
be configurations of $\A$ on $w(n_k,m)$ and $\overline{w}(n_k,m)$, respectively,
when pebble~$i$ is the head pebble.
For an integer $l \in \{0,1,\ldots,n_k\}$,
we say that the configurations $[i,q,\theta]$ and $[i,\overline{q},\overline{\theta}]$
are {\em compatible} with respect to $l$,
if
\begin{itemize}
\item
$q=\overline{q}$;
\end{itemize}
and for each $j \in \{i,\ldots,k\}$,
\begin{itemize}
\item
either $\theta(j) \leq K(l)$ or $\theta(j) \geq L(l+n_{i})$;
\item
either $\overline{\theta}(j)\leq K(l)$ or $\overline{\theta}\geq L(l+n_{i})-2m$;
\item
if $\theta(j) \leq K(l)$, then $\overline{\theta}(j)\leq K(l)$
and $\theta(j) = \overline{\theta}(j)$;
\item
if $\overline{\theta}(j) \leq K(l)$, then $\theta(j)\leq K(l)$
and $\theta(j) = \overline{\theta}(j)$;
\item
if $\theta(j) \geq L(l+n_{i})$, then $\overline{\theta}(j)\geq L(l+n_{i})-2m$
and $\theta(j) = \overline{\theta}(j)+2m$;
\item
if $\overline{\theta}(j) \geq L(l+n_{i})-2m$, then $\theta(j)\geq L(l+n_{i})$
and $\theta(j) = \overline{\theta}(j)+2m$.
\end{itemize}
Below we give an illustration of the compatibility of 
two configurations of an $8$-PA on $w(n_8,m)$ and $\overline{w}(n_8,m)$,
respectively, with respect to $l$. The index $\ell$ is $l+n_5$.

\begin{picture}(350,120)(-200,-60)

{\footnotesize

\put(-104,47){\scriptsize $K(l)$}
\put(-96,45){\line(0,-1){70}}

\put(-94,28){$\overbrace{\hspace{6.0 cm}}^{\textrm{\scriptsize No pebble here}}$}

\put(70,47){\scriptsize $L(\ell)$}
\put(79,45){\line(0,-1){30}}

\put(-200,13.5){$w(n_k,m)=$}
\put(-153,17){\line(0,-1){4}}
\put(147,17){\line(0,-1){4}}
\put(-153,15){\line(1,0){300}}

\put(-145,20.5){\circle{8}}
\put(-147,18){\scriptsize 5}
\put(-130,20.5){\circle{8}}
\put(-132,18){\scriptsize 8}

\put(-120,17){\scriptsize $a_{l-1}a_l \cdots a_la_{l+1} 
\cdots\cdots b_{\ell-2}b_{\ell-1} \cdots a_{\ell-1}a_{\ell}\cdots 
b_{\ell-1}b_{\ell}\cdots a_{\ell}a_{\ell+1} $}

\put(120,20.5){\circle{8}}
\put(118,18){\scriptsize 7}
\put(140,20.5){\circle{8}}
\put(138,18){\scriptsize 6}

\put(-92,-15){$\underbrace{\hspace{4.7 cm}}_{\textrm{\scriptsize No pebble here}}$}
\put(43,-25){\line(0,1){25}}
\put(25,-30){\scriptsize $L(\ell)-2m$}

\put(-200,-11.5){$\overline{w}(n_k,m)=$}
\put(-153,-8){\line(0,-1){4}}
\put(117,-8){\line(0,-1){4}}
\put(-153,-10){\line(1,0){270}}

\put(-145,-4.5){\circle{8}}
\put(-147,-7){\scriptsize 5}
\put(-130,-4.5){\circle{8}}
\put(-132,-7){\scriptsize 8}

\put(-120,-8){\scriptsize $a_{l-1}a_l \cdots a_la_{l+1} 
\cdots\cdots b_{\ell-2}b_{\ell-1} \cdots a_{\ell-1}a_{\ell}\cdots 
b_{\ell-1}b_{\ell} $}

\put(90,-4.5){\circle{8}}
\put(88,-7){\scriptsize 7}

\put(122,7){\scriptsize $2m$}
\put(140,5){\vector(-1,0){29}}
\put(140,3){\line(0,1){4}}
\put(110,-4.5){\circle{8}}
\put(108,-7){\scriptsize 6}

\put(-170,-45){\circle{8}}
\put(-172,-47.5){\scriptsize 5}
\put(-158,-45){\circle{8}}
\put(-160,-47.5){\scriptsize 6}
\put(-146,-45){\circle{8}}
\put(-148,-47.5){\scriptsize 7}
\put(-134,-45){\circle{8}}
\put(-136,-47.5){\scriptsize 8}
\put(-125,-47.5){\scriptsize are pebbles~5, 6, 7, and 8, respectively.}

}
\end{picture}

\begin{claim}
\label{cl: compatibility preserved}
Suppose that $[i,q,\theta]$ and $[i,q,\overline{\theta}]$
are configurations of $\A$ on $w(n_k,m)$ and $\overline{w}(n_k,m)$, respectively.
If $[i,q,\theta]$ and $[i,q,\overline{\theta}]$ are compatible
with respect to some $l \in \{0,\ldots,n_k\}$,
then
\begin{enumerate}
\item
for all $h \in \{0,\ldots,K(l+n_{i-1}+2)\}$ and for all $p \in Q$,
the configuration $[i-1,p,\theta\cup\{(i-1,h)\}]$ (on $w(n_k,m)$)
and the configuration $[i-1,p,\overline{\theta}\cup\{(i-1,h)\}]$ (on $\overline{w}(n_k,m)$)
are compatible with respect to $l + n_{i-1}+2$;
\item
for all $h \in \{L(l+n_{i-1}),\ldots,K(n_k)\}$ and for all $p \in Q$,
the configuration $[i-1,p,\theta\cup\{(i-1,h)\}]$ (on $w(n_k,m)$)
and the configuration $[i-1,p,\overline{\theta}\cup\{(i-1,h-2m)\}]$ (on $\overline{w}(n_k,m)$)
are compatible with respect to $l$.
\end{enumerate}
\end{claim}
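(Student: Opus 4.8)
The plan is to treat Claim~\ref{cl: compatibility preserved} as a purely combinatorial invariant-preservation statement: nothing about the dynamics of $\A$ is used, only that the pebble positions are constrained to lie in prescribed regions of the two words. First I would unwind the definition of compatibility w.r.t. $l$ with head pebble $i$ into the following geometric picture. The open interval $(K(l),L(l+n_i))$ on $w(n_k,m)$ (resp. $(K(l),L(l+n_i)-2m)$ on $\overline{w}(n_k,m)$) is a \emph{forbidden zone} containing none of the positions $\theta(i),\dots,\theta(k)$ (resp. $\overline{\theta}(i),\dots,\overline{\theta}(k)$); the remaining pebbles split into a \emph{left group} (positions $\le K(l)$, where $\theta$ and $\overline{\theta}$ agree exactly) and a \emph{right group} (positions $\ge L(l+n_i)$ on $w$, matched by positions $\ge L(l+n_i)-2m$ on $\overline{w}$ via the uniform shift $\theta(j)=\overline{\theta}(j)+2m$ that records the deletion of the length-$2m$ suffix). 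I would also record that $K$ and $L$ are strictly increasing with $K(l+1)=K(l)+4m$ and $L(l)=K(l+1)-2$, together with the implicit range constraint $l+n_i\le n_k$ under which $L(l+n_i)$ is defined.

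The heart of part~(1) is the single identity $(l+n_{i-1}+2)+n_{i-1}=l+n_i$, which holds precisely because $n_i=2n_{i-1}+2$. Writing $l'=l+n_{i-1}+2$ for the new index, this says $l'+n_{i-1}=l+n_i$, so the \emph{right} boundary $L(l'+n_{i-1})=L(l+n_i)$ of the new forbidden zone (for head $i-1$ at index $l'$) coincides with the old one, while the left boundary merely advances from $K(l)$ to $K(l')\ge K(l)$. Hence the new forbidden interval $(K(l'),L(l'+n_{i-1}))$ is contained in the old $(K(l),L(l+n_i))$ and still meets no old pebble; the left-group pebbles stay $\le K(l)\le K(l')$ (exact match preserved) and the right-group pebbles stay $\ge L(l+n_i)=L(l'+n_{i-1})$ (shift preserved). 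The freshly placed pebble~$i-1$ sits at the same position $h\le K(l')$ on both words, so it joins the left group and matches exactly. Running through the six defining conditions with these observations yields compatibility w.r.t. $l'$.

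For part~(2) the index is unchanged and only the head drops from $i$ to $i-1$. Since $n_{i-1}<n_i$ and $L$ is increasing, the new forbidden zone $(K(l),L(l+n_{i-1}))$ is again contained in the old $(K(l),L(l+n_i))$, so the old pebbles remain validly placed: a right-group pebble with $\theta(j)\ge L(l+n_i)\ge L(l+n_{i-1})$ stays in the right region and keeps its $2m$-shift, and the left group is untouched. The new pebble~$i-1$ is placed at $h\ge L(l+n_{i-1})$ on $w$ and at $h-2m$ on $\overline{w}$, so $\theta'(i-1)=\overline{\theta}'(i-1)+2m$ puts it squarely in the right group for index $l$; here I would also check the boundary conditions $0\le h-2m\le |\overline{w}|=K(n_k)-2m$, using $L(l+n_{i-1})=4m(l+n_{i-1})\ge 2m$ (valid since $i\ge 2$ forces $n_{i-1}\ge n_1=2$) and $h\le K(n_k)$. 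Together parts~(1) and~(2) cover every position $h\in\{0,\dots,K(n_k)\}$ that pebble~$i-1$ can occupy, because $K(l+n_{i-1}+2)>L(l+n_{i-1})$, so the two ranges overlap.

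I expect the only genuinely delicate point to be the calibration identity $(l+n_{i-1}+2)+n_{i-1}=l+n_i$: it is what makes the right boundary invariant in part~(1), and exactly this invariance is what will let the later simultaneous induction (Claims~\ref{cl: proof subclaim B one} and~\ref{cl: proof subclaim B two}) close up. Everything else is routine once the forbidden-zone picture is fixed, the main care being to keep the $2m$-shift bookkeeping on $\overline{w}$ consistent across the six conditions and to verify that the shifted positions remain legal indices of $\overline{w}(n_k,m)$. The equality of the state components is immediate, since the same $p\in Q$ is used on both words.
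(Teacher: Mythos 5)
Your proof is correct and follows essentially the same route as the paper: the paper's own argument is exactly the ``forbidden-zone'' picture, hinging on the identity $(l+n_{i-1}+2)+n_{i-1}=l+n_i$ (i.e.\ $n_i=2n_{i-1}+2$), which makes the right boundary $L(l+n_i)$ invariant in part~(1) and lets part~(2) follow from monotonicity of $L$. You merely spell out the six compatibility conditions and the $2m$-shift boundary checks that the paper leaves to its two figures.
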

\begin{proof}
It follows from the fact that $n_{i} = 2n_{i-1} + 2$. 
We prove it by picture here.
For case~(1), the proof is as follows. Let $l' = l + n_{i}$.

\begin{picture}(350,145)(-200,-75)

{\footnotesize

\put(-200,13.5){$w(n_k,m)=$}
\put(-153,17){\line(0,-1){4}}
\put(147,17){\line(0,-1){4}}
\put(-153,15){\line(1,0){300}}

\put(-120,19){\scriptsize $a_{l-1}a_l$}

\put(-104,53){\scriptsize $K(l)$}
\put(-96,50){\line(0,-1){70}}

\put(14,28){$\overbrace{\hspace{2.9 cm}}^{\textrm{\scriptsize No pebble here}}$}

\put(-20,53){\scriptsize $K(l+n_{i-1}+2)$}
\put(10,50){\line(0,-1){70}}

\put(-65,19){\scriptsize $a_{l+n_{i-1}+1}a_{l+n_{i-1}+2}$}

\put(58,19){\scriptsize $b_{l'-1}b_{l'}\cdots$}

\put(89,53){\scriptsize $L(l')$}
\put(98,50){\line(0,-1){35}}

\put(99,19){\scriptsize $a_{l'}a_{l'+1}$}

\put(-148,-20){$\underbrace{\hspace{5.4 cm}}_{\textrm{\scriptsize Pebble}\ i-1 \ \textrm{is here}}$}

\put(12,-20){$\underbrace{\hspace{1.5 cm}}_{\textrm{\scriptsize No pebble}}$}
\put(20,-43){\scriptsize here}
\put(57,-55){\line(0,1){55}}
\put(39,-60){\scriptsize $L(l')-2m$}

\put(-200,-11.5){$\overline{w}(n_k,m)=$}
\put(-153,-8){\line(0,-1){4}}
\put(117,-8){\line(0,-1){4}}
\put(-153,-10){\line(1,0){270}}

\put(-120,-6){\scriptsize $a_{l-1}a_l$}
\put(-65,-6){\scriptsize $a_{l+n_{i-1}+1}a_{l+n_{i-1}+2}$}
\put(58,-6){\scriptsize $b_{l'-1}b_{l'}$}

}
\end{picture}

There is no pebble on the positions between $K(l+n_{i-1}+2)$ and $L(')$
in the word $w(n_k,m)$ as well as on the positions
between $K(l+n_{i-1}+2)$ and $L(')-2m$ in the word $\overline{w}(n_k,m)$
due to the assumption that $[i,q,\theta]$ and $[i,q,\overline{\theta}]$
are compatible with respect to $l$.
Since $l' - (l+n_{i-1}+2) = n_{i-1}$,
the configuration $[i-1,p,\theta\cup\{(i-1,h)\}]$ (on $w(n_k,m)$)
and the configuration $[i-1,p,\overline{\theta}\cup\{(i-1,h)\}]$ (on $\overline{w}(n_k,m)$)
are compatible with respect to $l + n_{i-1}+2$,
for all $h \in \{0,\ldots,K(l+n_{i-1}+2)\}$ and for all $p \in Q$.

For case~(2), the proof is as follows.
We let $l'' = l+ n_{i-1}$.

\begin{picture}(350,145)(-200,-70)

{\footnotesize

\put(-104,57){\scriptsize $K(l)$}
\put(-96,55){\line(0,-1){80}}

\put(-92,28){$\overbrace{\hspace{6.3 cm}}^{\textrm{\scriptsize No pebble here}}$}
\put(95,45){$\textrm{\scriptsize Pebble} \ i-1$}
\put(93,28){$\overbrace{\hspace{1.9 cm}}^{\textrm{\scriptsize is here}}$}

\put(80,57){\scriptsize $L(l'')$}
\put(89,55){\line(0,-1){40}}

\put(-200,13.5){$w(n_k,m)=$}
\put(-153,17){\line(0,-1){4}}
\put(147,17){\line(0,-1){4}}
\put(-153,15){\line(1,0){300}}

\put(-120,19){\scriptsize $a_{l-1}a_l$}
\put(44,19){\scriptsize $b_{l''-1}b_{l''}\cdots$}
\put(90,19){\scriptsize $a_{l''}a_{l''+1}$}

\put(-92,-15){$\underbrace{\hspace{4.7 cm}}_{\textrm{\scriptsize No pebble here}}$}
\put(43,-45){\line(0,1){45}}
\put(25,-50){\scriptsize $L(l'')-2m$}

\put(-200,-11.5){$\overline{w}(n_k,m)=$}
\put(-153,-8){\line(0,-1){4}}
\put(117,-8){\line(0,-1){4}}
\put(-153,-10){\line(1,0){270}}

\put(-120,-6){\scriptsize $a_{l-1}a_l$}
\put(44,-6){\scriptsize $b_{l''-1}b_{l''}$}

\put(46,-15){$\underbrace{\hspace{2.4 cm}}_{\textrm{\scriptsize Pebble}\ i-1 \ \textrm{\scriptsize is here}}$}

}
\end{picture}

There is no pebble on the positions between $K(l)$ and $L('')$
in the word $w(n_k,m)$ as well as on the positions
between $K(l)$ and $L('')-2m$ in the word $\overline{w}(n_k,m)$
due to the assumption that $[i,q,\theta]$ and $[i,q,\overline{\theta}]$
are compatible with respect to $l$.
Hence, case~(2) follows immediately.
This completes the proof of Claim~\ref{cl: compatibility preserved}.
\end{proof}

\begin{remark}
\label{rem: applied transition}
Let $[i,q,\theta]$ and $[i,q,\overline{\theta}]$
be configurations of $\A$ on $w(n_k,m)$ and $\overline{w}(n_k,m)$, respectively
and assume that they are compatible with respect to an integer $l$.
Let $j,j' \in \{i,i+1,\ldots,k\}$ and
let 
\begin{itemize}
\item
$x$ and $y$ denote the symbols seen by pebbles~$j$ and $j'$, respectively,
on $w(n_k,m)$ according to the configuration $\theta$, and 
\item
$\overline{x}$ and $\overline{y}$ denote the symbols seen by pebbles~$j$ and $j'$, respectively,
on $\overline{w}(n_k,m)$ according to the configuration $\overline{\theta}$.
\end{itemize}
Then $x= y$ if and only if $\overline{x}=\overline{y}$.

The reason is as follows.
Since $[i,q,\theta]$ and $[i,q,\overline{\theta}]$
are compatible with respect to $l$,
we have the following four cases.
\begin{itemize}
\item[(a)]
$\theta(j)\leq K(l)$ and $\theta(j') \leq K(l)$.
\\
In this case, $\overline{\theta}(j)=\theta(j)$ and $\overline{\theta}(j')=\theta(j')$
and we immediately have $x=y$ if and only if $\overline{x}=\overline{y}$.
\item[(b)]
$\theta(j)\leq K(l)$ and $\theta(j') \geq L(l+n_i)$.
\\
In this case, $\overline{\theta}(j)=\theta(j)$ and $\overline{\theta}(j')=\theta(j')-2m$.
Now in $w(n_k,m)$ and $\overline{w}(n_k,m)$
each symbol appears at most twice and
they are of distance $4m-2$ apart.
Since $L(l+n_i) - K(l) > 4m-2$, we have $x\neq y$.
Similarly, $L(l+n_i)-2m - K(l) > 4m-2$, hence $\overline{x}=\overline{y}$.
\item[(c)]
$\theta(j)\geq L(l+n_i)$ and $\theta(j') \leq K(l)$.
\\
The proof is similar to case~(b) above.
\item[(d)]
$\theta(j)\geq L(l+n_i)$ and $\theta(j') \geq L(l+n_i)$.
\\
In this case, $\overline{\theta}(j)=\theta(j)-2m$ and $\overline{\theta}(j')=\theta(j')-2m$
and we immediately have $x=y$ if and only if $\overline{x}=\overline{y}$.
\end{itemize}
Now this immediately implies that
for every transition $\alpha \to \beta$ of the automaton $\A$,
it applies to $[i,q,\theta]$ if and only if 
it applies to $[i,q,\overline{\theta}]$.
\end{remark}

The following claim is important.
However, due to the complexity of its proof, 
we postpone it until Subsection~\ref{ss: proof subclaim}.

\begin{claim}
\label{cl: proof subclaim B one}
For each $i \in \{1,\ldots,k\}$, and for every run of $\A$ on $w(n_k,m)$:
\begin{equation}
\label{eq: cl: proof subclaim B one}
[i,p_0,\theta_0] \ \vdash^{\ast}_{\sA,w(n_k,m)} 
[i,p_1,\theta_1] \ \vdash^{\ast}_{\sA,w(n_k,m)}
\ \cdots\cdots   \ \vdash^{\ast}_{\sA,w(n_k,m)} \ [i,p_{N+1},\theta_{N+1}]
\end{equation}
where  
\begin{itemize}
\item
$N= K(n_k)=$ length of $w(n_k,m)$;
\item
$\theta_0(i) = 0$;
\item
$\theta_{N+1}(i) = N+1$;
\item
$\theta_{h+1}=Succ_i(\theta_h)$, for each $h \in \{0,\ldots,N\}$ --
that is, for each $j \in \{i+1,\ldots,k\}$, 
$\theta_0(j) = \cdots = \theta_{N+1}(j)$ and
$\theta_h(i) = h$, for each $h \in \{0,\ldots,N+1\}$;
\end{itemize}
if $l$ is an integer  such that
\begin{enumerate}
\item
if $i=k$, then $l=0$; and
\item
if $i \neq k$, then $l$ is an integer such that
for each $j \in \{i+1,\ldots,k\}$,
either $\theta(j) \leq K(l)$, or $\theta(j) \geq L(l+n_i)+1$,
\end{enumerate}
then there exist two positive integers $\nu_0$ and $\nu$
such that
\begin{itemize}
\item
$\nu=\pi \beta_{i-1}!$, where $1\leq\pi\leq |Q|$;
\item
$K(l+n_{i-1}+1)+1 \leq \nu_0 \leq K(l+n_{i-1}+1) + \beta_i$;
\item
for each $h$ where $\nu_0 \leq h \leq K(l+n_{i-1}+2)-\nu$,
we have $p_{h}=p_{h+\nu}$.
\end{itemize}
In particular, since $\beta_{i+1}=|Q|!\times \beta_i!$ and $m=\beta_{k+1}$, 
we have $\nu$ divides $\beta_{i+1}$, 
and thus $\nu$ also divides $m$.
Therefore, $p_{K(l+n_{i-1}+2)-2-2m} = p_{K(l+n_{i-1}+2)-2}$.
\end{claim}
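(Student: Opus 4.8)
The plan is to prove the claim by induction on $i$, carried out simultaneously with Claim~\ref{cl: proof subclaim B two}, the base case being $i=1$. The central object is the \emph{macro-transition function} $g_h:Q\to Q$ induced by pebble~$i$ at position $h$: because $\A$ is one-way, deterministic, and normalised, the segment of the run from $[i,p_h,\theta_h]$ to $[i,p_{h+1},\theta_{h+1}]$ consists of placing pebble~$(i-1)$ at position $0$, running the level-$(i-1)$ sweep to completion, lifting pebble~$(i-1)$, and moving pebble~$i$ one step right; the run is finite by one-wayness, so $g_h$ is a total function and $p_{h+1}=g_h(p_h)$. First I would record, using Remark~\ref{rem: applied transition}, that the hypothesis on $l$ forces every outer pebble to lie outside the interval $(K(l+n_{i-1}+1),K(l+n_{i-1}+2)]$ and forces the symbols $c_{*,j},d_{*,j}$ appearing in that block to differ from the symbols read by pebbles $i+1,\ldots,k$. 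Hence throughout this block the equality type seen by pebble~$i$ depends only on the phase of $h$ (whether it lies in a $C$-, $b$-, $D$-, or $a$-segment, and its parity), and not on $h$ itself.

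The heart of the argument is to show that $g_h$ is eventually periodic in $h$ on the interior of this block, with small period. For this I would invoke the inductive hypothesis, namely Claims~\ref{cl: proof subclaim B one} and~\ref{cl: proof subclaim B two} at level $i-1$: they control the inner level-$(i-1)$ sweep launched from a fixed state with pebble~$i$ frozen at $h$, and guarantee that this sweep returns the same exit state for positions $h$ related by the symmetry permuting the $m-1$ parallel layers $c_{*,1},\ldots,c_{*,m-1}$ (and likewise the $d$-layers). The offset $n_{i-1}$ in the index $l+n_{i-1}+1$ is calibrated precisely to the portion of the graph a level-$(i-1)$ computation can probe, so the inner sweep cannot detect which layer it was launched from, and $g_h$ collapses to one of boundedly many functions that recur as $h$ advances.

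To extract the quantitative period I would then argue as follows. By the inductive hypothesis the effect of the inner sweep stabilises into a periodic pattern whose period divides $\beta_{i-1}!$; superimposing on this the evolution of the at most $|Q|$ states of pebble~$i$ and applying the pigeonhole principle to the combined configuration, whose number of possibilities is bounded by $\beta_i=|Q|!\times\beta_{i-1}!$, yields an onset index $\nu_0$ with $K(l+n_{i-1}+1)+1\le\nu_0\le K(l+n_{i-1}+1)+\beta_i$ together with a period of the stated form $\nu=\pi\beta_{i-1}!$, $1\le\pi\le|Q|$, so that $p_h=p_{h+\nu}$ for all $h\in\{\nu_0,\ldots,K(l+n_{i-1}+2)-\nu\}$. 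The recursion $\beta_i=|Q|!\times\beta_{i-1}!$ is exactly what is needed to absorb both the pre-period and the period. The final divisibility is then bookkeeping: since $\beta_{i+1}=|Q|!\times\beta_i!$ we have $\pi\mid|Q|!$ and $\beta_{i-1}!\mid\beta_i!$, whence $\nu\mid\beta_{i+1}$; and since $\beta_j\mid\beta_{j+1}$ for every $j$ we get $\beta_{i+1}\mid\beta_{k+1}=m$, so $\nu\mid m$ and in particular $\nu\mid 2m$. Because $K(l+n_{i-1}+2)-K(l+n_{i-1}+1)=4m$ while $\beta_i\le 2m-2$, both $K(l+n_{i-1}+2)-2-2m$ and $K(l+n_{i-1}+2)-2$ lie in the periodic range, and stepping the period $\nu$ across the gap of length $2m$ gives $p_{K(l+n_{i-1}+2)-2-2m}=p_{K(l+n_{i-1}+2)-2}$.

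The main obstacle, which is exactly what forces the simultaneous induction and the compatibility apparatus of Claim~\ref{cl: compatibility preserved}, is justifying that the inner level-$(i-1)$ sweep genuinely returns identical exit states across the symmetric layers. The inner sweep roams over the \emph{entire} word, including regions whose structure encodes where pebble~$i$ sits, so one cannot simply invoke a local automorphism; instead one must show that the two inner runs remain compatible step for step, which is precisely where the companion Claim~\ref{cl: proof subclaim B two} is consumed. The finitely many boundary positions of the block, where an $a$- or $b$-symbol may truly coincide with an outer pebble, are a secondary nuisance, but they contribute only an $O(1)$ additive term that is swallowed by the slack built into $\nu_0$ and $\beta_i$.
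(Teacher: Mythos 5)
Your proposal is correct and follows essentially the same route as the paper's proof: a simultaneous induction with Claim~\ref{cl: proof subclaim B two}, in which the periodicity of pebble~$i$'s states inside the block between $K(l+n_{i-1}+1)$ and $K(l+n_{i-1}+2)$ is reduced to comparing two inner level-$(i-1)$ sweeps launched from outer positions shifted by a multiple of $\beta_{i-1}!$, handled by interleaved applications of the lower-level instances of Claims~\ref{cl: proof subclaim B one} and~\ref{cl: proof subclaim B two} (the paper's Steps~1--5), followed by the same factorial divisibility bookkeeping. The only difference is organisational and harmless: the paper pigeonholes on the raw state sequence first (finding $p_{\eta}=p_{\eta'}$ with $\eta'-\eta$ a multiple of the relevant factorial) and then propagates that single repetition one position at a time, whereas you first establish uniform periodicity of the macro-transition maps $g_h$ and then pigeonhole on state--phase pairs; the two orderings are interchangeable and consume the induction hypotheses in exactly the same way.
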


Below we give an illustration of the intuitive meaning 
of the indexes $l,\nu_0,\nu$ in Claim~\ref{cl: proof subclaim B one} for $i \neq k$.
Let $l$ be the integer assumed in the hypothesis of Claim~\ref{cl: proof subclaim B one}.
(For simplicity, we do not put the indexes on the $a$'s.)

\begin{picture}(350,125)(-200,-40)

{\footnotesize

\put(-200,-16.5){$w(n_k,m)=$}
\put(-153,-13){\line(0,-1){4}}
\put(147,-13){\line(0,-1){4}}
\put(-153,-15){\line(1,0){300}}

\put(-119,57){\scriptsize $K(l)$}
\put(-111,55){\line(0,-1){80}}
\put(-121,-12){\scriptsize $aa$}

\put(-109,38){$\overbrace{\hspace{6.9 cm}}^{\textrm{\scriptsize Pebbles} 
\ i+1,\ldots,k \ \textrm{are not here}}$}

\put(72,57){\scriptsize $L(l+n_i)$}
\put(89,55){\line(0,-1){80}}
\put(90,-12){\scriptsize $aa$}

\put(-65,22){\scriptsize $K(l+n_i+1)$}
\put(-40,20){\line(0,-1){45}}
\put(-50,-12){\scriptsize $aa$}

\put(-5,22){\scriptsize $K(l+n_i+2)$}
\put(20,20){\line(0,-1){45}}
\put(10,-12){\scriptsize $aa$}

\put(-23,7){\scriptsize $\nu_0$}
\put(-20,5){\line(0,-1){30}}

\put(-17,-17){$\underbrace{\hspace{1.2 cm}}_{(*)}$}

}
\end{picture}
The meaning of Claim~\ref{cl: proof subclaim B one}
is that in region~($\ast$) pebble~$i$
enters the same state every $\nu$ steps.

\begin{claim}
\label{cl: basis main claim}
Let 
$$
[1,p_0,\theta_0] \ \vdash_{\sA,w(n_k,m)} 
\ \cdots\cdots \ 
[1,p_N,\theta_N] \ \vdash_{\sA,w(n_k,m)} \ 
[1,p_{N+1},\theta_{N+1}]
$$
be a run of $\A$ on $w(n_k,m)$,
where $N$ is the length of $w(n_k,m)$ and $\theta_0(1) = 0$, and 
$\theta_{j+1} = Succ_1(\theta_{j})$, for each $j \in \{0,\ldots,N\}$;
and let
$$
[1,r_0,\overline{\theta}_0] \ \vdash_{\sA,\overline{w}(n_k,m)} 
\ \cdots\cdots \ 
[1,r_M,\overline{\theta}_M] \ \vdash_{\sA,\overline{w}(n_k,m)} \ 
[1,r_{M+1},\overline{\theta}_{M+1}]
$$
be a run of $\A$ on $\overline{w}(n_k,m)$,
where $M$ is the length of $\overline{w}(n_k,m)$ and $\overline{\theta}_0(1) = 0$, and 
$\overline{\theta}_{j+1} = Succ_1(\overline{\theta}_{j})$, for each $j \in \{0,\ldots,M\}$.

If $[1,p_0,\theta_0]$ and $[1,r_0,\overline{\theta}_0]$
are compatible with respect to an $l \in \{0,\ldots,n_k-n_1\}$,
then $p_{N+1}=r_{M+1}$.
\end{claim}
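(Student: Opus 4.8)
The plan is to exploit that, once pebble~1 is the head pebble, pebbles $2,\ldots,k$ are frozen, so pebble~1 behaves as a one-way deterministic finite automaton that reads, at each position $h$, the \emph{effective input} $(P_h,V_h)$ consisting of the set of higher pebbles sharing position $h$ and the set of higher pebbles reading the same symbol as position $h$. Since both runs satisfy $\theta_{h+1}=Succ_1(\theta_h)$ (respectively $\overline{\theta}_{h+1}=Succ_1(\overline{\theta}_h)$), they are straight left-to-right scans, and the claim reduces to showing that this automaton ends in the same state after scanning $w(n_k,m)$ as after scanning $\overline{w}(n_k,m)$. The structural facts I would use are: (i) $\overline{w}(n_k,m)$ is the length-$(N-2m)$ prefix of $w(n_k,m)$, so the two words carry identical symbols on their common positions; (ii) under compatibility with respect to $l$ the higher pebbles split into a \emph{head} group at positions $\le K(l)$, occupying identical positions in the two runs, and a \emph{tail} group at positions $\ge L(l+n_1)$ in $w(n_k,m)$ but shifted left by exactly $2m$ in $\overline{w}(n_k,m)$, with no pebble in the intervening gap; and (iii) the symbol-equality pattern of $w(n_k,m)$ is invariant under this $2m$-shift in the tail, which is exactly the content underlying Remark~\ref{rem: applied transition}.

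First I would treat the head region $h\le K(l)$. There the symbols agree because $\overline{w}(n_k,m)$ is a prefix of $w(n_k,m)$, the head pebbles occupy the same positions in both runs, and the tail pebbles lie beyond $L(l+n_1)$, which exceeds $K(l)$ by far more than the maximal distance $4m-2$ between two occurrences of a symbol; hence no tail pebble can occupy position $h$ or read the symbol at position $h$. Consequently $(P_h,V_h)=(\overline{P}_h,\overline{V}_h)$ for every $h\le K(l)$, and determinism gives $p_h=r_h$ throughout the head region. In particular both runs enter the gap in the same state.

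Next I would set up a shifted correspondence on the tail, namely $p_h=r_{h-2m}$ for $h$ beyond the gap. Because the tail pebbles of the $w$-run sit exactly $2m$ to the right of their $\overline{w}$-counterparts, and because the symbol-equality pattern is $2m$-shift invariant there, the effective input read by the $w$-run at position $h$ coincides with the effective input read by the $\overline{w}$-run at position $h-2m$; this is the position-shifted analogue of Remark~\ref{rem: applied transition}, and it relies only on the equality pattern among pebbles, not on absolute symbols. Determinism then propagates $p_h=r_{h-2m}$ from the re-synchronization point through both right-end markers, yielding $p_{N+1}=r_{M+1}$ once this correspondence is glued to the head correspondence of the previous paragraph.

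The heart of the argument, and the step I expect to be the main obstacle, is gluing the head correspondence $p_h=r_h$ to the tail correspondence $p_h=r_{h-2m}$ across the pebble-free gap, where the $w$-run must take $2m$ more steps than the $\overline{w}$-run yet arrive in the same state. This is precisely where I would invoke Claim~\ref{cl: proof subclaim B one} for $i=1$: inside the gap pebble~1 sees no higher pebble, so its state sequence is periodic with some period $\nu$ dividing $m$, hence dividing $2m$, and the surplus $2m$ steps therefore leave the state unchanged --- concretely, the identity $p_{K(l+2)-2-2m}=p_{K(l+2)-2}$ supplied by that claim. The delicate points to verify are that the entire $2m$-step surplus falls inside the periodic window $[\nu_0,K(l+2)-\nu]$ guaranteed by Claim~\ref{cl: proof subclaim B one}, and that near the right boundary of the gap no tail pebble's repeated symbol reaches back to perturb the effective inputs; once these are checked the two correspondences meet in a common state and the final states coincide.
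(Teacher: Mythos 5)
Your proposal is correct and follows essentially the same route as the paper's proof: your head correspondence $p_h=r_h$, your gluing step via the periodicity identity $p_{K(l+2)-2-2m}=p_{K(l+2)-2}$ from Claim~\ref{cl: proof subclaim B one}, and your $2m$-shifted tail correspondence $p_h=r_{h-2m}$ are exactly the paper's Stages~1--3, with the transition-matching in the head and tail justified by Remark~\ref{rem: applied transition} just as in the paper. The ``delicate points'' you flag (the surplus falling inside the periodic window, and boundary effects near the end of the gap) are glossed over in the paper's own Stage~1/Stage~2 presentation as well, so nothing in your outline diverges from or falls short of the published argument.
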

\begin{proof}
Consider the run 
$$
[1,p_0,\theta_0] \ \vdash_{\sA,w(n_k,m)} 
\ \cdots\cdots \ 
[1,p_N,\theta_N] \ \vdash_{\sA,w(n_k,m)} \ 
[1,p_{N+1},\theta_{N+1}],
$$
where $\theta_0(1) = 0$, and 
$\theta_{j+1} = Succ_1(\theta_{j})$, for each $j \in \{0,\ldots,N\}$;
and the run
$$
[1,r_0,\overline{\theta}_0] \ \vdash_{\sA,\overline{w}(n_k,m)} 
\ \cdots\cdots \ 
[1,r_M,\overline{\theta}_M] \ \vdash_{\sA,\overline{w}(n_k,m)} \ 
[1,r_{M+1},\overline{\theta}_{M+1}],
$$
where $\overline{\theta}_0(1) = 0$, and 
$\overline{\theta}_{j+1} = Succ_1(\overline{\theta}_{j})$, for each $j \in \{0,\ldots,M\}$.

Suppose that $[1,p_0,\theta_0]$ and $[1,r_0,\theta_0]$ are compatible
with respect to an integer $l$.
This means that $p_0 = r_0$.
We are going to show that $p_{N+1} = r_{M+1}$ in three stages. 
(In the following let $l' = l+2$.)
\begin{description}
\item[Stage 1]
$p_{K(l')}=r_{K(l')}$.

To prove this,
we show that $p_h = r_h$, for each $h \in \{0,\ldots,K(l')\}$.
The proof is by induction on $h$.
The proof for the base case, $h=0$, 
follows from compatibility of $[1,p_0,\theta_0]$ and $[1,r_0,\overline{\theta}_0]$.

For the induction step, suppose that $p_h = r_h$.
By Remark~\ref{rem: applied transition},
a transition $\alpha\to \beta$ applies to $[1,p_h,\theta_h]$
if and only if it applies to $[1,r_h,\overline{\theta}_h]$.
Hence, $p_{h+1}=r_{h+1}$.

\item[Stage 2] 
$p_{K(l')-2}=p_{K(l')-2m-2}=r_{K(l')-2m-2}$.

In Stage~1, we already show that $p_{K(l')-2m-2}=r_{K(l')-2m-2}$.
That $p_{K(l')-2}=p_{K(l')-2m-2}$ follows from Claim~\ref{cl: proof subclaim B one}.


\item[Stage 3] 
$p_{N+1}=r_{M+1}$. 
 
We are going to prove that $p_h = r_{h-2m}$,
for each $h \in \{K(l')-2,\ldots,N+1\}$.

The proof is by induction on $h$.
The proof for the base case, $h=K(l')-2$, 
is already shown in Step~2.

For the induction step, suppose that $p_h = r_{h-2m}$.
By Remark~\ref{rem: applied transition},
a transition $\alpha\to \beta$ applies to $[1,p_h,\theta_h]$
if and only if it applies to $[1,r_{h-2m},\overline{\theta}_{h-2m}]$.
Thus, $p_{h+1}=r_{h+1}$.
\end{description}
This completes the proof of Claim~\ref{cl: basis main claim}.
\end{proof}

The following claim is the generalisation of Claim~\ref{cl: basis main claim}
which implies Proposition~\ref{p: accept or reject both}.
\begin{claim}
\label{cl: main claim}
For each $i \in \{1,\ldots,k\}$, the following holds.
Let 
$$
[i,p_0,\theta_0] \ \vdash^{\ast}_{\sA,w(n_k,m)} 
\ \cdots\cdots \ 
[i,p_N,\theta_N] \ \vdash^{\ast}_{\sA,w(n_k,m)} \ 
[i,p_{N+1},\theta_{N+1}]
$$
be a run of $\A$ on $w(n_k,m)$,
where $N$ is the length of $w(n_k,m)$ and $\theta_0(i) = 0$, and 
$\theta_{j+1} = Succ_i(\theta_{j})$, for each $j \in \{0,\ldots,N\}$;
and let
$$
[i,r_0,\overline{\theta}_0] \ \vdash^{\ast}_{\sA,\overline{w}(n_k,m)} 
\ \cdots\cdots \ 
[i,r_M,\overline{\theta}_M] \ \vdash^{\ast}_{\sA,\overline{w}(n_k,m)} \ 
[i,r_{M+1},\overline{\theta}_{M+1}]
$$
be a run of $\A$ on $\overline{w}(n_k,m)$,
where $M$ is the length of $\overline{w}(n_k,m)$ and $\overline{\theta}_0(i) = 0$, and 
$\overline{\theta}_{j+1} = Succ_i(\overline{\theta}_{j})$, for each $j \in \{0,\ldots,M\}$.

If $[i,p_0,\theta_0]$ and $[i,r_0,\overline{\theta}_0]$
are compatible with respect to an $l \in \{0,\ldots,n_k-n_i\}$,
then $p_{N+1}=r_{M+1}$.
\end{claim}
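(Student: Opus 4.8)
The plan is to prove Claim~\ref{cl: main claim} by induction on $i$, with Claim~\ref{cl: basis main claim} serving verbatim as the base case $i=1$ (the two statements coincide once one sets $n_0=0$). For the induction step I fix $i\geq 2$, assume Claim~\ref{cl: main claim} at level $i-1$, and replay the three-stage argument of the base case, with each single transition of the head pebble replaced by a \emph{macro-step} of pebble~$i$. A macro-step is the segment $[i,p_h,\theta_h]\vdash^\ast[i,p_{h+1},\theta_{h+1}]$ of the run in which, following the normalisation, pebble~$i$ advances one position to the right, pebble~$(i-1)$ is placed at position~$0$, performs a complete left-to-right scan of the entire word, and is lifted upon reaching $\triangleright$. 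Crucially, each such scan is itself a run of the shape governed by Claim~\ref{cl: main claim} at level $i-1$, so its terminal state --- and hence the value $p_{h+1}$ --- is controlled by the induction hypothesis.

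The engine of the step is the following \emph{macro-step agreement}: if the level-$i$ configurations reached on $w(n_k,m)$ and $\overline{w}(n_k,m)$ are compatible with respect to a suitable index and carry the same state, then the two ensuing macro-steps end in a common state. To establish this I argue in three moves. First, by Remark~\ref{rem: applied transition} the rightward move of pebble~$i$ yields the same successor state on both words, so pebble~$(i-1)$ is placed at position~$0$ in identical states. Second, Claim~\ref{cl: compatibility preserved}, applied with the freshly placed pebble~$(i-1)$ sitting at position~$0$, upgrades the level-$i$ compatibility (with respect to $l$) to level-$(i-1)$ compatibility of the two scan-initial configurations (with respect to $l+n_{i-1}+2$). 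Third, since the hypothesis $l\in\{0,\ldots,n_k-n_i\}$ together with the recurrence $n_i=2n_{i-1}+2$ forces $l+n_{i-1}+2\in\{0,\ldots,n_k-n_{i-1}\}$, the induction hypothesis applies to the two scans and delivers a common terminal state, whence $p_{h+1}$ and its counterpart agree.

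Granting macro-step agreement, the three stages proceed exactly as for Claim~\ref{cl: basis main claim}, now with $h$ ranging over the positions of pebble~$i$ and $l'=l+n_{i-1}+2$. In Stage~1 one proves by induction on $h$ that $p_h=r_h$ for all $h\leq K(l')$, the base case being the assumed compatibility at $h=0$ and the step being an application of macro-step agreement. In Stage~2 one invokes the periodicity furnished by Claim~\ref{cl: proof subclaim B one}, namely $p_{K(l')-2}=p_{K(l')-2m-2}$ (which holds because the period $\nu$ divides $m$); combined with Stage~1 this yields $p_{K(l')-2}=r_{K(l')-2m-2}$. In Stage~3 one proves by induction on $h$ that $p_h=r_{h-2m}$ for all $h$ from $K(l')-2$ up to $N+1$, again through macro-step agreement, the relevant configurations now being compatible in the shifted high region; taking $h=N+1$ and recalling $M=N-2m$ gives $p_{N+1}=r_{M+1}$, completing the induction. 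Applied with $i=k$ and $l=0$ this is precisely Proposition~\ref{p: accept or reject both}.

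The step I expect to be the main obstacle is the bookkeeping of compatibility across the macro-steps: I must ensure that at every position traversed by pebble~$i$ in Stages~1 and~3 the current level-$i$ configurations are genuinely compatible with respect to an appropriately chosen index, so that Claim~\ref{cl: compatibility preserved} and the induction hypothesis may legitimately be invoked, and that this compatibility is an invariant restored after each macro-step. The delicate points are keeping the fixed higher pebbles $i+1,\ldots,k$ clear of the shifting gap region as pebble~$i$ advances, and splicing Stage~1 to Stage~3 across the deleted block $D_{n_k-1}a_{n_k-1}a_{n_k}$ of length $2m$; it is precisely here that the divisibility $\nu\mid m$ from Claim~\ref{cl: proof subclaim B one} is indispensable, as it lets the run on $w(n_k,m)$ absorb the missing block without altering the state that is reached.
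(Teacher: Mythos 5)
Your proposal is correct and follows essentially the same route as the paper's proof: induction on $i$ with Claim~\ref{cl: basis main claim} as the base, and an induction step organised into the same three stages with $l'=l+n_{i-1}+2$, where your ``macro-step agreement'' is exactly the paper's combination of determinism, Claim~\ref{cl: compatibility preserved} (case~(1) in Stage~1, case~(2) in Stage~3) and the level-$(i-1)$ induction hypothesis, and Stage~2 rests, as in the paper, on the periodicity $\nu\mid m$ from Claim~\ref{cl: proof subclaim B one}. Your explicit check that $l+n_{i-1}+2\leq n_k-n_{i-1}$ via $n_i=2n_{i-1}+2$ is a point the paper leaves implicit, but otherwise the two arguments coincide.
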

\begin{proof}
The proof is by induction on $i$.
The basis is $i=1$, which we have already proved in Claim~\ref{cl: basis main claim}.

For the induction hypothesis, we assume that Claim~\ref{cl: main claim} holds
for the case of $i-1$.
We are going to show that it holds for the case of $i$.
The line of reasoning is almost the same as Claim~\ref{cl: basis main claim}.
For completeness, we present it here.

Consider the following run  
$$
[i,p_0,\theta_0] \ \vdash^{\ast}_{\sA,w(n_k,m)} 
\ \cdots\cdots \ 
[i,p_N,\theta_N] \ \vdash^{\ast}_{\sA,w(n_k,m)} \ [i,p_{N+1},\theta_{N+1}]
$$
and
$$
[i,r_0,\overline{\theta}_0] \
\vdash^{\ast}_{\sA,\overline{w}(n_k,m)} \ \cdots\cdots \ \vdash^{\ast}_{\sA,\overline{w}(n_k,m)} \
[i,r_{M+1},\overline{\theta}_{M+1}].
$$
By the assumption that 
$[i,p_0,\theta_0]$ and $[i,r_0,\theta_0]$ are compatible,
we have $p_0 = r_0$.
We are going to prove that $p_{N+1} = r_{M+1}$ in three stages.
Let $l' = l+ n_{i-1}+2$.
\begin{description}
\item[Stage 1] 
$p_{K(l')-2}=r_{K(l')-2}$.

To prove this subclaim,
we show that $p_h = r_h$, for each $h \in \{0,\ldots,K(l')\}$.
The proof is by induction on $h$.
The proof for the base case $p_0=r_0$ follows from
the fact that $[i,p_0,\theta_0]$ and $[i,r_0,\theta_0]$ are compatible.

For the induction step, suppose that $p_h = r_h$.
By the normalisation of the automaton $\A$,
the run is of the form: 
$$
[i,p_h,\theta_h] \ \vdash_{\sA,w(n_k,m)} 
[i-1,p_0',\theta_0'] \ \vdash^{\ast}_{\sA,w(n_k,m)} 
\ \cdots\cdots \ 
\vdash^{\ast}_{\sA,w(n_k,m)} \ [i-1,p_{N+1}',\theta_{N+1}']
$$
and
$$
[i,r_h,\overline{\theta}_h] \
\vdash_{\sA,\overline{w}(n_k,m)}
[i-1,r_0',\overline{\theta}_0'] \
\vdash^{\ast}_{\sA,\overline{w}(n_k,m)} \ \cdots\cdots \ \vdash^{\ast}_{\sA,\overline{w}(n_k,m)} \
[i,r_{M+1}',\overline{\theta}_{M+1}'],
$$
where $\theta'_h(i-1)=h$ for each $h \in \{1,\ldots,N+1\}$
and $\overline{\theta}'_h(i-1)=h$ for each $h \in \{1,\ldots,M+1\}$.

By determinism of $\A$, we have $p_0'=r_0'$.
Then, by Claim~\ref{cl: compatibility preserved},
since $0\leq h \leq K(l')$, we have
$[i-1,p_0',\theta_0']$ and $[i-1,r_0',\overline{\theta}_0']$
compatible with respect to $l'$.
By the induction hypothesis of Claim~\ref{cl: main claim},
we have $p_{N+1}'=r_{M+1}'$.
Then, by determinism of $\A$, we have $p_{h+1} = r_{h+1}$.

\item[Stage 2] 
$p_{K(l')-2}=p_{K(l')-2m-2}=r_{K(l')-2m-2}$.

In Stage~1 we already have $p_{K(l')-2m-2}=r_{K(l')-2m-2}$.
Claim~\ref{cl: proof subclaim B one} implies that $p_{K(l')-2}=p_{K(l')-2m-2}$.

\item[Stage 3] $p_{N+1}=r_{M+1}$. 
 
By Subclaim~B, we have $p_{K(l')-2}=r_{K(l')-2m-2}$.
We are going to prove that $p_h = r_{h-2m}$,
for each $h \in \{K(l')-2,\ldots,N+1\}$.

The proof is by induction on $h$.
The proof for the base case, $h=K(l')$, follows from Subclaim~B.

For the induction step, suppose that $p_h = r_{h-2m}$.
By the normalisation of the automaton $\A$,
we assume that the run is of the form: 
$$
[i,p_h,\theta_h] \ \vdash_{\sA,w(n_k,m)} 
[i-1,p_0',\theta_0'] \ \vdash^{\ast}_{\sA,w(n_k,m)} 
\ \cdots\cdots \ 
\vdash^{\ast}_{\sA,w(n_k,m)} \ [i-1,p_{N+1}',\theta_{N+1}']
$$
and
$$
[i,r_{h-2m},\overline{\theta}_{h-2m}] \
\vdash_{\sA,\overline{w}(n_k,m)}
[i-1,r_0',\overline{\theta}_0'] \
\vdash^{\ast}_{\sA,\overline{w}(n_k,m)} \ \cdots\cdots \ \vdash^{\ast}_{\sA,\overline{w}(n_k,m)} \
[i,r_{M+1}',\overline{\theta}_{M+1}'],
$$
where $\theta'_h(i-1)=h$ for each $h \in \{1,\ldots,N+1\}$
and $\overline{\theta}'_h(i-1)=h$ for each $h \in \{1,\ldots,M+1\}$.

That we have 
$[i,p_h,\theta_h]  \vdash_{\sA,w(n_k,m)}  [i-1,p_0',\theta_0']$
and 
$[i,r_{h-2m},\overline{\theta}_h] \vdash_{\sA,\overline{w}(n_k,m)} [i-1,r_0',\overline{\theta}_0']$
is due to the normalisation of the automaton $\A$
described in the beginning of Subsection~\ref{ss: proof}.

By determinism of $\A$, we have $p_0'=r_0'$.
Then, by Claim~\ref{cl: compatibility preserved},
since $h \geq K(l')$, we have
$[i-1,p_0',\theta_0']$ and $[i-1,r_0',\overline{\theta}_0']$
compatible with respect to $l$.
By the induction hypothesis of Claim~\ref{cl: main claim},
we have $p_{N+1}'=r_{M+1}'$.
Then, by determinism of $\A$, we have $p_{h+1} = r_{h+1-2m}$.
\end{description}
This completes the proof of Claim~\ref{cl: main claim}.
\end{proof}

\begin{proof} (of Proposition~\ref{p: accept or reject both})
We simply apply Claim~\ref{cl: main claim},
in which $i=k$, and both $p_0,r_0$ are the initial state $q_0$ of $\A$.
Note that the initial configurations of $\A$ on $w(n_k,m)$ and $\overline{w}(n_k,m)$
are the same, thus, they are compatible.
\end{proof}

\subsection{Proof of Claim~\ref{cl: proof subclaim B one}}
\label{ss: proof subclaim}

In this subsection we are going to prove
Claim~\ref{cl: proof subclaim B one}.
The proof is also rather long and technical.
We need the following definition.
\begin{definition}
\label{d: pebble-ordering}
In the following, let $i\in\{1,\ldots,k\}$.
\begin{enumerate}
\item
An assignment $\theta:\{i,\ldots,k\}\mapsto \{0,1,\ldots,K(n_k)+1\}$
of pebbles~$i,i+1,\ldots,k$ on $w(n_k,m)$ is called a pebble-$i$ assignment.
\item
For two pebble-$i$ assignments $\theta_1$ and $\theta_2$,
we say that they have the same pebble ordering,
if for each $j,j' \in \{i,i+1,\ldots,k\}$,
$\theta_1(j) \leq \theta_1(j')$ if and only if
$\theta_2(j) \leq \theta_2(j')$.
\end{enumerate}
\end{definition}

In this subsection we are going to prove 
Claim~\ref{cl: proof subclaim B one} together with Claim~\ref{cl: proof subclaim B two} below.
In fact, we are going to prove {\em both} claims simultaneously.
(We will give the structure of the proofs later on.)

\begin{claim}
\label{cl: proof subclaim B two}
Let $[i,q,\theta_1]$ and $[i,q,\theta_2]$
be configurations of $\A$ on $w(n_k,m)$ such that
\begin{enumerate}
\item
$\theta_1$ and $\theta_2$ have the same pebble ordering;
\item
for each $j \in \{i,\ldots,k\}$, $\theta_1(j) \leq \theta_2(j)$;
\item
there exist integers $l_1,l_2,l_3,l_4$ and $\pi$ such that
$l_1 \leq l_2 \leq l_3 \leq l_4$ and
$1\leq\pi < \frac{m}{\beta_{i-1}!}$ and
for each $j \in \{i,\ldots,k\}$,
\begin{enumerate}
\item
if $\theta_1(j) \leq K(l_1)$ or $\theta_1(j) \geq L(l_4)+1$,
then $\theta_1(j) = \theta_2(j)$;
\item
if $\theta_2(j) \leq K(l_1)$ or $\theta_2(j) \geq L(l_4)+1$,
then $\theta_1(j) = \theta_2(j)$;
\item
$l_2 - l_1 \geq n_{i-1}+1$;
\item
$l_4 - l_3 \geq n_{i-1}+1$;
\item
$\sfImage(\theta_1) \cap (\{K(l_1)+1,\ldots,K(l_2)\}\cup \{L(l_3)+1,\ldots,L(l_4)\}) = \emptyset$;
\item
$\sfImage(\theta_2) \cap (\{K(l_1)+1,\ldots,K(l_2)\}\cup \{L(l_3)+1,\ldots,L(l_4)\}) = \emptyset$;
\item
if $\theta_1(j) \in \{K(l_2)+1,\ldots,L(l_3)\}$,
then $\theta_2(j) \in \{K(l_2)+1,\ldots,L(l_3)\}$
and $\theta_2(j) - \theta_1(j) = \pi \beta_{i-1}!$;
\item
if $\theta_2(j) \in \{K(l_2)+1,\ldots,L(l_3)\}$,
then $\theta_1(j) \in \{K(l_2)+1,\ldots,L(l_3)\}$
and $\theta_2(j) - \theta_1(j) = \pi \beta_{i-1}!$.
\end{enumerate}
\end{enumerate}
If $[i,q,\theta_1] \vdash^{\ast} [i,p,Succ_i(\theta_1)]$
and $[i,q,\theta_2] \vdash^{\ast} [i,r,Succ_i(\theta_2)]$,
then $p=r$.
\end{claim}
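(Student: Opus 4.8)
The plan is to prove Claim~\ref{cl: proof subclaim B two} by induction on $i$, interleaved with Claim~\ref{cl: proof subclaim B one} as announced: inside the step for $i$ I may use both claims for all indices $<i$, together with Claim~\ref{cl: proof subclaim B one} for $i$ itself. The overall shape mirrors the three-stage argument of Claim~\ref{cl: main claim}, except that the two runs now live on the \emph{same} word $w(n_k,m)$ and the discrepancy between them is the shift of the middle block by $\pi\beta_{i-1}!$ (hypotheses~(g),(h)), rather than the deletion of a suffix.

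For the base case $i=1$ the macro-step $[1,q,\theta_1]\vdash^{\ast}[1,p,Succ_1(\theta_1)]$ merely advances pebble~1 one position to the right, so $p$ and $r$ are fixed by the single transition that applies. I would establish the analogue of Remark~\ref{rem: applied transition}: the position-coincidence set $P$ and the symbol-equality set $V$ of the head pebble with pebbles $2,\ldots,k$ agree under $\theta_1$ and $\theta_2$. Agreement of $P$ is immediate from hypothesis~(1) (equal pebble ordering) together with the uniform shift of the middle block; agreement of $V$ uses the periodicity of $w(n_k,m)$: each symbol occurs exactly twice, at distance $4m-1$, and hypotheses~(c)--(f) force the gaps flanking the block to be wider than this distance, so no two pebbles lying in different regions can ever read equal symbols, while two pebbles inside the common block keep their relative distance under the shift and hence their pattern of equal symbols. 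Determinism then yields $p=r$.

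For the step $i\ge 2$, the normalisation of $\A$ turns the macro-step $[i,q,\theta_\bullet]\vdash^{\ast}[i,\cdot,Succ_i(\theta_\bullet)]$ into: move pebble~$i$ right, place pebble~$(i-1)$ at position~$0$, run a complete left-to-right scan of pebble~$(i-1)$ over $w(n_k,m)$ with pebbles $i,\ldots,k$ frozen at $\phi_\bullet:=Succ_i(\theta_\bullet)$, then lift pebble~$(i-1)$. Hence it suffices to compare the two pebble-$(i-1)$ scans and show they terminate in the same state. Writing $p^{(1)}_h,p^{(2)}_h$ for the state when pebble~$(i-1)$ sits at position~$h$, I argue in three stages as in Claim~\ref{cl: main claim}. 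In the outer regions I apply the induction hypothesis, Claim~\ref{cl: proof subclaim B two} for $i-1$, at each micro-step: pebble~$(i-1)$ occupies the \emph{same} position in both scans, so it plays the role of an unshifted pebble, while pebbles $i,\ldots,k$ differ by the middle shift $\pi\beta_{i-1}!$, which---because $\beta_{i-1}!$ is a multiple of $\beta_{i-2}!$---is a legitimate level-$(i-1)$ shift $\pi''\beta_{i-2}!$ with $\pi''=\pi\beta_{i-1}!/\beta_{i-2}!<m/\beta_{i-2}!$; this gives $p^{(1)}_{h+1}=p^{(2)}_{h+1}$. To cross the shifted block I invoke Claim~\ref{cl: proof subclaim B one} for $i-1$: in each pebble-free gap (hypotheses~(e),(f)) the state of pebble~$(i-1)$ is eventually periodic with some period $\nu$, and since $\nu$ divides $\beta_{i-1}!$ it divides the shift $\pi\beta_{i-1}!$, so the state is unchanged across the block. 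Reaching $h=N+1$ then gives $p^{(1)}_{N+1}=p^{(2)}_{N+1}$, that is $p=r$.

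The main obstacle I expect is Stage~2, the bridging. In Claim~\ref{cl: main claim} the discrepancy ($w$ versus its prefix $\overline{w}$) is one-sided, so a single bridge suffices; here the shift is internal, so the scan must be re-synchronised at \emph{both} ends of the block, each crossing relying on the periodicity of Claim~\ref{cl: proof subclaim B one} and on the divisibility $\nu\mid\pi\beta_{i-1}!$, while simultaneously keeping valid, for every invocation of the induction hypothesis, the region parameters $l_1',\ldots,l_4'$, the pebble-ordering and the equality invariants (the analogue of Remark~\ref{rem: applied transition}).
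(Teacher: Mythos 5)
Your proposal is correct and follows essentially the same route as the paper: the same simultaneous induction with Claim~\ref{cl: proof subclaim B one}, the same base case via agreement of the $P$- and $V$-sets (the paper's three-case analysis together with the spacing argument of Remark~\ref{rem: applied transition}, plus the observation that in-block pebbles keep their relative distance under the uniform shift), and the same induction step that reduces the macro-step to a full scan of pebble~$(i-1)$ and compares the two scans in phases --- equality at equal positions in the outer regions via Claim~\ref{cl: proof subclaim B two} at level $i-1$, periodicity bridges in the two pebble-free gaps via Claim~\ref{cl: proof subclaim B one} at level $i-1$ with the divisibility $\nu \mid \pi\beta_{i-1}!$, and shifted synchronisation across the middle block, re-aligned at both ends. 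Indeed the paper itself only sketches this induction step as ``the same Steps~1--5'' of the induction step for Claim~\ref{cl: proof subclaim B one}, and your divisibility bookkeeping (that $\pi\beta_{i-1}!$ is a legitimate level-$(i-1)$ shift $\pi''\beta_{i-2}!$ with $\pi'' < m/\beta_{i-2}!$) is exactly what those steps require.
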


Below we give an intuitive meaning of Claim~\ref{cl: proof subclaim B two}.
Consider the following illustration,
where $\theta_1$ and $\theta_2$ are configurations on $w(n_k,m)$ with the same pebble ordering.

\begin{picture}(350,120)(-200,-60)

{\footnotesize

\put(-146,40){\line(0,-1){70}}
\put(-155,42){\scriptsize $K(l_1)$}

\put(-125,33){\line(0,-1){48}}
\put(-119,33){\scriptsize $K(n_{i-1})$}
\put(-125,28){\vector(1,0){41}}
\put(-125,28){\vector(-1,0){0}}
\put(-84,33){\line(0,-1){48}}

\put(-48,40){\line(0,-1){70}}
\put(-55,42){\scriptsize $K(l_2)$}

\put(21,40){\line(0,-1){70}}
\put(14,42){\scriptsize $L(l_3)$}

\put(50.5,33){\line(0,-1){48}}
\put(56,33){\scriptsize $K(n_{i-1})$}
\put(50.5,28){\vector(1,0){41}}
\put(50.5,28){\vector(-1,0){0}}
\put(91.5,33){\line(0,-1){48}}

\put(119,40){\line(0,-1){70}}
\put(111,42){\scriptsize $L(l_4)$}


\put(-200,13.5){$\theta_1:$}
\put(-183,17){\line(0,-1){4}}
\put(157,17){\line(0,-1){4}}
\put(-183,15){\line(1,0){340}}


\put(-145,17){\scriptsize $C  bb  D  aa \cdots\cdots aa  C  bb  DD aa$}

\put(22,17){\scriptsize $aa C  bb  D  aa \cdots\cdots aa  C  bb  DD$}


\put(-182,-15){$\underbrace{\hspace{1.2 cm}}_{\textrm{\scriptsize region}~(\flat)}$}

\put(-144,-15){$\underbrace{\hspace{3.3 cm}}_{\textrm{\scriptsize No pebble here}}$}

\put(-46,-15){$\underbrace{\hspace{2.3 cm}}_{\textrm{\scriptsize region}~(\natural)}$}

\put(24,-15){$\underbrace{\hspace{3.3 cm}}_{\textrm{\scriptsize No pebble here}}$}

\put(121,-15){$\underbrace{\hspace{1.2 cm}}_{\textrm{\scriptsize region}~(\sharp)}$}

\put(-200,-11.5){$\theta_2:$}
\put(-183,-8){\line(0,-1){4}}
\put(157,-8){\line(0,-1){4}}
\put(-183,-10){\line(1,0){340}}


\put(-145,-8){\scriptsize $C  bb  D  aa \cdots\cdots aa  C  bb  DD aa$}

\put(22,-8){\scriptsize $aa C  bb  D  aa \cdots\cdots aa  C  bb  DD$}



}
\end{picture}

The meanings of $l_1,l_2,l_3,l_4$ and $\pi$ are such that
for each $j \in \{i,\ldots,k\}$,
\begin{itemize}
\item
if pebble~$j$ are found in region~$(\flat)$ on both configurations~$\theta_1$ and $\theta_2$,
then $\theta_1(j)=\theta_2(j)$;
\item
if pebble~$j$ are found in region~$(\natural)$ on both configurations~$\theta_1$ and $\theta_2$,
then $\theta_2(j)-\theta_1(j) = \pi \beta_{i-1}!$;
\item
if pebble~$j$ are found in region~$(\sharp)$ on both configurations~$\theta_1$ and $\theta_2$,
then $\theta_2(j)=\theta_1(j)$.
\end{itemize}
On both configurations $\theta_1$ and $\theta_2$
no pebbles are found in the region between $K(l_1)+1$ and $K(l_2)$
as well as in between $L(l_3)+1$ and $L(l_4)$.
Claim~\ref{cl: proof subclaim B two} states that
both configurations $[i,q,\theta_1]$ and $[i,q,\theta_2]$ are essentially the ``same.''
In the sense that if $[i,q,\theta_1] \vdash^{\ast} [i,p,Succ_i(\theta_1)]$
and $[i,q,\theta_2] \vdash^{\ast} [i,r,Succ_i(\theta_2)]$,
then $p=r$.

The proofs of both Claims~\ref{cl: proof subclaim B one} and~\ref{cl: proof subclaim B two} 
use a rather involved inductive argument.
In fact, we are going to prove {\em both} claims simultaneously by induction.
The induction step on the proof of each claim uses
the induction hypothesis of both claims.
The overall structure of the proofs of both
Claims~\ref{cl: proof subclaim B one}
and~\ref{cl: proof subclaim B two} is as follows.
\begin{enumerate}
\item
We prove the base case $i=1$ of Claim~\ref{cl: proof subclaim B one}.
\item
We prove the base case $i=1$ of Claim~\ref{cl: proof subclaim B two}.
\item
For the induction hypothesis, 
we assume that {\em both} Claims~\ref{cl: proof subclaim B one}
and~\ref{cl: proof subclaim B two} hold for the case $i$.
\item
For the induction step,
we prove Claim~\ref{cl: proof subclaim B one}
for the case $i+1$.

This step uses the hypothesis that both
Claims~\ref{cl: proof subclaim B one} and~\ref{cl: proof subclaim B two} hold
for case $i$. 
\item
For the other induction step, 
we prove Claim~\ref{cl: proof subclaim B two}
for the case $i+1$.

As in Step~4, this step uses the hypothesis that both
Claims~\ref{cl: proof subclaim B one} and~\ref{cl: proof subclaim B two} hold
for case $i$. 
\end{enumerate}

\paragraph*{\bf Proof of the base case $i=1$ for Claim~\ref{cl: proof subclaim B one}}

Let $l$ be an integer such that for each $j \in \{2,\ldots,k\}$,
either $\theta(j) \leq K(l)$ or $\theta(j)\geq L(l+2)$,
where the number $2$ comes from $n_1 = 2$.

The symbols in $C_{l+1}b_{l}b_{l+1}D_{l+1}$
are different from all the symbols seen by pebbles~$2,\ldots,k$.
We are going to show that when reading $C_{l+1}b_{l}b_{l+1}D_{l+1}$,
pebble~1 enters into a loop of states.
See the illustration below.

\noindent
\begin{picture}(350,110)(-200,-60)

{\footnotesize

\put(-153,18){\line(0,-1){6}}
\put(-153,15){\vector(-1,0){0}}
\put(-50,20){of length $L(l+2)$}
\put(117,15){\vector(1,0){0}}
\put(117,18){\line(0,-1){6}}
\put(-153,15){\line(1,0){270}}

\put(-200,0){$w(n_k,m)=$}
\put(-152,0){$a_0a_1 \ \cdots\cdots\cdots \ a_{l-1}a_{l} 
\ \cdots\cdots \ 
a_{l}a_{l+1} \ C_{l+1} \ b_{l}b_{l+1} \ D_{l+1} \
a_{l+1}a_{l+2} \ \cdots \cdots \  a_{l+2}a_{l+3}
\ \cdots$}

\put(-153,-8){\line(0,-1){6}}

\put(-153,-11){\vector(-1,0){0}}
\put(-135,-23){of length $K(l)$}
\put(-67,-11){\vector(1,0){0}}

\put(-67,-8){\line(0,-1){6}}
\put(-153,-11){\line(1,0){86}}

\put(-52,-4){$\underbrace{\hspace{2.4 cm}}_{\textrm{With pebble~1 reading 
$C_{l+1} b_{l}b_{l+1} D_{l+1}$}}$}
\put(-40,-30){the states of $\A$ becomes periodic}

}
\end{picture}

On reading the segment $C_{l+1}b_{l}b_{l+1}D_{l+1}$,
the transitions used are of the form $(1,\emptyset,\emptyset,s)\to(s',\ttRight)$.
Due to the determinism of the automaton $\A$,
there exist integers $\nu_0$ and $\nu$ such that
$\nu_0,\nu \leq |Q|$ and
for each $h$ where $\nu_0 \leq h \leq K(l+n_{i-1}+2)-\nu$,
we have $p_{h}=p_{h+\nu}$.
In particular, since $\beta_2 = |Q|! \times \beta_1!$,
we have $\nu$ divides $\beta_2$.
Furthermore, $\beta_2$ also divides $m=\beta_{k+1}$,
thus, $\nu$ divides $m$, therefore,
$p_{K(l+n_{i-1}+2)-2-2m} = p_{K(l+n_{i-1}+2)-2}$.

\paragraph*{\bf Proof of the base case $i=1$ for Claim~\ref{cl: proof subclaim B two}}

Suppose $[1,q,\theta_1]$ and $[1,q,\theta_2]$
are configurations of $\A$ on $w(n_k,m)$
and $l_1,l_2,l_3,l_4,\pi$ are integers
such that the conditions $(1)$, $(2)$, $(3.a)$-$(3.h)$ above hold.
Moreover, suppose also that
$$
[1,q,\theta_1]\vdash [1,p,Succ_1(\theta_1)]
\quad \mbox{and} \quad
[1,q,\theta_2]\vdash [1,r,Succ_1(\theta_2)].
$$
We are going to show that $p=r$.

By conditions $(3.e)$ and $(3.f)$,
there can only be three cases: $\theta_1(1)\leq K(l_1)$,
$K(l_2)+1 \leq \theta_1(1) \leq L(l_3)$, and
$\theta_1(1) \geq L(l_4)+1$.

\begin{description}
\item[Case 1]
$\theta_1(1)\leq K(l_1)$.

By condition $(3.a)$,
we have $\theta_1(1) = \theta_2(1)$.
By conditions~$(1)$, $(2)$, $(3.a)$ and $(3.b)$,
for any $j \in \{2,\ldots,k\}$,
we have 
\begin{equation}
\label{eq: same P}
\theta_1(j)=\theta_1(1) \quad \mbox{if and only if} \quad \theta_2(j)=\theta_2(1).
\end{equation}
By condition~$(3.c)$, $l_2-l_1 \geq 1$.
Moreover, no symbol in $C_{l_2} b_{l_{2}-1}b_{l_{2}} D_{l_2}\cdots a_{n_k-1}a_{n_k}$
appears in $a_0a_1 \cdots a_{l_1-1}a_{l_1}$,
and by conditions~$(3.e)$ and~$(3.f)$,
no pebbles are placed on $C_{l_1}\cdots a_{l_2-1}a_{l_2}$.
Therefore,
for any $j \in \{2,\ldots,k\}$,
\begin{equation}
\label{eq: same V}
\begin{array}{c}
\textrm{pebbles~$j$ and $1$ read the same symbol in the configuration $[1,q,\theta_1]$}
\\
\mbox{if and only if}
\\
\textrm{pebbles~$j$ and $1$ read the same symbol in the configuration $[1,q,\theta_2]$}
\end{array}
\end{equation}
Thus, by Equalities~\ref{eq: same P} and~\ref{eq: same V},
the same transition applies to both $[1,q,\theta_1]$ and $[1,q,\theta_2]$.
Since $\A$ is deterministic, we have $p=r$.

\item[Case 2]
$K(l_2)+1 \leq \theta_1(1) \leq L(l_3)$.

That is, $\theta_2(1) = \theta_1(1) + \pi\beta_{i-1}!$, where $1\leq \pi\beta_{i-1}! < m$.
By the same conditions $(3.g)$ and $(3.h)$,
for any $j \in \{2,\ldots,k\}$,
\begin{equation}
\label{eq: same P case 2}
\theta_1(j)=\theta_1(1) \quad \mbox{if and only if} \quad \theta_2(j)=\theta_2(1).
\end{equation}
By condition~$(3.c)$, $l_2-l_1 \geq 1$.
Moreover, any symbol in $C_{l_2} b_{l_{2}-1}b_{l_{2}} D_{l_2}\cdots a_{n_k-1}a_{n_k}$
does not appear in $a_0a_1 \cdots a_{l_1-1}a_{l_1}$.
Therefore,
for any $j \in \{2,\ldots,k\}$,
if pebbles~$j$ and $1$ read the same symbol in the configuration $[1,q,\theta_1]$,
then $K(l_2)+1 \leq \theta_1(j) \leq L(l_3)$;
and similarly, 
if pebbles~$j$ and $1$ read the same symbol in the configuration $[1,q,\theta_2]$,
then $K(l_2)+1 \leq \theta_2(j) \leq L(l_3)$.
By conditions~$(3.g)$ and~$(3.h)$, $\theta_2(j) = \theta_1(j) + \pi\beta_{i-1}!$.
Due to the definition of $w(n_k,m)$,
we have
\begin{equation}
\label{eq: same V case 2}
\begin{array}{c}
\textrm{pebbles~$j$ and $1$ read the same symbol in the configuration $[1,q,\theta_1]$}
\\
\mbox{if and only if}
\\
\textrm{pebbles~$j$ and $1$ read the same symbol in the configuration $[1,q,\theta_2]$}
\end{array}
\end{equation}
Thus, by Equalities~\ref{eq: same P case 2} and~\ref{eq: same V case 2},
the same transition applies to both $[1,q,\theta_1]$ and $[1,q,\theta_2]$.
Since $\A$ is deterministic, we have $p=r$.

\item[Case 3]
$\theta_1(1) \geq L(l_4)+1$.

The proof is similar to the one for Case~1 above, thus, omitted.
\end{description}
This completes the proof of the base case $i=1$ 
for Claim~\ref{cl: proof subclaim B two}.

\paragraph*{\bf The induction hypothesis} 
Both Claims~\ref{cl: proof subclaim B one} and~\ref{cl: proof subclaim B two} hold
for case $i$.

\paragraph*{\bf The induction step for Claim~\ref{cl: proof subclaim B one}}
We are going to show that Claim~\ref{cl: proof subclaim B one} holds for the case $i+1$.

Suppose we have the following run:
$$
[i+1,p_0,\theta_0] \ \vdash^{\ast}_{\sA,w(n_k,m)} 
[i+1,p_1,\theta_1] \ \vdash^{\ast}_{\sA,w(n_k,m)}
\ \cdots\cdots   \ \vdash^{\ast}_{\sA,w(n_k,m)} \ 
[i+1,p_{N+1},\theta_{N+1}]
$$
Let $l$ be the integer as stated in Claim~\ref{cl: proof subclaim B one}.
Since $m > |Q|\beta_i!$,
there exists a pair $(\eta,\eta')$ of indexes such that
\begin{itemize}
\item
$K(l+n_{i}+1)+1 \leq \eta < \eta' \leq K(l+n_{i}+2) -2$;
\item
$\eta'-\eta = \pi \beta_i!$, where $1\leq \pi \leq |Q|$;
\item
$p_{\eta}=p_{\eta'}$.
\end{itemize}
We pick such pair $(\eta,\eta')$ in which $\eta$ is the smallest.
We claim that $\nu_0=\eta$ and $\nu = \eta'-\eta$
are the desired two integers in Claim~\ref{cl: proof subclaim B one}.

We are going to show that
for each $h \in \{\nu_0,\ldots,K(l+n_{i}+2)-2-\nu\}$,
\begin{equation}
\label{eq: inductive proof subclaim B one}
\textrm{if $p_{h} = p_{h+\nu}$, then $p_{h+1} = p_{h+\nu+1}$}.
\end{equation}
Since by definition of $\nu_0$ and $\nu$, we already have
$p_{\nu_0} = p_{\nu_0+\nu}$, this immediately implies
that for each $h \in \{\nu_0,\ldots,K(l+n_{i}+2)-2-\nu\}$,
$p_{h}=p_{h+\nu}$.

To prove Equality~\ref{eq: inductive proof subclaim B one},
suppose $p_{h} = p_{h+\nu}$.
Consider the following run:
\begin{itemize}
\item
$[i+1,p_{h},\theta_h] \vdash [i,s_0,\theta_h\cup\{(i,0)\}]$;
\item
$[i,s_0,\theta_h\cup\{(i,0)\}] \vdash^{\ast}
\cdots \vdash^{\ast}
[i,s_{N+1},\theta_h\cup\{(i,N+1)\}]$
\item
$[i,s_{N+1},\theta_h\cup\{(i,N+1)\}] \vdash [i+1,s',\theta_h]
\vdash [i+1, p_{h+1}, \theta_{h+1}]$.
\end{itemize}
and the following run:
\begin{itemize}
\item
$[i+1,p_{h+\nu},\theta_{h+\nu}] \vdash [i,t_0,\theta_{h+\nu}\cup\{(i,0)\}]$;
\item
$[i,t_0,\theta_{h+\nu}\cup\{(i,0)\}] \vdash^{\ast}
\cdots \vdash^{\ast}
[i,t_{N+1},\theta_{h+\nu}\cup\{(i,N+1)\}]$
\item
$[i,t_{N+1},\theta_{h+\nu}\cup\{(i,N+1)\}] \vdash [i+1,t',\theta_h]
\vdash [i+1, p_{h+\nu+1}, \theta_{h+\nu+1}]$.
\end{itemize}
Since $p_h = p_{h+\nu}$ and $\A$ is deterministic,
we have $s_0=t_0$.
Our aim is to prove that $s_{N+1} = t_{N+1}$.
To this end, there are a few steps.
\begin{description}
\item[Step 1 (Application of the hypothesis that 
Claim~\ref{cl: proof subclaim B two} holds for the case $i$)]

For each $j \in \{0,\ldots,K(l+n_{i-1}+2)\}$,
we claim that $s_j = t_j$.

To apply the induction hypothesis that Claim~\ref{cl: proof subclaim B two}
for the case $i$,
we take the integers 
\begin{eqnarray*}
l_1 & = & l + n_{i-1} + 2
\\
l_2 & = & l + n_i +1
\\
l_3 & = & l_2
\\
l_4 & = & l + n_{i+1}
\end{eqnarray*}
Recall that $l$ is the integer such that
every pebble, except pebbles~$i$ and $(i+1)$,
are located either $\leq K(l)$, or $\geq L(l+n_{i+1})$.
Recall also that $\nu = \pi \beta_i !$.

It is straightforward to show that
$l_2-l_1 \geq n_{i-1}+1$ and $l_4 - l_3 \geq n_{i-1}+1$,
and all the conditions $(1)$, $(2)$ and $(3.a)$--$(3.h)$ hold.
Since $s_0 = t_0$, 
applying the hypothesis for each $j \in \{0,\ldots,K(l+n_{i-1}+2)\}$ 
-- that Claim~\ref{cl: proof subclaim B two} hold for the case $i$ --
we have $s_j = t_j$.

\item[Step 2 (Application of the hypothesis that
Claim~\ref{cl: proof subclaim B one} holds for the case $i$)]

For each $j \in \{K(l+n_{i-1}+1)+1,\ldots,K(l+n_{i-1}+2)-2\}$,
in the configuration $[i,s_j,\theta_h \cup \{(i,j)\}]$
the integer $l$ satisfies the condition that
each pebbles~$i+1,\ldots,k$ are located either $\leq K(l)$, or $L(l+n_i)$.

Applying the induction hypothesis that Claim~\ref{cl: proof subclaim B one} holds
for the case $i$,
there exist two integers $\nu_0'$ and $\nu'$ such that
\begin{itemize}
\item
$K(l+n_{i-1}+1)+1 \leq \nu_0' \leq K(l+n_{i-1}+1)+\beta_i$;
\item
$1 \leq \nu' \leq \beta_i$;
\item
$s_j = s_{j+\nu'}$, for each $j \in \{K(l+n_{i-1}+1)+1,\ldots,K(l+n_{i-1}+2)-\nu'-2\}$.
\end{itemize}
In particular, $\nu'$ divides $\beta_{i+1}$, by definition of $\beta_{i+1}$,
thus, 
$s_j = s_{j+\nu}$, for each $j \in \{K(l+n_{i-1}+1)+1,\ldots,K(l+n_{i-1}+2)-\nu-2\}$.

Similarly, we can show that
$t_j = t_{j+\nu}$, for each $j \in \{K(l+n_{i-1}+1)+1,\ldots,K(l+n_{i-1}+2)-\nu-2\}$.

\item[Step 3 (Application of the hypothesis that 
Claim~\ref{cl: proof subclaim B two} holds for the case $i$)]

For each $j \in \{K(l+n_{i-1}+1)+\nu_0,\ldots,L(n_i+2+n_{i-1}+1)\}$,
we claim that $s_j = t_{j+\nu}$.

To apply the induction hypothesis that 
Claim~\ref{cl: proof subclaim B two}
for the case $i$,
we take the following integers.
\begin{eqnarray*}
l_1 & = & l 
\\
l_2 & = & l + n_{i-1} +1
\\
l_3 & = & l + n_i + 2 + n_{i-1} + 1
\\
l_4 & = & l + n_{i+1}
\end{eqnarray*}
It is straightforward to show that
$l_2-l_1 \geq n_{i-1}+1$ and $l_4 - l_3 \geq n_{i-1}+1$,
and all the conditions $(1)$, $(2)$ and $(3.a)$--$(3.h)$ hold.

From Steps~1 and~2, we already have 
{\large
\begin{eqnarray*}
s_{K(l+n_{i-1}+1)+\nu_0} & = & t_{K(l+n_{i-1}+1)+\nu_0}
\\
s_{K(l+n_{i-1}+1)+\nu_0+\nu} & = & t_{K(l+n_{i-1}+1)+\nu_0+\nu}
\\
s_{K(l+n_{i-1}+1)+\nu_0} & = & s_{K(l+n_{i-1}+1)+\nu_0+\nu}
\end{eqnarray*}}
Applying the hypothesis for each $j \in \{K(l+n_{i-1}+1)+\nu_0,\ldots,L(n_i+2+n_{i-1}+1)-\nu\}$ 
-- that Claim~\ref{cl: proof subclaim B two} hold for the case $i$ --
on the configurations $[i,s_j,\theta_h \cup \{(i,j)\}]$ and
$[i,t_{j+\nu},\theta_h \cup \{(i,j+\nu)\}]$,
we have $s_j = t_{j+\nu}$.

\item[Step 4 (Application of the hypothesis that
Claim~\ref{cl: proof subclaim B one} holds for the case $i$)]

For each $j \in \{K(l+n_i + 2+ n_{i-1}+1)+1,\ldots,K(l+n_i + 2+ n_{i-1}+2)-2\}$,
in the configuration $[i,s_j,\theta_h \cup \{(i,j)\}]$
the integer $l+n_i+2$ satisfies the condition that
each pebbles~$i+1,\ldots,k$ are located either 
$\leq K(l+n_i+2)$, or $\geq L(l+n_i+2+n_i) = L(l+n_{i+1})$.

Applying the induction hypothesis that Claim~\ref{cl: proof subclaim B one} holds
for the case $i$,
there exist two integers $\nu_0''$ and $\nu''$ such that
\begin{itemize}
\item
$K(l+n_i + 2+ n_{i-1}+1)+1 \leq \nu_0'' \leq K(l+n_i + 2+ n_{i-1}+1)+\beta_i$;
\item
$1 \leq \nu'' \leq \beta_i$;
\item
$s_j = s_{j+\nu''}$, for each 
$j \in \{K(l+n_i + 2+ n_{i-1}+1)+1,\ldots,K(l+n_i + 2+ n_{i-1}+1)-\nu''-2\}$.
\end{itemize}
In particular, $\nu''$ divides $\beta_{i+1}$, and by definition of $\beta_{i+1}$,
thus, 
$s_j = s_{j+\nu}$, for each $j \in \{K(l+n_i + 2+ n_{i-1}+1)+1,\ldots,K(l+n_i + 2+ n_{i-1}+2)-\nu-2\}$.

Similarly, we can show that
$t_j = t_{j+\nu}$, for each $j \in \{K(l+n_i + 2+ n_{i-1}+1)+1,\ldots,K(l+n_i + 2+ n_{i-1}+2)-\nu-2\}$.
In particular, we have 
{\large
$$
s_{K(l+n_i + 2+ n_{i-1}+2)-2} = t_{K(l+n_i + 2+ n_{i-1}+2)-2}.
$$
}
By definition of $L(\cdot)$ and $K(\cdot)$,
this is equivalent to stating that
{\large
$$
s_{L(l+n_i + 2+ n_{i-1}+1)} = t_{L(l+n_i + 2+ n_{i-1}+1)}.
$$
}

\item[Step 5 (Application of the hypothesis that 
Claim~\ref{cl: proof subclaim B two} holds for the case $i$)]

For each $j \in \{L(l+n_{i-1}+1),\ldots,N+1\}$,
we claim that $s_j = t_{j}$.

To apply the induction hypothesis that 
Claim~\ref{cl: proof subclaim B two}
for the case $i$,
we take the integers 
\begin{eqnarray*}
l_1 & = & l 
\\
l_2 & = & l + n_{i-1} +1
\\
l_3 & = & l_2 
\\
l_4 & = & l_3 + n_{i-1} +1
\end{eqnarray*}
It is straightforward to show that
$l_2-l_1 \geq n_{i-1}+1$ and $l_4 - l_3 \geq n_{i-1}+1$,
and all the conditions $(1)$, $(2)$ and $(3.a)$--$(3.h)$ hold.

By Step~4, we already have
$s_{L(l+n_i + 2+ n_{i-1}+1)} = t_{L(l+n_i + 2+ n_{i-1}+1)}$.
Applying the hypothesis for each $j \in \{L(l+n_{i-1}+1),\ldots,N+1\}$
-- that Claim~\ref{cl: proof subclaim B two} hold for the case $i$ --
on the configurations $[i,s_j,\theta_h \cup \{(i,j)\}]$ and
$[i,t_{j},\theta_{h} \cup \{(i,j)\}]$,
we have $s_j = t_{j}$.
\end{description}
From here, as $s_{N+1}=t_{N+1}$ and $\A$ is deterministic,
we have $s'=t'$.
And again, by the deterministism of $\A$,
this implies $p_{h+1} = p_{h+1+\nu}$.
This completes the induction step for Claim~\ref{cl: proof subclaim B one}.

\paragraph*{\bf The induction step for Claim~\ref{cl: proof subclaim B two}}
We are going to show that Claim~\ref{cl: proof subclaim B two} holds for the case $i+1$.

Suppose $[i+1,q,\theta_1]$ and $[i+1,q,\theta_2]$
are configurations of $\A$ on $w(n_k,m)$
such that the conditions $(1)$, $(2)$, $(3.a)$-$(3.g)$ above hold.

Consider the following run:
\begin{itemize}
\item
$[i+1,q,\theta_1] \vdash [i,s_0,\theta_h\cup\{(i,0)\}]$;
\item
$[i,s_0,\theta_h\cup\{(i,0)\}] \vdash^{\ast}
\cdots \vdash^{\ast}
[i,s_{N+1},\theta_1\cup\{(i,N+1)\}]$
\item
$[i,s_{N+1},\theta_1\cup\{(i,N+1)\}] \vdash [i+1,s',\theta_1]
\vdash [i+1, p, Succ_{i+1}(\theta_1)]$.
\end{itemize}
and the following run:
\begin{itemize}
\item
$[i+1,q,\theta_2] \vdash [i,t_0,\theta_2\cup\{(i,0)\}]$;
\item
$[i,t_0,\theta_2\cup\{(i,0)\}] \vdash^{\ast}
\cdots \vdash^{\ast}
[i,t_{N+1},\theta_2\cup\{(i,N+1)\}]$
\item
$[i,t_{N+1},\theta_2\cup\{(i,N+1)\}] \vdash [i+1,t',\theta_2]
\vdash [i+1, r, Succ_{i+1}(\theta_2)]$.
\end{itemize}
We are going to show that $p=r$.
It can be proved in a similar manner
as in the proof of the induction step of Claim~\ref{cl: proof subclaim B one},
thus, omitted.

Briefly, the proof is divided into the same Steps~1--5 above.
The reasoning on each step still applies in this induction step,
and at the end we obtain $s_{N+1}=t_{N+1}$,
thus, $s'=t'$ and $p=r$.

\section{Weak PA}
\label{s: weak PA}

There is an analogue of our results from the previous section to another, 
but weaker, version of pebble automata. 
In the model defined in Section~\ref{s: model}, 
the new pebble is placed in the beginning of the input word. 
This model is called {\em strong} PA in~\cite{NevenSV04}.
An alternative would be to place the new pebble
at the position of the most recent one. The model defined this way
is usually referred as {\em weak} PA. Formally, it is defined by
setting $\theta'(i-1)=\theta(i)$ (and keeping
$\theta'(i)=\theta(i)$) in the case of $\ttAct = \ttPlace$ in the
definition of the transition relation in Definition~\ref{d: pa}.

We give the formal definition below.
\begin{definition}
\label{d: weak pa}
A {\em two-way alternating} weak $k$-{\em pebble automaton}, 
(in short weak $k$-PA) is a system $\A = \langle Q, q_0, F, \mu, U\rangle$ whose
components are defined as follows.
\begin{enumerate}
\item 
$Q$, $q_0 \in Q$ and $F\subseteq Q$ are a finite set of 
{\em states}, the {\em initial state}, and the set of {\em final
states}, respectively; 
\item 
$U \subseteq Q - F$ is the set of
{\em universal} states; and 
\item 
$\mu$ is a finite set of
transitions of the form $\alpha \rightarrow \beta$ such that
\begin{itemize}
\item 
$\alpha$ is of the form $(i,P,V,q)$,
where $i \in \{1,\ldots,k\}$, $P,V \subseteq\{i+1,\ldots,k\}$,
$q \in Q$ 
and 
\item $\beta$ is of the form $(q,\ttAct)$,
where $q \in Q$ and
\[
\ttAct\in \{ \ttRight,\ttPlace,\ttLift \}.
\]
\end{itemize}
\end{enumerate}
\end{definition}

The definitions of pebble assignment, configurations, initial and final configurations
as well as application of a transition on configurations are
the same as defined in the case of strong PA in Subsection~\ref{ss: pa}.

We define the transition relation $\vdash_{\sA}$ on $\triangleleft w \triangleright$ as follows:
$[i,q,\theta]\vdash_{\sA,w} [i^\prime,q^\prime,\theta^\prime]$, if
there is a transition $\alpha \rightarrow (p,\ttAct) \in \mu$ that
applies to $[i,q,\theta]$ such that $q^\prime = p$, 
for all $j > i$, $\theta^\prime(j)=\theta(j)$, and
\begin{itemize}
\item 
if $\ttAct = \ttRight$,
then $i^\prime=i$ and $\theta^\prime(i)=\theta(i)+1$, 
\item 
if $\ttAct = \ttLift$, then $i^\prime=i+1$, 
\item 
if $\ttAct = \ttPlace$, then $i^\prime=i-1$, $\theta^\prime(i-1) = \theta(i)$ and
$\theta^\prime(i)=\theta(i)$.
\end{itemize}
Note the difference on the definition of $\theta'$
for the case of $\ttAct = \ttPlace$ from the one 
in the case of strong PA in Subsection~\ref{ss: pa}.

\begin{theorem}~\cite[Theorem~3]{Tan10}
\label{t: equivalence weak PA}
For each $k \geq 1$,
one-way alternating, nondeterministic and deterministic weak $k$-PA have
the same recognition power.
\end{theorem}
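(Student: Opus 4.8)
The plan is to prove the two nontrivial inclusions, namely that every one-way alternating weak $k$-PA can be simulated by a one-way deterministic weak $k$-PA; the reverse inclusions (deterministic $\subseteq$ nondeterministic $\subseteq$ alternating) are immediate from the definitions. I would argue by induction on the pebble index, from the innermost pebble~$1$ up to the head pebble~$k$, exploiting the same observation that drives Theorem~\ref{t: equivalence}: when pebble~$i$ is the head pebble, the stack discipline freezes pebbles $i+1,\ldots,k$, so every transition $(i,P,V,q)\to(p,\ttAct)$ available to pebble~$i$ depends only on the current state together with the finite datum $(P,V)\in 2^{\{i+1,\ldots,k\}}\times 2^{\{i+1,\ldots,k\}}$. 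Thus pebble~$i$ behaves like a finite automaton over a finite alphabet, except that via $\ttPlace$ it may launch a subcomputation of pebble~$i-1$ and react to whether that subcomputation accepts.

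The feature that makes the weak one-way model tractable, and the reason one-way-ness is indispensable, is \emph{locality}: since $\ttPlace$ drops pebble~$i-1$ at the current position of pebble~$i$ and the only motion allowed is $\ttRight$, the entire subcomputation of pebbles $i-1,\ldots,1$ inspects only positions at or to the right of pebble~$i$. Hence the verdict of such a subcomputation is a function of the starting position, the entry state, and the frozen positions of the higher pebbles alone. First I would apply the induction hypothesis to replace the level-$(i-1)$ machinery by a one-way deterministic weak $(i-1)$-PA, so that each launched subcomputation returns a single well-defined bit. Relative to this deterministic oracle, the behaviour of pebble~$i$ is exactly that of a one-way \emph{alternating} finite automaton running over the finite alphabet obtained by tagging each position with its $(P,V)$-pair and with the oracle's answer.

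It then remains to remove alternation at the top level, which is the classical fact that one-way alternating, nondeterministic, and deterministic finite automata all recognise the regular languages over a finite alphabet; the conversion is the standard subset/profile construction, carried out in the spirit of~\cite{LadnerLS84}. Performing this level by level, from $i=1$ to $i=k$, produces a one-way deterministic weak $k$-PA equivalent to the original alternating machine, which is the desired collapse.

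The hard part will be the careful formalisation of the oracle abstraction: one must verify that the verdict returned by the level-$(i-1)$ subcomputation genuinely depends only on data accessible to pebble~$i$ in its finite control (the $(P,V)$-tags and the frozen higher positions), and that feeding this verdict back as a letter of a finite alphabet smuggles in no two-way behaviour. This is precisely where the weak placement rule $\theta'(i-1)=\theta(i)$ together with one-way-ness is used; it is the analogue, for the weak model, of the frozen-pebble argument behind Theorem~\ref{t: equivalence}, and it is exactly the step at which an incautious version of the argument would break for strong one-way PA, for which the statement indeed fails, as noted in the remark preceding Definition~\ref{d: weak pa}.
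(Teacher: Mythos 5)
Your overall architecture --- freeze pebbles $i+1,\ldots,k$, view pebble~$i$ as a one-way finite automaton over the finite alphabet of $(P,V)$-tags, and determinize level by level in the spirit of~\cite{LadnerLS84} --- is indeed the right family of ideas; note, though, that the paper itself offers no proof of this theorem (it is imported verbatim from~\cite{Tan10}), and the closest in-paper analogue is the commentary around Theorem~\ref{t: equivalence}. The genuine gap in your sketch is the interface between levels. A level-$(i-1)$ subcomputation does not ``accept'' and does not return ``a single well-defined bit'': it terminates by a $\ttLift$ that hands control back to pebble~$i$ \emph{in a state}, and under alternation not even a state-valued function exists --- whether a $\ttPlace$-configuration leads to acceptance cannot be evaluated at the lift, because it depends on the continuation at level~$i$ and above, so the computation tree cannot be cut at the level boundary. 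Consequently the induction hypothesis, which concerns standalone weak $(i-1)$-PA as language acceptors, cannot be invoked verbatim on the embedded machinery. The correct interface is the closed-term/response calculus of~\cite{LadnerLS84}: for each entry state $q$, record the set of closed terms $q \to \bar{S}$ asserting the existence of a subcomputation tree all of whose lift-leaves carry states in $S$; acceptance at level~$i$ is then phrased via the exit states in $S$. Your plan can be repaired along exactly these lines: for each pair $(q,S)$ define the language of $(P,V)$-tagged suffixes on which $q \to \bar{S}$ is induced --- this \emph{is} the language of an alternating weak $(i-1)$-PA (declare lifting in a state of $S$ accepting), apply the induction hypothesis to it, and let the determinized pebble~$i$ gather the finitely many $(q,S)$-bits at each position by sequentially placing and running pebble~$i-1$ once per pair, which the weak one-way model permits since lifting leaves pebble~$i$ stationary and nothing forces an immediate $\ttRight$. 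Without this $(q,S)$-indexing, ``react to whether that subcomputation accepts'' is not a well-defined quantity, and the induction does not typecheck.

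A secondary but real error: your closing claim that the statement ``indeed fails'' for strong one-way PA contradicts the paper. By Theorem~\ref{t: equivalence}, one-way deterministic \emph{strong} $k$-PA already capture two-way alternating strong $k$-PA, so the alternating/nondeterministic/deterministic collapse holds in the strong one-way model as well; the remark preceding Definition~\ref{d: weak pa} asserts only that weak one-way PA are strictly weaker than strong one-way PA. So locality (the weak placement rule $\theta'(i-1)=\theta(i)$ plus one-wayness) is where \emph{your} argument conveniently avoids rescans, not what makes determinization possible at all --- the response calculus handles the strong model's rescans just as well. This mis-attribution does not affect the weak case, but it should be removed.
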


However, weak $k$-PA is weaker than strong $k$-PA.
For example, $\cR_{2^k-1}$ is not a weak $k$-PA language, 
see Lemma~\ref{l: R_k for weak PA} below.

Let
\[
\textrm{wPA}_k = \{L \mid L \textrm{ is accepted by a weak }k\textrm{-PA}\}
\]
and
\[
\textrm{wPA} = \bigcup_{k \geq 1} \textrm{wPA}_k
\]

The following lemma is the weak PA version of Proposition~\ref{p: savitch} and 
Corollary~\ref{c: R_n_k not in PA_k}.

\begin{lemma}
\label{l: R_k for weak PA}
For each $k=1,2,\ldots$, $\cR_k^+ \in
\textrm{wPA}_k$, but $\cR_{k+1}^+ \notin \textrm{wPA}_{k}$.
\end{lemma}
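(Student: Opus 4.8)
The plan is to prove the two inclusions separately. Both directions can first be reduced to \emph{one-way deterministic} weak PA by Theorem~\ref{t: equivalence weak PA}, so throughout I would assume $\A$ is one-way and deterministic. This lemma is the weak-PA counterpart of Proposition~\ref{p: savitch} (upper bound) and Corollary~\ref{c: R_n_k not in PA_k} (lower bound), and I would mirror their proofs.

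\textbf{The upper bound $\cR^+_k \in \textrm{wPA}_k$.} I would build a weak $k$-PA that follows the chain $c_0 \to c_1 \to \cdots \to c_k$ greedily, spending exactly one pebble per link. The feature of weak PA that makes $k$ pebbles suffice — and that fails for one-way strong PA — is that a newly placed pebble starts at the current position of its parent, so each inner pebble resumes the left-to-right scan exactly where the outer one stopped. Concretely, pebble~$k$ moves right to read $c_0 = s_w$ and then $c_1$. Inductively, when pebble~$j+1$ sits on an occurrence of a symbol $c$, the automaton places pebble~$j$ (which lands on that same position), moves it right until it reads $c$ again — testable through the membership $j+1 \in V$ of a transition — and then one more step to the right, so that pebble~$j$ now marks the next chain symbol. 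Since $c_i$ does not occur inside $u_i$, the first reoccurrence found is always the intended one, so the greedy scan recovers the unique decomposition. The automaton accepts iff pebble~$1$ ends exactly on the last position, which carries $c_k = t_w$, and all intermediate distinctness checks succeed; this recognizes precisely $\cR^+_k$.

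\textbf{The lower bound $\cR^+_{k+1} \notin \textrm{wPA}_k$.} Assume toward a contradiction that a one-way deterministic weak $k$-PA $\A$ satisfies $L(\A) = \cR^+_{k+1}$. Imitating Subsection~\ref{ss: proof}, I would fix a padding parameter $m$ depending on $|Q|$ and $k$ (the analogue of $\beta_{k+1}$) and construct a word $w^+\in\cR^+_{k+1}$ — a canonical chain of length $k+1$ whose consecutive links are separated by long blocks of fresh, mutually distinct parallel chains — together with a truncation $\overline{w}^+\notin\cR^+$ obtained by deleting the final padding block and the last link, exactly as $\overline{w}(n_k,m)$ arises from $w(n_k,m)$ in Remark~\ref{r: from R to R+}. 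The two words share a long prefix and differ only where the $(k+1)$-st link would be completed. The core is then a periodicity/compatibility statement in the spirit of Claims~\ref{cl: compatibility preserved}, \ref{cl: proof subclaim B one} and~\ref{cl: main claim}: while the head pebble scans a padding block whose symbols are fresh relative to the frozen pebbles, the sequence of states it enters is eventually periodic with period dividing $m$, and the nested sub-computations it launches depend only on a bounded window; consequently the configurations reached on $w^+$ and on $\overline{w}^+$ at corresponding positions stay \emph{compatible} — equal states, positions differing only by the fixed shift induced by the deleted block. Pushing this up to pebble~$k$ at the end of the input forces $\A$ into the same final state on both words, so it accepts both or rejects both, contradicting $L(\A)=\cR^+_{k+1}$ since $w^+\in\cR^+_{k+1}$ while $\overline{w}^+\notin\cR^+$. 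The reason the gap is $k$ versus $k+1$, rather than $2^k-1$ versus $2^{k+1}-2$, is precisely that a weak pebble placed at its parent can only extend the scan by one further link, so each pebble contributes linearly rather than halving a distance.

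\textbf{Main obstacle.} The essentially routine part is the construction; the hard work will be the lower bound's compatibility argument. One must prove, by a simultaneous induction on pebble depth paralleling Claims~\ref{cl: proof subclaim B one} and~\ref{cl: proof subclaim B two}, that (i) the state entered after a padding scan and the outcome of every nested sub-computation launched from within the block depend only on a bounded window and hence repeat with period dividing $m$, and (ii) this periodicity survives the nesting of pebbles under the weak $\ttPlace$ semantics $\theta'(i-1)=\theta(i)$. Verifying that the shift bookkeeping — the analogue of the functions $K(\cdot)$ and $L(\cdot)$ and the $\pm 2m$ offsets of Section~\ref{s: graph} — still goes through with these modified semantics is where all the technical effort lies; the one-way, parent-anchored placement of weak PA simplifies the recursion relative to Section~\ref{s: graph}, but the combinatorics of the periodicity claim must be redone from scratch.
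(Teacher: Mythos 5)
Your proposal is correct and follows essentially the same route as the paper: the upper bound is the paper's own one-pebble-per-link scan (stated there nondeterministically, which is equivalent by Theorem~\ref{t: equivalence weak PA}), and the lower bound uses the very same words $w(k+1,m)$ and $\overline{w}(k+1,m)$ with an accept-both-or-reject-both argument assembled from the periodicity and compatibility claims of Section~\ref{s: graph}. The only difference is one of economy: where you plan to redo the full $K(\cdot)/L(\cdot)$ offset bookkeeping under the weak semantics, the paper shortcuts the tail of the run by using weakness directly --- once pebble~$k$ has passed the first padding block, every subsequently placed pebble is anchored at or to the right of it, so only the two structurally identical suffixes matter, and the per-pebble induction hypothesis becomes simply that a weak $i$-PA cannot distinguish $w(i+1,m)$ from $\overline{w}(i+1,m)$, which is exactly the linear coupling of pebble count to chain length that you identified.
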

\begin{proof}
First, we prove that $\cR_{k}^+ \in \textrm{wPA}_k$.
The weak $k$-PA $\A$ that accepts $\cR_k^+$ works as follows.
On an input word $w = a_0b_0\cdots a_nb_n$, 
it works as follows.
\begin{enumerate}
\item
It places pebble~$k$ on the second position to read the symbol $b_0$.
\item
For each $i=k-1,\ldots,1$, it does the following.
\begin{enumerate}
\item
Place pebble~$i$, and non-deterministically moves it right
until it finds an odd position that contains the same symbol read by pebble~$i+1$.
\item
If it finds such position, it moves pebble~$i$ one step to the right.
\item
If it cannot find such position, it rejects the input word.
\end{enumerate} 
\item
If at the end, pebble~$1$ is on the last position, then 
the automaton accepts the input word.
\end{enumerate}
It is quite straightforward to
show that the automaton $\A_k$ accepts $\cR_k^+$.

Now we prove that $\cR_{k+1}^{+}\notin\textrm{wPA}_k$.
Suppose to the contrary that there is a weak $k$-PA $\A$
that accepts $\cR_{k+1}^+$.
By adding some extra states,
we can normalise the behaviour of each pebble as follows.
For each $i\in \{1,\ldots,k\}$, pebble~$i$ behaves as follows.
\begin{itemize}
\item
After pebble~$i$ moves right, then pebble~$(i-1)$ (when $i> 1$) is immediately placed
(in position 0 reading the left end-marker $\triangleleft$).
\item
If $i < k$, pebble~$i$ is lifted only when
it reaches the right-end marker $\triangleright$
of the input.
\item
Immediately after pebble~$i$ is lifted, pebble~$(i+1)$ moves right.
\end{itemize}
We also assume that in the automaton $\A$ 
only pebble~$k$ can enter a final state and
it may do so only after it reads the right-end marker $\triangleright$
of the input.

We let $m = \beta_{k+1}$, as defined in Subsection~\ref{ss: proof},
where
$\beta_0 = 1$, $\beta_1 = |Q|$, and for $i \geq 2$,
\[
\beta_{i}  =  |Q|! \times \beta_{i-1}!
\]
Also recall that the words $w(k+1,m)$ and $\overline{w}(k+1,m)$ are defined as follows.
\begin{eqnarray*}
w(k+1,m)  & = & 
a_0a_1 C_1 b_0b_1  D_1  
\cdots\cdots 
a_{k-1}a_{k}  C_{k}  b_{k-1}b_{k}  D_{k} a_{k}a_{k+1}
\\
\overline{w}(k+1,m)  & = & 
a_0a_1 C_1 b_0b_1  D_1  
\cdots\cdots 
a_{k-1}a_{k}  C_{k}  b_{k-1}b_{k},
\end{eqnarray*}
where for each $i=1,\ldots,k$,
\begin{itemize}
\item
$C_i = c_{i,1}c_{i+1,1}\ \cdots \ c_{i,m-1}c_{i+1,m-1} $;
\item
$D_i = d_{i,1}d_{i+1,1}\ \cdots \ d_{i,m-1}d_{i+1,m-1} $.
\end{itemize}

Obviously $w(k+1,m) \in \cR_{k+1}^+$, while $\overline{w}(k+1,m) \notin \cR^+$.
We establish the following claim that immediately implies $\cR_{k+1}^+ \notin \textrm{wPA}_k$.
\begin{claim}
The automaton $\A$ either accepts both $w(k+1,m)$ and $\overline{w}(k+1,m)$,
or rejects both $w(k+1,m)$ and $\overline{w}(k+1,m)$.
\end{claim}
\begin{proof}
The proof is similar to the proof of Proposition~\ref{p: accept or reject both}.
So we simply sketch it here.
Let
$$
[k,p_0,\overline{\theta}_0] \
\vdash^{\ast}_{\sA,w(n_k,m)} \ \cdots\cdots \ \vdash^{\ast}_{\sA,w(k,m)} \
[k,p_{N+1},\overline{\theta}_{N+1}]
$$
be a run of $\A$ on $w(k+1,m)$, 
where $N$ is the length of $w(k+1,m)$ and
$\theta_j(k)=j$, for each $j \in \{0,\ldots,N+1\}$.

Let
$$
[k,r_0,\overline{\theta}_0] \
\vdash^{\ast}_{\sA,\overline{w}(k+1,m)} \ \cdots\cdots \ \vdash^{\ast}_{\sA,\overline{w}(k+1,m)} \
[k,r_{M+1},\overline{\theta}_{M+1}]
$$
be a run of $\A$ on $\overline{w}(k+1,m)$, 
where $M$ is the length of $\overline{w}(k+1,m)$ and
$\theta_j(k)=j$, for each $j \in \{0,\ldots,M+1\}$.

Now $p_0=r_0$, as both are the initial state of $\A$.
We are going to show that $p_{N+1}=r_{M+1}$.
It consists of three steps.
\begin{description}
\item[Step 1]
$p_{m} = r_{m}$.
\\
This step is similar to Claim~\ref{cl: main claim} proved in Subsection~\ref{ss: proof}.
That is, suppose $[k,q,\theta]$ and $[k,q,\overline{\theta}]$
are configurations on $w(k+1,m)$ and $\overline{w}(k+1,m)$, respectively,
and $0 \leq \theta(k)=\overline{\theta}(k) \leq m$.
If
\begin{eqnarray*}
~[k,q,\theta] & \vdash_{\sA,w(k+1,m)}^{\ast} & [k,p,Succ_{k}(\theta)]
\\
~[k,q,\overline{\theta}] & \vdash_{\sA,\overline{w}(k+1,m)}^{\ast} & [k,r,Succ_{k}(\theta)]
\end{eqnarray*}
then $p=r$.\footnote{The 
only difference between this proof and the proof of Claim~\ref{cl: main claim}
is that
here the induction hypothesis is that 
for each $1\leq i \leq k-1$,
weak $i$-PA cannot differentiate
between $w(i+1,m)$ and $\overline{w}(i+1,m)$;
while in Claim~\ref{cl: main claim} the induction hypothesis is
strong $i$-PA cannot differentiate between
$w(n_i,m)$ and $\overline{w}(n_i,m)$.}

\item[Step 2]
$r_{m} = p_{m} = p_{2m}$.
\\
This step is similar to Claim~\ref{cl: proof subclaim B one} stated in Subsection~\ref{ss: proof}.
That is, there exist two integers $\nu_0$ and $\nu$
such that for every $h \in \{m+\nu_0,\ldots,2m-\nu\}$,
we have $p_{h}=p_{h+\nu}$.
\\
The main idea is that since the integer $m$ is big enough,
there exists an integer $\nu$ such that on every $\nu$ steps,
pebble~$k$ will enter into the same state.
The integer $m$ is defined so that it is divisible by every possible such $\nu$,
thus, implies $p_{m}=p_{2m}$.
That $r_{m} = p_{m}$ is deduced from the previous step.

\item[Step 3]
$p_{N+1} = r_{M+1}$.
\\
Here we make use of the fact that $\A$ is a weak PA.
From previous step we have $p_{2m}=r_m$.
On the configuration $[k,p_{2m},\theta_{2m}]$ of $\A$ on $w(k+1,m)$,
pebble~$k$ only ``sees'' 
$a_1a_2 C_2 b_1b_2  D_2  \cdots a_{k-1}a_{k}  C_{k}  b_{k-1}b_{k}  D_{k} a_{k}a_{k+1}$;
while on the configuration $[k,r_{m},\theta_{m}]$ of $\A$ on $\overline{w}(k+1,m)$,
pebble~$k$ only ``sees'' 
$b_0b_1  D_1  \cdots a_{k-1}a_{k}  C_{k}  b_{k-1}b_{k}$.
\\
Since $a_1a_2 C_2 b_1b_2  D_2  \cdots a_{k-1}a_{k}  C_{k}  b_{k-1}b_{k}  D_{k} a_{k}a_{k+1}$
and $b_0b_1  D_1  \cdots a_{k-1}a_{k}  C_{k}  b_{k-1}b_{k}$
are essentially the same,
we have $p_{2m+1}=r_{m+1}$.
Similarly, from $p_{2m+1}=r_{m+1}$, we also can conclude 
that $p_{2m+2}=r_{m+2}$ and then $p_{2m+3}=r_{m+3}$ and so on until we get $p_{N+1}=r_{M+1}$.
\end{description}
\end{proof}
This completes the proof of Lemma~\ref{l: R_k for weak PA}.
$\qed$
\end{proof}

Lemma~\ref{l: R_k for weak PA} immediately implies 
the strict hierarchy for wPA languages.

\begin{theorem}
\label{t: weak PA} 
For each $k=1,2,\ldots$, $\textrm{wPA}_k
\subsetneq \textrm{wPA}_{k+1}$.
\end{theorem}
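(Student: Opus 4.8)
The plan is to derive Theorem~\ref{t: weak PA} as an immediate corollary of Lemma~\ref{l: R_k for weak PA}, in exact parallel with the way Theorem~\ref{t: PA_k < PA_k+1} was obtained from Proposition~\ref{p: savitch} and Lemma~\ref{l: R_n_k not in PA_k}. I would split the assertion $\textrm{wPA}_k \subsetneq \textrm{wPA}_{k+1}$ into two parts: the inclusion $\textrm{wPA}_k \subseteq \textrm{wPA}_{k+1}$, and the strictness, i.e.\ that the set-difference $\textrm{wPA}_{k+1} \setminus \textrm{wPA}_k$ is nonempty.

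For the inclusion, I would observe that any weak $k$-PA can be simulated by a weak $(k+1)$-PA that never consults its extra outermost pebble. Concretely, starting from its initial configuration $[k+1,q_0,\cdot]$, the simulating automaton immediately places pebble~$k$ (without ever moving pebble~$(k+1)$ again) and from then on runs the given weak $k$-PA verbatim on pebbles~$k,\ldots,1$. Since pebble~$(k+1)$ stays fixed at the left end-marker and no transition of the simulation tests it, the accepted language is unchanged. This step is routine and mirrors the analogous fact for strong PA; by Theorem~\ref{t: equivalence weak PA} it does not matter whether we phrase it for one-way deterministic or two-way alternating weak PA.

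The strictness is where Lemma~\ref{l: R_k for weak PA} does all the work. Instantiating the first half of that lemma at the index $k+1$ gives $\cR_{k+1}^+ \in \textrm{wPA}_{k+1}$, while its second half gives directly $\cR_{k+1}^+ \notin \textrm{wPA}_k$. Hence $\cR_{k+1}^+$ is a concrete language lying in $\textrm{wPA}_{k+1} \setminus \textrm{wPA}_k$, and combined with the inclusion of the previous paragraph this yields the proper containment $\textrm{wPA}_k \subsetneq \textrm{wPA}_{k+1}$ for every $k \geq 1$.

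I do not expect any genuine obstacle here, since the entire difficulty has already been absorbed into Lemma~\ref{l: R_k for weak PA}, whose hard half $\cR_{k+1}^+ \notin \textrm{wPA}_k$ rests on the weak-PA indistinguishability argument between $w(k+1,m)$ and $\overline{w}(k+1,m)$. The only point requiring care is the index bookkeeping: one must check that applying the lemma with parameter $k+1$ really produces a $\textrm{wPA}_{k+1}$ upper bound for exactly the same language $\cR_{k+1}^+$ that the lemma excludes from $\textrm{wPA}_k$. Reading off both clauses of the lemma at the appropriate indices confirms this, so the proof is a one-line application once the trivial inclusion is noted.
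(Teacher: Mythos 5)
Your proposal is correct and is exactly the paper's argument: the paper states Theorem~\ref{t: weak PA} as an immediate consequence of Lemma~\ref{l: R_k for weak PA}, with $\cR_{k+1}^+$ serving as the separating language ($\cR_{k+1}^+ \in \textrm{wPA}_{k+1}$ by the lemma's first clause at index $k+1$, and $\cR_{k+1}^+ \notin \textrm{wPA}_k$ by its second clause). Your explicit treatment of the trivial inclusion $\textrm{wPA}_k \subseteq \textrm{wPA}_{k+1}$ via an idle outermost pebble is a detail the paper leaves implicit, but it is routine and correct.
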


\section{Linear temporal logic with one register freeze quantifier}
\label{s: ltl}

In this section we recall the definition of Linear Temporal Logic
(LTL) augmented with one register freeze quantifier~\cite{DemriL09}.
We consider only one-way temporal operators ``next'' $\ttX$ and ``until'' $\ttU$, and do not
consider their past time counterparts.
Moreover, in~\cite{DemriL09} the LTL model is defined over data words.
Since in this paper we essentially ignore the finite labels,
the LTL model presented here also ignores the finite labels.
However, the result here can be adopted in a straightforward manner
for the data word model.

Roughly, the logic LTL$_1^{\downarrow}(\ttX,\ttU)$ is
the standard LTL augmented with a register to store a symbol from the infinite alphabet.
Formally, the formulas are defined as follows.
\begin{itemize}
\item
Both $\sfTrue$ and $\sfFalse$ belong to LTL$_1^{\downarrow}(\ttX,\ttU)$.
\item
$\uparrow$ is in LTL$_1^{\downarrow}(\ttX,\ttU)$.
\item
If $\varphi,\psi$ are in LTL$_1^{\downarrow}(\ttX,\ttU)$,
then so are $\neg \varphi$, $\varphi\vee\psi$ and $\varphi\wedge\psi$.
\item
If $\varphi$ is in LTL$_1^{\downarrow}(\ttX,\ttU)$,
then so is $\ttX \varphi$.
\item
If $\varphi$ is in LTL$_1^{\downarrow}(\ttX,\ttU)$,
then so is $\downarrow \varphi$.
\item
If $\varphi,\psi$ are in LTL$_1^{\downarrow}(\ttX,\ttU)$,
then so is $\varphi\ttU\psi$.
\end{itemize}
Intuitively, the predicate $\uparrow$ is intended to mean that the
current symbol is the same as the symbol in the register,
while $\downarrow \varphi$ is intended to mean that the formula
$\varphi$ holds when the register contains the current symbol.
This will be made precise in the definition of the semantics of
LTL$^{\downarrow}_1(\ttX,\ttU)$ below.

An occurrence of $\uparrow$ within the scope of some freeze
quantification $\downarrow$ is bounded by it; otherwise, it is
free. A sentence is a formula with no free occurrence of
$\uparrow$.

Next, we define the {\em freeze quantifier rank} of a sentence
$\varphi$, denoted by $\sffqr(\varphi)$.
\begin{itemize}
\item
$\sffqr(\sfTrue)=\sffqr(\sfFalse)=\sffqr(\uparrow)=0$.
\item
$\sffqr(\ttX\varphi)=\sffqr(\neg\varphi)=\sffqr(\varphi)$,
for every $\varphi$ in LTL$_1^{\downarrow}(\ttX,\ttU)$.
\item
$\sffqr(\varphi\vee\psi)=\sffqr(\varphi\wedge\psi)=\sffqr(\varphi\ttU\psi)=\max(\sffqr(\varphi),\sffqr(\psi))$,
for every $\varphi$ and $\psi$ in LTL$_1^{\downarrow}(\ttX,\ttU)$.
\item
$\sffqr(\downarrow\varphi)=\sffqr(\varphi)+1$,
for every $\varphi$ in LTL$_1^{\downarrow}(\ttX,\ttU)$.
\end{itemize}

Finally, we define the semantics of
LTL$_1^{\downarrow}(\ttX,\ttU)$.
Let $w=a_1 \cdots a_n$ be a word.
For a position $l = 1,\ldots,n$, a symbol $a$ and a formula $\varphi$ in
LTL$_1^{\downarrow}(\ttX,\ttU)$,
$w,l \models_{a} \varphi$ means that $\varphi$ is satisfied by $w$ at position $l$ when the
content of the register is $a$.
As usual, $w,l \not\models_{a} \varphi$
means the opposite.
The satisfaction relation is defined
inductively as follows.

\begin{itemize}
\item
$w,l \models_a \sfTrue$ and $w,l \not\models_a
\sfFalse$, for all $l=1,2,3,\ldots$ and $a\in\fD$.
\item
$w,l \models_a\; \varphi \vee \psi$ if and only if $w,l \models_a\;
\varphi$  or  $w,l \models_a \;\psi$.
\item
$w,l \models_a\; \varphi \wedge \psi$ if and only if $w,l \models_a\; \varphi$ and
$w,l \models_a \;\psi$.
\item
$w,l \models_a\; \neg \varphi$  if
and only if $w,l \not\models_a\; \varphi$.
\item $w,l \models_a\;
\ttX \varphi$  if and only if $1\leq l < n$  and 
$w,l+1 \models_a\; \varphi$.
\item
$w,l \models_a\; \varphi\ttU\psi$ if
and only if there exists $l'\geq l$ such that
$w,l' \models_a\; \psi$ and $w,l'' \models_a\; \varphi$,
for all $l''=i,\ldots,l'-1$.
\item
$w,l \models_a\; \downarrow\!\varphi$  if and only if 
$w,l \models_{a_l}\; \varphi$
\item
$w,l \models_a\; \uparrow$  if and only if $a = a_l$.
\end{itemize}
For a sentence $\varphi$ in
LTL$_1^{\downarrow}(\ttX,\ttU)$, 
we write $w,1 \models \varphi$, 
if $w,1 \models_{a} \varphi$ for some $a \in \fD$.
Note that since $\varphi$ is a sentence, all occurrences of
$\uparrow$ in $\varphi$ are bounded. Thus, it makes no difference
which data value $a$ is used in the statement $w,1 \models_a
\varphi$ of the definition of $w,1 \models \varphi$.
We define the language $L(\varphi)$ by
$L(\varphi) =  \{w \mid w,1 \models \varphi\}$.

\begin{theorem}
\label{t: LTL1 is in top view weak k-pa}
For every sentence $\psi
\in \textrm{LTL}^{\downarrow}(\ttX,\ttU)$,
there exists a weak $k$-PA $\A_{\psi}$, where $k = \sffqr(\psi)+1$, such
that $L(\A_{\psi})=L(\psi)$.
\end{theorem}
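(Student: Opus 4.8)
The plan is to prove Theorem~\ref{t: LTL1 is in top view weak k-pa} by induction on the structure of the sentence $\psi$, building for each subformula a one-way alternating weak PA ``gadget'' that recognises it. Since Definition~\ref{d: weak pa} already permits universal states, such an alternating gadget \emph{is} a weak $k$-PA, and by Theorem~\ref{t: equivalence weak PA} it can afterwards be made deterministic without changing $k$. The crucial observation is that the freeze quantifier $\downarrow$ is precisely the $\ttPlace$ action of a \emph{weak} (as opposed to strong) PA: $\downarrow$ stores the symbol at the current position into the register, and in a weak PA $\ttPlace$ deposits the new pebble exactly at the position of the current head. This is the reason the statement concerns weak rather than strong PA.

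Concretely, I would maintain the following interface and invariant: a gadget for a subformula $\varphi$ is entered in a configuration $[i,\,\cdot\,,\theta]$ whose head pebble~$i$ sits on the current position $l$, and it accepts from there iff $w,l\models_a\varphi$, where the register value $a$ is the symbol read by pebble~$i+1$ --- that is, \emph{the register is always the pebble one level above the head}. The gadget uses only pebbles $i,i-1,\ldots,i-\sffqr(\varphi)$. Boolean and temporal connectives are handled by alternation and rightward head motion and so cost no pebbles, while each $\downarrow$ costs exactly one:
\begin{itemize}
\item atoms: $\sfTrue$ accepts, $\sfFalse$ rejects, and $\uparrow$ accepts iff $i+1\in V$ in the applicable transition $(i,P,V,q)$, i.e.\ the head and the register pebble read the same symbol;
\item $\varphi\vee\chi$ and $\varphi\wedge\chi$: branch existentially, resp.\ universally, into the gadgets for $\varphi$ and $\chi$, each branch receiving its own copy of the head;
\item $\ttX\varphi$: apply $\ttRight$ once (rejecting if this steps onto the right end-marker) and enter the gadget for $\varphi$;
\item $\varphi\ttU\chi$: run the standard until loop --- at each position existentially choose either to verify $\chi$ here, or to universally verify $\varphi$ here and $\ttRight$-step the head and repeat --- which only moves the head right and so needs no new pebble, matching $\sffqr(\varphi\ttU\chi)=\max(\sffqr(\varphi),\sffqr(\chi))$;
\item $\downarrow\varphi$: apply $\ttPlace$, which drops pebble~$i-1$ onto the current position and makes it the new head; the old head~$i$ now marks the frozen symbol and is the register pebble one above the new head, so the invariant is restored with the budget decremented by one.
\end{itemize}
Negation is handled by \emph{dualisation}: because the head is one-way and the word is finite, every run tree of a gadget is finite, so swapping existential and universal states and complementing the accepting condition yields a gadget for $\neg\varphi$ that accepts iff the original rejects, using the same pebbles (consistent with $\sffqr(\neg\varphi)=\sffqr(\varphi)$). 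Alternatively one may first put $\psi$ into negation normal form and add the dual ``release'' loop alongside the until loop.

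Finally I would verify the pebble count. Starting the computation with pebble $k=\sffqr(\psi)+1$ moved onto position~$1$, each nested $\downarrow$ along any branch decrements the head index by one, so after at most $\sffqr(\psi)$ nested freezes the head reaches index~$1$ and never goes below; and since $\psi$ is a sentence, before the first $\downarrow$ there is no free $\uparrow$, so no register pebble above~$k$ is ever queried. Hence $k=\sffqr(\psi)+1$ pebbles suffice. I expect the main obstacle to be the $\ttU$ case under the one-way restriction: I must argue that branching off at each position to check $\varphi$ --- which may itself freeze and thus consume $\sffqr(\varphi)$ further pebbles --- does not stack on top of the loop's own pebbles, because the verifying branch and the continuing branch are disjoint alternation branches each carrying its own configuration copy, so the simultaneous usage on any single path is $\max(\sffqr(\varphi),\sffqr(\chi))$ and not their sum. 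Some care is also needed to confirm that dualisation genuinely complements the gadget semantics, which rests on the finiteness of the run trees.
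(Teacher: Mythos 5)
Your proposal is correct and matches the paper's own proof in all essentials: the same structural induction with alternation handling the Boolean, $\ttX$ and $\ttU$ cases, the weak-PA $\ttPlace$ action implementing $\downarrow$ with the invariant that the register is the pebble directly above the head, complementation by swapping universal/existential and accepting/non-accepting states for negation, and the same pebble-count argument giving $\sffqr(\psi)+1$. Your added remarks (invoking Theorem~\ref{t: equivalence weak PA} for determinization, and noting that dualisation and the non-stacking of pebbles along disjoint alternation branches deserve verification) are sound refinements of points the paper leaves implicit, not a different route.
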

\begin{proof}
Let $\psi$ be an $\textrm{LTL}^{\downarrow}_1(\ttX,\ttU)$
sentence. We construct an alternating weak $k$-PA
$\A_{\psi}$, where $k=\sffqr(\psi)+1$ such that given a word
$w$, the automaton $\A_{\psi}$ ``checks'' whether $w,1\models
\psi$. $\A_{\psi}$ accepts $w$ if it is so. Otherwise, it rejects.

Intuitively, the computation of $w,1\models\psi$ is done
recursively as follows. The automaton $\A_{\psi}$ ``consists of''
the automata $\A_{\varphi}$ for all sub-formula of $\psi$.
\begin{itemize}
\item
If $\psi = \varphi \vee \varphi'$, then $\A_{\psi}$
nondeterministically chooses one of $\A_{\varphi}$ or
$\A_{\varphi'}$ and proceeds to run one of them.
\item
If $\psi = \varphi \wedge \varphi'$, then $\A_{\psi}$ splits its computation
(by conjunctive branching) into two and proceeds to run both
$\A_{\varphi}$ and $\A_{\varphi'}$.
\item
If $\psi = \ttX \varphi$, $\A_{\psi}$ moves to the right one step. If it reads the
right-end marker $\triangleright$, then it rejects immediately.
Otherwise, it proceeds to run $\A_{\varphi}$.
\item
If $\psi = \uparrow$, then $\A_{\psi}$ checks whether the symbol
seen by its head pebble is the same as the one seen by the second
last placed pebble. If it is not the same, then it rejects
immediately.
\item
If $\psi = \downarrow\varphi$, then $\A_{\psi}$ places a new
pebble and proceeds to run $\A_{\varphi}$.
\item
If $\psi = \varphi \ttU \varphi'$, then 
$\A_{\psi}$ repeatedly does the following.
\begin{enumerate}
\item
It splits its computation (by conjunctive branching) into two.
\item
In one branch it runs $\A_{\varphi}$.
\item
In the other it moves one step to the right
and starts on Step~1 again.
\end{enumerate}
It repeatedly performs (1)--(3) until it nondeterministically decides to run $\A_{\varphi'}$.
\item
If $\psi = \neg \varphi$, then $\A_{\psi}$ runs the complement of $\A_{\varphi}$.
The complement of $\A_{\varphi}$ can be constructed by switching the 
accepting states into non-accepting states and the non-accepting states into
accepting states, as well as, switching the universal states into 
non-universal states and the non-universal states into universal states.
\end{itemize}
Note that since $\sffqr(\varphi)=k$, on each computation path the
automaton $\A_{\psi}$ only needs to place the pebble $k$ times,
thus, $\A_{\psi}$ requires only $(k+1)$ pebbles.

Now it is a straightforward induction on the length of $\varphi$ 
to show that 
$$
w,l \models_a \varphi
\ \mbox{if and only if} \
\mbox{the configuration} \
[i,q,\theta] \
\mbox{leads to acceptance},
$$
where 
\begin{itemize}
\item
$i = \sffqr(\varphi)+1$;
\item
$q$ is the initial state of $\A_{\varphi}$;
\item
$\theta$ is a pebble assignment where
$\theta(i)=l$ and $\theta(j) \leq l$,
for each $j \in \{i+1,\ldots,k+1\}$;
\item
$a$ is the symbol seen by pebble~$(i+1)$, if $i \neq k+1$.
(If $i = k+1$, then $a$ can be an arbitrary symbol.)
\end{itemize}
From here, it immediately follows that $L(\A_{\psi})=L(\psi)$.
\end{proof}

Our next results deal with the expressive power of
LTL$^{\downarrow}_1(\ttX,\ttU)$ based on the freeze
quantifier rank. It is an analog of the classical hierarchy of
first order logic based on the ordinary quantifier rank. We start
by defining an LTL$^{\downarrow}_1(\ttX,\ttU)$ sentence for
the language $\cR^+_m$ defined in Section~\ref{s: graph}.

\begin{lemma}
\label{l: R_k for LTL}
For each $k=1,2,3,\ldots$, there exists a
sentence $\psi_{k}$ in LTL$^{\downarrow}_1(\ttX,\ttU)$ such
that $L(\psi_{k})=\cR^+_{k}$ and
$\sffqr(\psi_1)=1$; and $\sffqr(\psi_{k})=k-1$, when
$k\geq 2$.
\end{lemma}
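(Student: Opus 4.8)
The plan is to build $\psi_k$ that \emph{greedily traces} the unique path $c_0,c_1,\ldots,c_k$ encoded by a word of $\cR^+_k$, spending one freeze quantifier per intermediate vertex. The starting observation is that the decomposition $w = c_0c_1\,u_1\,c_1c_2\,u_2\cdots u_{k-1}\,c_{k-1}c_k$ witnessing $w \in \cR^+_k$ is forced: once $c_0 = w[1]$ and $c_1 = w[2]$ are fixed, the block $c_1c_2$ must begin at the \emph{first} occurrence of $c_1$ strictly after position $2$ (because $c_1 \notin u_1$), and likewise on down the word. Hence a formula that repeatedly ``jumps to the next occurrence of the frozen symbol'' reconstructs exactly this decomposition, which is what makes the construction both sound and complete.

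First I would set up the recursion. The atomic jump is $\downarrow\bigl(\ttX(\neg\uparrow \,\ttU\, (\uparrow \wedge \cdots))\bigr)$: freeze the current symbol $c_i$, step right, and scan through $u_i$ (where $\neg\uparrow$ must hold, which enforces $c_i \notin u_i$) until the next occurrence of $c_i$ is met. Concretely, for $1 \le i \le k-1$ set
\[
R_i = \downarrow\bigl(\ttX(\neg\uparrow \,\ttU\, (\uparrow \wedge S_{i+1}))\bigr),
\]
with continuation
\[
S_{i+1} = \ttX(\neg\uparrow \wedge R_{i+1}) \quad (i+1<k), \qquad S_k = \ttX(\neg\uparrow \wedge \neg\ttX\sfTrue).
\]
Here $R_i$ is evaluated at the occurrence of $c_i$ closing block $c_{i-1}c_i$; after freezing $c_i$ and locating the next $c_i$ (the left endpoint of $c_ic_{i+1}$), the subformula $S_{i+1}$ steps onto $c_{i+1}$, checks $c_{i+1}\neq c_i$ via $\neg\uparrow$ (the single register still holds $c_i$), and recurses. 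The base $S_k$ additionally checks $\neg\ttX\sfTrue$, i.e.\ that $c_k$ sits at the last position. I would then verify $L(\psi_k) = \cR^+_k$ in both directions by induction on $k$, using the forced decomposition for completeness and reading the decomposition off an accepting computation for soundness.

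The single delicate point --- the step I expect to be the real obstacle --- is the freeze-rank bookkeeping, which contains an off-by-one trap. A naive trace that also freezes $c_0$ would nest $k$ freezes ($c_0,c_1,\ldots,c_{k-1}$) and give $\sffqr = k$, one too many. The key is that $\sffqr$ measures the \emph{nesting depth} of $\downarrow$, not the number of occurrences, so a constraint may be split off into a conjunct at no cost to the rank. Accordingly I would \emph{not} freeze $c_0$ inside the trace: the path verification $\ttX\,R_1$ nests only $c_1,\ldots,c_{k-1}$ (depth $k-1$), while the remaining requirement $c_0 \neq c_1$ is discharged by the separate rank-$1$ conjunct $\downarrow\ttX\neg\uparrow$. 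Thus, for $k \ge 2$,
\[
\psi_k = (\ttX\,R_1) \wedge (\downarrow\ttX\neg\uparrow),
\]
and a routine computation of $\sffqr$ along the chain $R_{k-1},S_{k-1},R_{k-2},\ldots,R_1$ gives $\sffqr(R_1)=k-1$, whence $\sffqr(\psi_k)=\max(k-1,1)=k-1$.

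For the base case I would treat $k=1$ on its own, where no tracing is needed but $c_0\neq c_1$ must still be tested: $\psi_1 = \downarrow\ttX(\neg\uparrow\wedge\neg\ttX\sfTrue)$ accepts exactly the length-two words with distinct symbols and has $\sffqr(\psi_1)=1$. This matches the stated value and also explains why the rank cannot drop to $0$: detecting a data-inequality is precisely what forces one level of freezing.
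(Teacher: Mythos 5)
Your proposal is correct and takes essentially the same route as the paper: the paper also builds the sentence from a recursive freeze--scan--until chain, $\varphi_{k+1} = \ttX(\neg\uparrow)\wedge\ttX\big(\downarrow\ttX\big((\neg\uparrow)\,\ttU\,(\uparrow\wedge\varphi_k)\big)\big)$, and handles the rank bookkeeping exactly as you do, by not nesting a freeze for $c_0$ and instead checking $c_0\neq c_1$ in a separate rank-$1$ conjunct $\downarrow(\ttX\neg\uparrow)$, giving $\sffqr(\psi_k)=k-1$ for $k\geq 2$ and $\sffqr(\psi_1)=1$. Your $R_i$/$S_{i+1}$ formulas are just an unfolded, conjunct-merged variant of the paper's recursion, with the same forced-decomposition argument for $L(\psi_k)=\cR^+_k$.
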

\begin{proof}
First, we define a formula $\varphi_k$ such that
$\sffqr(\varphi_k)=k-1$ and for every word $w = d_1 \cdots d_n$, for every
$i=1,\ldots,n$,
\begin{eqnarray}
\label{eq: phi_k for R_k^+}
w,i \models_{d_i} \varphi_k &
\textrm{if and only if} & d_i\cdots  d_n \in \cR^+_{k}.
\end{eqnarray}
We construct $\varphi_k$ inductively as follows.
\begin{itemize}
\item
$\varphi_1 = \ttX(\neg \uparrow) \wedge \neg(\ttX(\ttX\ \sfTrue))$.
\item
For each $k=1,2,3,\ldots$,
\begin{eqnarray*}
\varphi_{k+1} & = & \ttX(\neg \uparrow) \wedge \ttX \Big(
\downarrow \ttX \Big( (\neg\uparrow) \ttU (\uparrow \wedge
\varphi_k) \Big) \Big)
\end{eqnarray*}
\end{itemize}
Note that since $\sffqr(\varphi_1)=0$, then for each
$k=1,2,\ldots$, $\sffqr(\varphi_k)=k-1$.

It is straightforward to show that $\varphi_k$ satisfies Equation~(\ref{eq: phi_k for R_k^+}).
The desired sentence $\psi_k$ is defined as follows.
\begin{itemize}
\item
$\psi_1 = \downarrow \big(\ttX(\neg \uparrow) \wedge
\neg(\ttX(\ttX\ \sfTrue))\big)$.
\item
For each $k=2,3,\ldots$,
\begin{eqnarray*}
\psi_{k} & = & \downarrow(\ttX(\neg \uparrow)) \wedge \ttX \Big(
\downarrow \ttX \Big( (\neg\uparrow) \ttU (\uparrow \wedge
\varphi_{k-1}) \Big) \Big)
\end{eqnarray*}
\end{itemize}
Obviously, $\sffqr(\psi_1)=1$. For $k \geq 2$,
$\sffqr(\varphi_{k-1})=k-2$, thus, $\sffqr(\psi_k)=k-1$.
\end{proof}

\begin{lemma}
\label{l: R_k+1 not in fqr k-1}
For each $k=1,2,\ldots$, the language $\cR^+_{k+1}$ is not expressible by a sentence in
LTL$^{\downarrow}_1(\ttX,\ttU)$ of freeze quantifier rank
$(k-1)$.
\end{lemma}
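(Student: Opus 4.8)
The plan is to obtain Lemma~\ref{l: R_k+1 not in fqr k-1} as a direct corollary of two results already in hand: the translation from LTL$^{\downarrow}_1(\ttX,\ttU)$ into weak pebble automata (Theorem~\ref{t: LTL1 is in top view weak k-pa}) and the weak-PA lower bound of Lemma~\ref{l: R_k for weak PA}. The argument is by contradiction, and the only real work is checking that the index bookkeeping lines up across the two statements.

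First I would assume, toward a contradiction, that there is a sentence $\psi \in$ LTL$^{\downarrow}_1(\ttX,\ttU)$ with $\sffqr(\psi) = k-1$ and $L(\psi) = \cR^+_{k+1}$. Applying Theorem~\ref{t: LTL1 is in top view weak k-pa} to $\psi$ yields a weak $k'$-PA $\A_{\psi}$ with $k' = \sffqr(\psi)+1 = (k-1)+1 = k$ and $L(\A_{\psi}) = L(\psi) = \cR^+_{k+1}$. Hence $\cR^+_{k+1} \in \textrm{wPA}_k$. This, however, contradicts the second assertion of Lemma~\ref{l: R_k for weak PA}, namely $\cR^+_{k+1} \notin \textrm{wPA}_k$. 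Therefore no sentence of freeze quantifier rank $k-1$ can define $\cR^+_{k+1}$.

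If the statement is read in the stronger ``freeze quantifier rank at most $k-1$'' sense (which is what makes the hierarchy tight against Lemma~\ref{l: R_k for LTL}, where $\sffqr(\psi_{k+1}) = k$), I would additionally observe that $\textrm{wPA}$ is monotone in the number of pebbles: a weak $j$-PA is in particular a weak $k$-PA that simply never places its top $k-j$ pebbles, so $\textrm{wPA}_j \subseteq \textrm{wPA}_k$ whenever $j \leq k$. Consequently any sentence with $\sffqr(\psi) \leq k-1$ still produces, via Theorem~\ref{t: LTL1 is in top view weak k-pa}, a language in $\textrm{wPA}_k$, and the same contradiction with Lemma~\ref{l: R_k for weak PA} applies.

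There is essentially no hard step here, since all the combinatorial content was already absorbed into Lemma~\ref{l: R_k for weak PA}. The one point that merits explicit verification — and the only place an off-by-one error could creep in — is the matching of the $+1$ offset in Theorem~\ref{t: LTL1 is in top view weak k-pa} (freeze quantifier rank $r$ costs $r+1$ pebbles) against the subscripts in Lemma~\ref{l: R_k for weak PA} (where $\cR^+_{k+1}$ is the witness separated from $\textrm{wPA}_k$). Once that alignment is confirmed, the lemma follows immediately.
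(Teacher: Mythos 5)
Your proposal is correct and is essentially identical to the paper's own proof: the paper likewise combines Lemma~\ref{l: R_k for weak PA} ($\cR^+_{k+1} \notin \textrm{wPA}_k$) with Theorem~\ref{t: LTL1 is in top view weak k-pa} (freeze quantifier rank $r$ yields a weak $(r+1)$-PA) to rule out any defining sentence of rank $k-1$. Your extra check of the ``rank at most $k-1$'' reading via monotonicity of $\textrm{wPA}_j \subseteq \textrm{wPA}_k$ is a harmless refinement the paper leaves implicit.
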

\begin{proof}
By Lemma~\ref{l: R_k for weak PA}, $\cR^+_{k+1}$ is not
accepted by weak $k$-PA. 
Then, by Theorem~\ref{t: LTL1 is in top view weak k-pa},
$\cR^+_{k+1}$ is not expressible by
LTL$^{\downarrow}_1(\ttX,\ttU)$ sentence of freeze
quantifier rank $(k-1)$.
\end{proof}

Combining both Lemmas~\ref{l: R_k for LTL} and~\ref{l: R_k+1 not in fqr k-1},
we obtain that for each $k=1,2,\ldots$, the language
$\cR_{k+1}$ separates the class of
LTL$_1^{\downarrow}(\ttX,\ttU)$ sentences of freeze
quantifier rank $k$ from the class of
LTL$_1^{\downarrow}(\ttX,\ttU)$ sentences of freeze
quantifier rank $(k-1)$. Formally, we state it as follows.

\begin{theorem}
\label{t: fqr k+1 > fqr k}
For each $k=1,2,\ldots$, the class of
sentences in LTL$^{\downarrow}_1(\ttX,\ttU)$ of freeze
quantifier rank $k$ is strictly more expressive than those of
freeze quantifier rank $(k-1)$.
\end{theorem}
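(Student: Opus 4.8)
The plan is to separate freeze quantifier rank $k$ from freeze quantifier rank $(k-1)$ by exhibiting a single witness language that lies in the former class but not in the latter. The natural candidate, emerging from the graph-reachability constructions developed throughout the paper, is the restricted reachability language $\cR^+_{k+1}$. Essentially all of the real work has already been carried out in the two preceding lemmas, so the proof reduces to combining them and checking that the freeze-quantifier-rank bookkeeping lines up correctly.

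For the positive side, that is expressibility at rank $k$, I would invoke Lemma~\ref{l: R_k for LTL}. Instantiating that lemma at index $k+1$ produces a sentence $\psi_{k+1}$ of LTL$_1^{\downarrow}(\ttX,\ttU)$ with $L(\psi_{k+1}) = \cR^+_{k+1}$. Since $k \geq 1$ forces $k+1 \geq 2$, the relevant clause of the lemma gives $\sffqr(\psi_{k+1}) = (k+1)-1 = k$. Hence $\cR^+_{k+1}$ is expressible by a sentence of freeze quantifier rank $k$. For the negative side, that is inexpressibility at rank $(k-1)$, I would invoke Lemma~\ref{l: R_k+1 not in fqr k-1}, which states precisely that $\cR^+_{k+1}$ cannot be defined by any LTL$_1^{\downarrow}(\ttX,\ttU)$ sentence of freeze quantifier rank $(k-1)$. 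Putting the two observations together, $\cR^+_{k+1}$ witnesses that the class of sentences of freeze quantifier rank $k$ is strictly more expressive than the class of rank $(k-1)$, which is exactly the assertion of Theorem~\ref{t: fqr k+1 > fqr k}.

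As for where the difficulty lies: at the level of this theorem there is essentially no obstacle, since it follows immediately once the two lemmas are in hand. The substantive content is pushed down into Lemma~\ref{l: R_k+1 not in fqr k-1}, whose inexpressibility half rests on the weak-PA separation of Lemma~\ref{l: R_k for weak PA} together with the translation of Theorem~\ref{t: LTL1 is in top view weak k-pa} from LTL$_1^{\downarrow}(\ttX,\ttU)$ into weak PA; that chain in turn inherits the long simultaneous induction behind Proposition~\ref{p: accept or reject both}. The one point that genuinely requires care when writing up the present statement is the off-by-one indexing: I must realize $\cR^+_{k+1}$ by the sentence $\psi_{k+1}$ rather than $\psi_k$, and verify that its rank is exactly $k$, so that the positive statement at level $k$ and the negative statement at level $(k-1)$ both concern the very same language $\cR^+_{k+1}$.
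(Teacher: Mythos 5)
Your proposal is correct and matches the paper's own argument exactly: the paper also combines Lemma~\ref{l: R_k for LTL} (instantiated at index $k+1$, yielding the sentence $\psi_{k+1}$ with $\sffqr(\psi_{k+1})=k$ defining $\cR^+_{k+1}$) with Lemma~\ref{l: R_k+1 not in fqr k-1} to exhibit $\cR^+_{k+1}$ as the separating language. Your attention to the off-by-one indexing is the same bookkeeping the paper relies on implicitly.
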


\vspace{0.5 cm}
\noindent
{\bf Acknowledgement.}
The author would like to thank the anonymous referees, both the conference and the journal versions,
for their careful reading and comments.
The author also would like to thank Michael Kaminski for his support and guidance
when this work was done.
\vspace{0.5 cm}

\bibliography{reference}
\bibliographystyle{acmtrans}



\ignore{

\newpage

\appendix

\section{Sketch of proof of Theorem~\ref{t: equivalence}}
\label{app: t: equivalence}

In this section we sketch the proof of Theorem~\ref{t: equivalence}
The proof of Theorem 2.4 is a straightforward adaption 
of the classical proof of the equivalence between 
the expressive power of alternating two-way finite state automata 
and deterministic one-way finite state automata in~\cite{LadnerLS84}.
For this reason, we first sketch the proof in~\cite{LadnerLS84} 
in Subsection~\ref{app: ss: sketch ladner}, before
we sketch our proof of Theorem~\ref{t: equivalence} 
in Subsection~\ref{app: ss: sketch proof}.

\subsection{The sketch of proof of the equivalence between 
two-way alternating and one-way deterministic
finite state automata}
\label{app: ss: sketch ladner}

A two-way alternating finite state automaton over the finite alphabet $\Sigma$
is a system $\cM = \langle Q, q_0,F, \Delta,D,N, U\rangle$, where
\begin{itemize}
\item
$Q$, $q_0$ and $F \subseteq Q$
are the set of states, initial state and the set of final states, respectively;
\item
$Q$ is partitioned into $D\cup N \cup U$,
where $N \cap F = U \cap F = \emptyset$;
\item
$\Delta$ is a set of transitions of the form
$(p,\sigma) \to (q,\ttAct)$,
where $p,q \in Q$, $\sigma \in \Sigma$ and $\ttAct \in \{\ttLeft,\ttRight,\ttStay\}$.
\end{itemize}
The states in $D$, $N$ and $U$ are called {\em deterministic},
{\em non-deterministic} and {\em universal} states, respectively.
The states in $N$ and $U$ are the states in which
the automaton can perform the disjunctive and conjunctive branching, respectively.

We assume that the automaton $\cM$ behaves as follows.
\begin{itemize}
\item
The input to $\cM$ is of the form $\triangleleft w \triangleright$,
where $w\in \Sigma^*$ and $\triangleleft,\triangleright \notin \Sigma$
are the left-end and the right-end markers of the input.
\item
The automaton $\cM$ starts the computation
with the head is reading the right-end marker $\triangleright$.
\item
The automaton $\cM$ can only enter a final state when
the head of the automaton reads the right-end marker $\triangleright$.
\item
When the automaton $\cM$ performs disjunctive and conjunctive branching
the head of the automaton is stationery.
\\
That is, if $(p,\sigma)\to(q,\ttAct)$ and $p \in N\cup U $,
then $\ttAct = \ttStay$.
\end{itemize}
Given a word
$w = \sigma_1\cdots\sigma_n \in \Sigma^\ast$,
a {\em configuration of $\cM$ on $\triangleleft w \triangleright$}
is a triple $[q,\triangleleft  w \triangleright, l]$, where
$ l \in \{0,\ldots,n+2\}$ and $q \in Q$.
The positions $0$ and $n+1$ are positions of
the end markers $\triangleleft$ and $\triangleright$, respectively.
The {\em initial} configuration is $\gamma_0 = [q_0,\triangleleft w \triangleright,n+1]$.
When $l = n+2$, it means that
the head of the automaton ``falls off'' the right side of the input word and
the automaton finishes the computation.

The set of transitions $\Delta$ induces the relation $\vdash$
among the configurations as follows.
$[q,\triangleleft w \triangleright,l] \vdash [q',\triangleleft w \triangleright, l']$
if there exists a transition $(q,\sigma_{l}) \to (q',\ttAct) \in \Delta$ and
\begin{itemize}
\item
if $l=l'$, then $\ttAct = \ttStay$;
\item
if $l=l'-1$, then $\ttAct = \ttLeft$; and
\item
if $l=l'+1$, then $\ttAct = \ttRight$.
\end{itemize}

The acceptance criteria is based on the notion of
{\em leads to acceptance} below.
For every configuration $\gamma=[q,\triangleleft w \triangleright, l]$,
\begin{itemize}
\item
if $q \in F$, then $\gamma$ leads to acceptance;
\item
if $q \in U$, then
$\gamma$ leads to acceptance if and only if
for all configurations $\gamma'$ such that
$\gamma\vdash\gamma'$, $\gamma'$ leads to acceptance;
\item
if $q \notin F\cup U$,
then $\gamma$ leads to acceptance if and only if
there is at least one configuration $\gamma'$ such that
$\gamma \vdash \gamma'$, and $\gamma'$ leads to acceptance.
\end{itemize}
The word $w$ is accepted by $\cM$ if
the initial configuration $\gamma_0$ leads to acceptance.

As usual, a computation of $\cM$ on the input $\triangleleft w \triangleright$
can be viewed as a computation tree
where each node is labelled with a configuration and
\begin{itemize}
\item
if a node $\pi$ is labelled with a configuration $[q,\triangleleft w \triangleright, l]$,
where $q \in D \cup N$,
then $\pi$ has only one child labelled with a configuration $\gamma'$,
where $\gamma \vdash \gamma'$;
\item
if a node $\pi$ is labelled with a configuration $[q,\triangleleft w \triangleright, l]$,
where $q \in U$,
then for all configuration $\gamma'$ such that $\gamma \vdash \gamma'$,
there exists a child of $\pi$ labelled with $\gamma'$.
\end{itemize}

It is shown in ~\cite{LadnerLS84} that
every two-way alternating finite state automaton
can be simulated by one-way deterministic finite state automaton.
One important notion introduced in~\cite{LadnerLS84}
is the notion of {\em closed terms}, which we will describe below.

For each state $q \in Q$,
we define a new symbol $\bar{q}$ and let $\bar{Q} = \{\bar{q} : q\in Q\}$.
If $S \subseteq Q$, then $\bar{S} = \{\bar{p} : p\in S\}$.
We define a {\em term} to be an object $q \to S$,
where $q \in Q$ and $S\subseteq Q \cup \bar{Q}$.
A term $q \to S$ is {\em closed}, if $S\subseteq \bar{Q}$.
A {\em partial response} is a set of terms,
and a {\em response} is a set of closed terms.
Note that since $Q$ is finite,
there are only finitely many closed terms and responses.

A configuration $\gamma = [q,\triangleleft w \triangleright, l]$
{\em induces} a closed term $q \to \bar{S}$, for some $S \subseteq Q$, if
there exists a computation tree of $\cM$ on $\triangleleft w \triangleright$
such that
\begin{itemize}
\item
the root is labelled with the configuration $\gamma$;
\item
all the leaf nodes are labelled with a configuration
$[p,\triangleleft w \triangleright, l+1]$, for some $p \in S$;
\item
for every $p \in S$,
there exists a leaf node labelled with a configuration
$[p,\triangleleft w \triangleright, l+1]$;
\item
every interior node is labelled with a configuration
$[s,\triangleleft w \triangleright, j]$,
for some $0 \leq j \leq l$ and $s\in Q$.
\end{itemize}
We define a response $\cR(\triangleleft w \triangleright,l)$
as the set of closed terms induced by
the configurations $[q,\triangleleft w \triangleright,l]$.
In other words,
a closed term $p \to \bar{S}  \in \cR(\triangleleft w \triangleright,l)$,
where $S \subseteq Q$,
if and only if there exists a configuration $[p,\triangleleft w \triangleright,l]$
that induces $p \to\bar{S}$.

Now the main point in the proof in~\cite{LadnerLS84} is that
given a response $\cR(\triangleleft w \triangleright,l)$,
we can construct the response $\cR(\triangleleft w \triangleright, l+1)$
without simulating the automaton $\cM$ on $\triangleleft w \triangleright$.
This is done by defining the following proof system.
For a response $\cR$ and $\sigma \in \Sigma\cup\{\triangleleft,\triangleright\}$,
we define the following proof system $\cS(\cR,\sigma)$.
\begin{enumerate}
\item
$\begin{array}{c}
\\
\hline
q \to \{q\}
\end{array}$

\item
$\begin{array}{c}
q \to B \cup \{p\}, p\to C
\\
\hline
q\to B\cup C
\end{array}$

\item
$\begin{array}{c}
q\in U \ \mbox{and} \ (q,\sigma) \to (p_1,\ttStay),\ldots,(q,\sigma) \to (p_m,\ttStay) \in \Delta
\\
\hline
q \to \{p_1,\ldots,p_m\}
\end{array}$

\item
$\begin{array}{c}
(q,\sigma) \to (p,\ttStay) \in \Delta \ \textrm{and} \ p\in N
\\
\hline
q \to \{p\}
\end{array}$

\item
$\begin{array}{c}
(q,\sigma) \to (p,\ttRight) \in \Delta
\\
\hline
q \to \{\bar{p}\}
\end{array}$

\item
$\begin{array}{c}
(q,\sigma) \to (p,\ttLeft) \in \mu_1 \textrm{ and } p \to \bar{S} \in \cR \textrm{ and } S \subseteq Q
\\
\hline
q \to S
\end{array}$

\end{enumerate}
We denote by TH$(\cR,\sigma)$ be the set of terms
``provable'' using the proof system $\cS(\cR,\sigma)$
and CTH$(\cR,\sigma)$ the set of closed terms in TH$(\cR,\sigma)$.

The formal construction of one-way deterministic automaton $\cM'$
that accepts the same language as $\cM$ is as follows.
\begin{itemize}
\item
The states of $\cM'$ are exactly the responses.
\item
The initial state of $\cM'$ is $\cR(\triangleleft w \triangleright,0)$
for some $w\in \Sigma^*$.
\item
The transitions of $\cM'$ are the function $\Delta'$,
where $\Delta'(\cR,\sigma) = (\textrm{CTH}(\cR,\sigma),\ttRight)$.
\item
The final states consists of the response $\cR$
such that $q_0 \to \bar{F} \in \Delta'(\cR,\triangleright)$.
\end{itemize}
Our proof of Theorem~\ref{t: equivalence} is essentially
a straightforward modification of this proof.

\subsection{The sketch of proof of Theorem~\ref{t: equivalence}}
\label{app: ss: sketch proof}

Let $\A = \langle \Sigma, Q, q_0,F, \mu,U\rangle$
be a two-way alternating $k$-PA.
We will write $D$ and $N$ to denote 
the set of deterministic and non-deterministic states of $\A$, respectively,
where $Q$ is partitioned into $D\cup N \cup U$,
and $N \cap F = U \cap F = \emptyset$.
The states in $N$ and $U$ are the states in which
the automaton $\A$ can perform the disjunctive and conjunctive branching, respectively.

We will show how to simulate $\A$ with a one-way deterministic $k$-PA $\A'$.
We start by normalising the behaviour of $\A$ as follows.
\begin{enumerate}
\item
On input word $\triangleleft w \triangleright$,
$\A$ starts the computation with pebble~$k$ on the right-end marker $\triangleright$.
\item
The state $Q$ is partitioned into $Q_1\cup \cdots \cup Q_k$,
where $Q_i$ is the set of states when pebble~$i$ is the head pebble.
\\
Similarly, we denote by $U_i$, $N_i$ and $D_i$ the set of
universal, nondeterministic and deterministic states, respectively
and $\mu_i$ the set of transitions when pebble~$i$ is the head pebble.
\item
Each $Q_i$ is further partitioned into
$Q_{i,\sttStay} \cup Q_{i,\sttRight} \cup Q_{i,\sttLeft} \cup Q_{i,\sttPlace} \cup Q_{i,\sttLift}$,
where
\begin{itemize}
\item
if $(i,P,V,q)\to (p,\ttStay)$, then $q \in Q_{i,\sttStay}$;
\item
if $(i,P,V,q)\to (p,\ttRight)$, then $q \in Q_{i,\sttRight}$;
\item
if $(i,P,V,q)\to (p,\ttLeft)$, then $q \in Q_{i,\sttLeft}$;
\item
if $(i,P,V,q)\to (p,\ttPlace)$, then $q \in Q_{i,\sttPlace}$; and
\item
if $(i,P,V,q)\to (p,\ttLift)$, then $q \in Q_{i,\sttLift}$.
\end{itemize}
\item
The automaton can only do the universal and existential branching
while the head pebble is stationery.
\\
That is, $(i,\sigma,P,V,q)\to (p,\ttAct)$ and $q \in U \cup N$, then $\ttAct=\ttStay$.
\item
The automaton places the new pebble on the right-end marker $\triangleright$.
\item
The automaton lifts the pebble only when it is on the right-end marker $\triangleright$.
\item
When the head pebble is reading the left-end and the right-end markers
$\triangleleft$ and $\triangleright$,
the automaton does not place new pebble.
\item
Only pebble~$k$ can enter the final states and it does so only after it reads
the right-end marker $\triangleright$.
\end{enumerate}

\subsubsection{Determinizing pebble~1}

We define a {\em proof system} for $\cS(\cR,P,V)$,
where $P,V\subseteq \{2,\ldots,k\}$ and a response $\cR$,
as follows.
\begin{enumerate}
\item
$\begin{array}{c}
\\
\hline
q \to \{q\}
\end{array}$
\item
$\begin{array}{c}
q \to B \cup \{p\}, p\to C
\\
\hline
q\to B\cup C
\end{array}$
\item
$\begin{array}{c}
q\in U \ \mbox{and} \ (1,P,V,q) \to (p_1,\ttStay),\ldots,(1,P,V,q) \to (p_m,\ttStay) \in \mu_1
\\
\hline
q \to \{p_1,\ldots,p_m\}
\end{array}$
\item
$\begin{array}{c}
(1,P,V,q) \to (p,\ttStay) \in \mu_1 \textrm{ and }p\in N
\\
\hline
q \to \{p\}
\end{array}$
\item
$\begin{array}{c}
(1,P,V,q) \to (p,\ttRight) \in \mu_1
\\
\hline
q \to \{\bar{p}\}
\end{array}$
\item
$\begin{array}{c}
(1,P,V,q) \to (p,\ttLeft) \in \mu_1 \textrm{ and } p \to \bar{S} \in \cR \textrm{ and } S \subseteq Q_1
\\
\hline
q \to S
\end{array}$
\item
$\begin{array}{c}
(1,P,V,q) \to (p,\ttLift) \textrm{ if } \sigma = \triangleright \textrm{ and } P,V = \emptyset
\\
\hline
q \to \{\bar{p}\}
\end{array}$
\end{enumerate}
We denote by TH$(\cR,P,V)$ be the set of terms
``provable'' using the proof system $\cS(\cR,P,V)$.

\subsubsection{Determinizing pebble~$i$}

}
\end{document}